\documentclass[12pt]{article}

\def\ismaindoc{}
\def\arxivbuild{}
\usepackage{preamble}

\title{Credible Auctions via MPC Gadgets: Bounding Information Leakage Under Abort}
\author{Matheus V. X. Ferreira\\
University of Virginia, USA}
\date{}
\usepackage{review-mode}

\begin{document}

\begin{titlepage}

	\maketitle

	\begin{abstract}
		
The design of credible auctions---mechanisms where a revenue-maximizing auctioneer has no incentive to deviate from the protocol---faces a fundamental cryptographic barrier when the auctioneer controls shill bidders.
While a natural approach is to use Secure Multi-Party Computation (MPC) to remove the trusted auctioneer, the impossibility of fair coin flipping of \citet{Cleve86} implies that monolithic MPC protocols grant the auctioneer a ``free option'': they can learn the auction's outcome and unilaterally abort if the revenue is unsatisfactory.
Cryptographic commitments with ex-ante penalties mitigate this abort asymmetry, but no finite penalty suffices for heavy-tailed distributions.
We circumvent this barrier by introducing the \emph{MPC Decomposition Principle}.
Rather than encrypting the entire mechanism, we use MPC strictly as an information-restriction tool.
We isolate the \emph{winner determination problem} into a minimal MPC gadget that computes and reveals the winner's identity but no payment information.
This qualitative restriction mathematically bounds the information leaked upon an abort.
By combining this gadget with sequential revelation and finite economic penalties, we design the \emph{Sequential Revelation Auction} (SRA).
We prove that bounding the information leakage strictly bounds the value of the free option, showing that a penalty of $\pen \geq \sum_{i=1}^n \rev(F_i)$ is sufficient for credibility, and tight: for equal-revenue distributions, every smaller penalty admits a profitable deviation.
Using constant-round MPC, the SRA resolves an open question of \citet{AkbarpourLi} and \citet{FerreiraWeinberg} by providing a constant-round, incentive-compatible, revenue-optimal credible auction for all product distributions with vanishing revenue tails.

	\end{abstract}

	\keywords{Multi-Party Computation, Credible Auctions, Revenue-optimal Auctions, Cryptographic Auctions}

	\newpage
	\tableofcontents

\end{titlepage}

\section{Introduction}

The study of mechanism design traditionally assumes a trusted intermediary who faithfully implements the mechanism's rules.
In practice---ranging from large-scale digital advertising to decentralized finance~\citep{Bogetoft09}---the auctioneer is a rational agent who may deviate if doing so is both undetectable and increases their revenue.
The literature on \emph{credible auctions}, initiated by \citet{AkbarpourLi}, formalizes this problem.
In the full information setting (where the auctioneer can observe all the actions from bidders), they proved a fundamental impossibility: the ascending price auction is the unique revenue-optimal, incentive-compatible auction without undetectable profitable deviations, a desideratum we call credibility.
Unfortunately, this auction requires an unbounded amount of communication in the worst case.

To circumvent this impossibility, \citet{FerreiraWeinberg} propose transforming the auction into a game of incomplete information using cryptography.
Their \emph{Deferred Revelation Auction (DRA)} requires bidders to cryptographically commit to their valuations, revealing them only after the auctioneer commits to the execution path.
Moreover, participants deposit collaterals to incentivize opening of commitments.
However, this approach hits a fundamental barrier when applied to bidder valuations drawn from heavy-tailed distributions (e.g., the equal-revenue distribution), where \citet{FerreiraWeinberg} prove that no finite ex-ante penalty can deter a malicious auctioneer.
In the DRA, this failure occurs because the auctioneer's shill bidder can wait for honest bidders to reveal their commitments, observe their valuations, and then refuse to open its own commitment.
This forced abort triggers a restart where the auctioneer now knows the bids.
For heavy-tailed distributions, the expected revenue gained from exploiting this acquired knowledge is infinite, meaning no finite collateral can deter the abort.

A natural alternative is to abandon commit-reveal schemes and use Secure Multi-Party Computation (MPC) to encrypt the entire auction.
However, this monolithic MPC approach hits a fundamental cryptographic barrier first identified by \citet{Cleve86} in the context of fair coin flipping: the \emph{abort problem}.
They show that in protocols without an honest majority (which includes an auctioneer commanding unbounded shill bids), an adversary can always learn their output and unilaterally abort the protocol before honest parties receive theirs.

In economic terms, this abort asymmetry grants the auctioneer a ``free option''.
If the auctioneer encrypts the entire optimal mechanism (allocation and payments) within an MPC protocol, they can run the protocol, observe the provisional revenue, and if it is unsatisfactory, abort the execution and restart.
Since the abort reveals that all active bidders have valuations below a certain threshold, the restart is not an independent trial but a biased one.
This allows the auctioneer to safely explore the bidders' valuations using shills, effectively extracting full surplus and violating the credibility of the mechanism.

In this work, we introduce the \emph{Sequential Revelation Auction (SRA)}, which resolves this impossibility and achieves credibility for \emph{any} product distribution over bidder valuations.
We overcome the free option problem by introducing the \emph{MPC Decomposition Principle}.
Rather than following the traditional paradigm of ``encrypting everything,'' we use MPC strictly as an \emph{information-restriction tool}.
Instead of computing the entire mechanism inside the MPC, we isolate the smallest possible piece of information necessary: the identity of the winner.
We introduce a \emph{Winner Determination Gadget (WDG)}, a minimal MPC functionality that takes committed bids and solves the Winner Determination Problem (WDP) by outputting \emph{only} the identity of the winner, revealing \emph{no payments} or \emph{bids}.

This decomposition shifts the information asymmetry.
If the auctioneer uses a shill to probe the price and subsequently aborts the gadget, they learn only \emph{who} was winning, but not \emph{how much} bidders were willing to pay.
We formally quantify the value of this leaked information.
Our main technical contribution, the Information Leakage Lemma (\Cref{lem:information_leakage}), bounds the extra revenue the auctioneer can extract from the knowledge acquired during an abort.
Therefore, by setting an ex-ante abort penalty $\pen$ to at least this threshold, we ensure that the cost of aborting always exceeds the value of the free option.

\subsection{Technical Overview}

We assume the existence of an MPC protocol with identifiable abort that implements any ideal functionality $f$.
This protocol allows a set of parties to compute any ideal functionality $f(v_1, \ldots, v_n) = (y_1, \ldots, y_n)$ where $v_i$ is the input that player $i$ contributes to evaluating $f$ and $y_i$ is the output only they observe.
In this model, any player $i$ can abort the execution at any time learning $y_i$ and preventing others from learning $y_{-i}$; however, if they do so, all other parties learn the identity of the deviating player.

In the SRA, we instantiate $f$ as the Winner Determination Problem: given a set of bids $\bm v$, output the highest bidder $i^*$.
Then, we ask the winner to reveal their bid.
The only undetectable deviation available to the auctioneer is to modify the collection of bids $\bm v$ by adding a set of shill bids $r_1 < r_2 < \dots < r_k$ into the input and, if a shill is chosen as the winner, to refuse to reveal its bid, in which case the protocol recomputes the WDP without it.
Our proof reduces the auctioneer's adaptive strategies to pure threshold strategies, which abort every shill above some $r_i$ and reveal $r_i$ and below, and then decomposes the revenue of each by the shill at which the descent stops, charging the information each abort leaks against the penalty it pays.
By bounding the information leaked during such an abort, we show that setting the penalty $\pen \ge \sum_{i=1}^n \rev(F_i)$ ensures that aborting is never profitable where $(F_1, \ldots, F_n)$ is the value distribution profile of real bidders and $\rev(F_i)$ is the optimal revenue (\Cref{lem:myerson}) an auctioneer can obtain in a single-bidder auction with a bidder with valuation drawn from $F_i$.
\subsection{Our Contribution}

We provide a constant-round, incentive-compatible, revenue-optimal auction that is credible for all product distributions over bidder valuations with vanishing revenue tails, resolving the open problem left by \citet{FerreiraWeinberg}.
Our contributions are threefold:
\begin{enumerate}
	\item \textbf{The MPC Decomposition Principle:}
	      We introduce a new design methodology for credible auctions based on \emph{decomposing} the auction computation.
	      Rather than implementing the entire mechanism (allocation and payments) inside MPC---which, as we show in \Cref{sec:strawman}, inevitably leaks enough information to violate credibility---we isolate the minimal subroutine needed: a Winner Determination Gadget that computes \emph{only} the Winner Determination Problem.
	      Payments are then determined \emph{outside} the MPC through sequential revelation.
	      This decomposition principle is the conceptual contribution: it shows that for credibility, minimizing the computed function is more important than encrypting everything.
	\item \textbf{General Credibility:}
	      We prove that for any product distribution $\bm{F}$ with vanishing revenue tails, the SRA is revenue-optimal, incentive-compatible, and credible provided the penalty $\pen$ satisfies $\pen \ge \sum_{i=1}^n \rev(F_i)$ (\Cref{thm:main}).
	      This overcomes the fundamental barrier identified by \citet{FerreiraWeinberg}, who exhibit a single non-MHR bidder---the equal-revenue distribution---for which \emph{no} finite penalty makes the Deferred Revelation Auction credible.
	      Our protocol circumvents this impossibility by using MPC to change the information structure of the game.
	\item \textbf{Tightness and Round Complexity:}
	      We show that the penalty bound $\pen \ge \sum_{i=1}^n \rev(F_i)$ is tight: for equal-revenue distributions truncated at a large enough $H$, any penalty $\pen < \sum_{i=1}^n \rev(F_i)$ admits profitable deviations (\Cref{thm:lower_bound}).
	      Additionally, the SRA runs in constant rounds on the equilibrium path (assuming standard constant-round MPC constructs).
	      This complements the ascending protocol of \citet{FerreiraEssaidi}, which achieves general credibility for $n$ \iid bidders in constant rounds \emph{in expectation} but has unbounded worst-case round complexity.
	      Our guarantee differs along two axes: the round bound on the equilibrium path is deterministic rather than in expectation, and our credibility guarantee holds for arbitrary \emph{independent, non-identical} distributions $F_1, \ldots, F_n$ rather than the \iid case.
\end{enumerate}

\paragraph*{Practical Collateral Requirements.}
The threshold $\pen \ge \sum_{i=1}^n \rev(F_i)$ is practically reasonable, not merely finite.
On the canonical stress test---the equal-revenue distribution $F(v) = 1 - 1/v$ for $v \ge 1$, where \citet{FerreiraWeinberg} show that no finite collateral makes the Deferred Revelation Auction credible, even against a single bidder---the SRA is credible at $\pen = 1$ per bidder once the distribution is truncated as in \Cref{sec:lower}, and since $v \ge 1$ on the support, no bidder ever posts collateral larger than their own valuation.
Write $\ivv_i$ for the ironed virtual value of bidder $i$, the transformation of \citet{Myerson81} under which the optimal auction allocates to the highest transformed bid (\Cref{sec:prelim}); the positive part of $\ivv_i(v_i)$ has expectation $\rev(F_i)$.
Since the item goes to at most one bidder, $\OPT(\bm F) = \e{\max_i \ivv_i(v_i)^+} \le \sum_{i=1}^n \e{\ivv_i(v_i)^+} = \sum_{i=1}^n \rev(F_i)$, so the requirement is $\OPT(\bm F) \le \pen \le n \cdot \OPT(\bm F)$: between one and $n$ times the revenue the auction generates.
It is also a \emph{liquidity} requirement rather than an expected cost, since every bidder who follows the protocol recovers $\pen$ in full.

\subsection{Related Work}

\paragraph*{Secure Multi-Party Computation.}
Secure Multi-Party Computation (MPC) was introduced by \citet{Yao82} with the ``Millionaires' Problem'' and \citet{Blum81} with ``Coin Flipping by Telephone.
''
The first large-scale commercial application of MPC was the Danish Sugar Beet Auction \citep{Bogetoft09}, where farmers traded production contracts.
In this system, robustness against bidder dropout was handled by re-running the computation or assuming the servers were honest.
This approach to \emph{safety} was acceptable because the threat model did not involve a malicious auctioneer.
Our work, on the other hand, focuses on \emph{credibility}, where the auctioneer is the primary adversary.
We show that the restart mechanism, essential for safety, becomes a vulnerability for credibility.
For our auction to have small round complexity, we rely on works showing that any functionality can be realized in a constant number of rounds.
Two rounds suffice for security \emph{with abort}~\citep{GargGentryHaleviRaykova14,MukherjeeWichs16}, relying on powerful cryptographic primitives like \emph{Multi-Key Fully-Homomorphic Encryption} (MK-FHE) and \emph{indistinguishability obfuscation} (iO).
Our protocol requires the stronger guarantee of \emph{identifiable} abort, since the penalty must be charged to a named party, and identifiability provably costs additional rounds: \citet{CiampiRaviSiniscalchiWaldner22} show that four rounds are necessary and sufficient in the plain model.
Constant-round protocols with identifiable abort are known~\citep{IOZ14,BaumOrsini20,Cohen23}, which is all our result requires.
While these primitives are currently computationally expensive, they establish the theoretical feasibility of avoiding the high round complexity of previous credible mechanisms~\citep{FerreiraEssaidi}.

\paragraph*{Credible Auctions.}
The credibility framework was established by \citet{AkbarpourLi}, who show that no revenue-optimal auction is simultaneously strategy-proof, credible, and static.
\citet{FerreiraWeinberg} introduce cryptographic assumptions to the problem, constructing two-round revenue optimal credible auctions for MHR distributions.
However, their negative result for heavy-tailed distributions left open the question of general feasibility.
Our work resolves this open problem.
\citet{FerreiraEssaidi} also achieve general credibility, for $n$ \iid bidders with a finite monopoly price, in constant rounds in expectation but with unbounded worst-case round complexity.
Our SRA, on the other hand, runs in constant rounds (assuming standard constant-round MPC constructs) on the equilibrium path, and handles independent but non-identical distributions.
Like theirs, our worst-case round complexity is unbounded: the number of rounds grows linearly with the number of bidders that abort, but aborting is a dominated strategy for all bidders (real or shills).

\citet{ChitraFerreira} extend the DRA to settings where bidders can broadcast messages, which is relevant to our MPC-based approach since MPC with identifiable abort requires broadcast channels.
\citet{GaneshZhang} extend the DRA to multi-item settings with matroid constraints but also require similar distributional assumptions to \citet{FerreiraWeinberg} and \citet{FerreiraEssaidi}.

\paragraph*{Deviations by the Party Running the Mechanism.}
Credibility is one instance of a broader question: how to measure the incentive of the party who runs a mechanism to deviate from its announced rules when that party also profits from the outcome.
\citet{FerreiraParkes23} ask it of a decentralized exchange, where the block proposer both sequences the trades and trades on its own account, and give a verifiable sequencing rule under which, for every user, either the trade executes at a price at least as good as if it were the only trade in the block, or the proposer has no risk-free profit from reordering it.
Transaction fee mechanism design asks the same question of the miner who runs a blockchain's fee auction: \citet{Roughgarden24} formalizes incentive compatibility for a myopic miner who may add fake transactions, alongside incentive compatibility for users and resistance to off-chain agreements between the two; \citet{ChungShi23} prove that no non-trivial mechanism is simultaneously incentive compatible for users and resilient to a coalition of the miner with a user, and \citet{ChungRoughgardenShi24} study which relaxations of that collusion resilience remain achievable.
\citet{ShiChungWu23} then show that running the mechanism inside an MPC among the miners relaxes the impossibility.
In all of these, a deviation is judged by the profit it earns the party running the mechanism while also considering collusion of buyers with miners, as in \Cref{def:credibility}; credibility additionally restricts attention to the deviations the bidders cannot detect.

There is a rich literature on implementing auction mechanisms using MPC \citep{NaorPinkasSumner99, Brandt06}.
Most of this work focuses on \emph{privacy} (keeping bids secret) or \emph{fairness} (preventing early bid peeking).
Our focus on \emph{credibility} is distinct: we view the auctioneer as the adversary and specifically target the strategic implications of the abort capability.
Our result highlights that for credibility, ``encrypting everything'' is not the solution; rather, credibility requires minimizing the function the MPC computes.

\subsection{Paper Organization}
\Cref{sec:prelim} defines the credibility framework.
\Cref{sec:strawman} illustrates the failure of black-box MPC.
\Cref{sec:protocol} details the SRA protocol.
\Cref{sec:example} provides a warmup by showing that, for the equal-revenue distribution, a natural class of shill strategies earns the auctioneer no more than the honest revenue.
\Cref{sec:info_leakage} bounds the information an abort leaks, and \Cref{sec:credible} uses that bound to prove credibility for every product distribution, which addresses the open problem of \citet{FerreiraWeinberg}.
\Cref{sec:lower} provides the matching lower bound.
\submission{Proofs omitted for space, together with a worked equal-revenue warmup, appear in the full version.}

\begin{standalonebib}
	\bibliographystyle{ACM-Reference-Format}
	\bibliography{mybib}
\end{standalonebib}

\section{Preliminaries}\label{sec:prelim}

We consider a single-item auction setting with a set of honest, risk-neutral bidders $N = \{1, \ldots, n\}$.
Additionally, there exists a countable set of potential shill bidders $S = \{n+1, n+2, \ldots\}$.
Honest bidders $i \in N$ have private valuations $v_i \in V_i \subseteq \mathbb{R}_{\geq 0}$ drawn independently from cumulative distribution functions $F_i$ with probability density functions $f_i$.
Shill bidders $s \in S$ have a fixed valuation $v_s = 0$.
Crucially, shill bidders share the auctioneer's utility function (revenue maximization) and do not behave as self-interested agents maximizing their own allocation utility.
We denote the valuation profile of honest agents by $\bm{v} = (v_1, \ldots, v_n) \in \bm{V} = \times_{i \in N} V_i$.
For value profile $\bm{v} = (v_i)_{i \in N \cup S}$ and subset $A \subseteq N \cup S$, we write $\bm{v}_A = (v_i)_{i \in A}$ for the restriction of $\bm{v}$ to bidders in $A$.
For any function $f$, we write $f^+ \coloneqq \max\{f, 0\}$ for the positive part.

\begin{definition}[Mechanism]
	A direct revelation mechanism $M = (x, p)$ consists of an \emph{allocation rule} $x: \bm{V} \to [0, 1]^n$, where $x_i(v)$ is the probability that bidder $i$ receives the item and feasibility requires $\sum_{i \in N} x_i(v) \le 1$ for all $v \in \bm{V}$, together with a \emph{payment rule} $p: \bm{V} \to \mathbb{R}^n$, where $p_i(v)$ is the expected payment from bidder $i$.
\end{definition}

We assume agents have quasi-linear utility functions.

\begin{definition}[Utility and DSIC]
	The utility of bidder $i$ with valuation $v_i$ who reports $v'_i$ while the others report $v_{-i}$ is $u_i(v'_i, v_{-i}; v_i) = v_i \cdot x_i(v'_i, v_{-i}) - p_i(v'_i, v_{-i})$.
	A mechanism is \emph{Dominant Strategy Incentive Compatible (DSIC)} if truthful reporting maximizes utility for every honest player $i \in N$, valuation $v_i$, and profile $v_{-i}$: $u_i(v_i, v_{-i}; v_i) \ge u_i(v'_i, v_{-i}; v_i)$ for all $v'_i \in V_i$.
\end{definition}

\begin{theorem}[\citet{Myerson81}]
	\label{thm:myerson_char}
	A direct revelation mechanism $(x, p)$ is DSIC if and only if:
	\begin{enumerate}
		\item \textbf{Allocation rule is monotone.}
		      For every bidder $i$ and declarations $v_{-i}$, $x_i(v_i, v_{-i})$ is non-decreasing in $v_i$.
		\item \textbf{Payment identity.}
		      The payment rule is determined by the allocation rule: for every bidder $i$, $v_i$, and $v_{-i}$, $p_i(v_i, v_{-i}) = v_i \cdot x_i(v_i, v_{-i}) - \int_0^{v_i} x_i(z, v_{-i}) dz$.
	\end{enumerate}
\end{theorem}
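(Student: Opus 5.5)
The plan is to prove the two directions separately. Both rest on a single-bidder reduction: fix bidder $i$ and the reports $v_{-i}$, and write $x(z) = x_i(z, v_{-i})$ and $p(z) = p_i(z, v_{-i})$. DSIC then says exactly that for all $v, v'$,
\[
v\, x(v) - p(v) \ \ge\ v\, x(v') - p(v'),
\]
which is a statement about one-variable functions.

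For necessity, I will first take the constraint for the pair $(v,v')$ and the swapped constraint for $(v',v)$, and add them. This gives $(v - v')(x(v) - x(v')) \ge 0$, which is monotonicity. Next I set $U(v) = v\,x(v) - p(v)$. Rearranging the DSIC constraints yields the sandwich
\[
(v'-v)\,x(v) \ \le\ U(v') - U(v) \ \le\ (v'-v)\,x(v') \qquad \text{for } v' > v .
\]
So $U$ is a pointwise maximum of the affine functions $v \mapsto v\,x(v') - p(v')$. Such a maximum is convex, hence absolutely continuous, and its derivative equals $x(v)$ wherever $x$ is continuous, which is everywhere except at countably many points since $x$ is monotone. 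Integrating gives $U(v) = U(0) + \int_0^v x(z)\,dz$.

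The payment identity as stated takes $U(0) = 0$, i.e.\ $p_i(0, v_{-i}) = 0$. I will adopt the paper's implicit normalization that a zero bid pays nothing, which is the standard individual-rationality and no-positive-transfer convention, and note that this is where the normalization enters. Substituting $U(v) = v\,x(v) - p(v)$ then yields $p(v) = v\,x(v) - \int_0^v x(z)\,dz$.

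For sufficiency, I will assume monotonicity and the payment identity, so that $U(v) = \int_0^v x$. Then
\[
U(v) - \bigl(v\,x(v') - p(v')\bigr) \;=\; \int_{v'}^{v} \bigl(x(z) - x(v')\bigr)\,dz .
\]
I will check that this quantity is nonnegative in both cases. If $v \ge v'$, the integrand is nonnegative because $x$ is non-decreasing. If $v < v'$, the integral runs backwards, but the integrand is nonpositive, so the quantity is again nonnegative. This is the standard area picture.

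The main obstacle is the regularity step in the necessity direction: passing from the difference inequalities to the integral formula without assuming $x$ is continuous. I plan to handle it through the convexity of $U$, which gives absolute continuity and an a.e.\ derivative equal to $x$. I will also need to handle domains $V_i$ that are not intervals starting at $0$, which I will do by treating $V_i$ as an interval containing $0$, as the integral in the statement presumes.
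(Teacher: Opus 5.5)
Your proof is correct. It is the standard argument for Myerson's characterization; the paper does not prove this statement but cites it from Myerson, so there is no separate route to compare against.

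The normalization you flag is genuinely needed. At $v_i = 0$ the identity forces $p_i(0, v_{-i}) = 0$, so adding any nonzero $h_i(v_{-i})$ to bidder $i$'s payments keeps the mechanism DSIC but violates the ``only if'' direction as literally stated. As a minor simplification, your sandwich with $x \in [0,1]$ already makes $U$ $1$-Lipschitz, which gives absolute continuity without the convexity detour.
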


\begin{definition}[Revenue]
	\label{def:revenue}
	The \emph{ex-post revenue} of a mechanism $M = (x, p)$ for valuation profile $\bm{v}$ is the sum of payments, $R^M(\bm{v}) = \sum_{i \in N} p_i(\bm{v})$; its \emph{expected revenue} is $\rev^M(\bm F) = \e[\bm{v} \sim \bm F]{R^M(\bm{v})}$, and we write $R(\cdot)$ for short when $M$ is clear from context.
	We write $\OPT(\bm F) = \max_{M \text{ DSIC}} \rev^M(\bm F)$ for the optimal expected revenue over DSIC mechanisms, and, for a \emph{single} distribution $F_i$, $\rev(F_i) = \sup_{p \ge 0} p \cdot (1 - F_i(p))$ for the optimal \emph{monopoly} revenue obtainable from bidder $i$ alone by posting a price.
	Throughout, $\rev(\cdot)$ applied to a single distribution always denotes this monopoly revenue, while revenue of a mechanism always carries the mechanism in the superscript.
\end{definition}

Throughout we assume each $F_i$ has a vanishing revenue tail, $\lim_{v \to \infty} v\,(1 - F_i(v)) = 0$; this is the hypothesis of \Cref{lem:myerson} below, and it implies $\rev(F_i) < \infty$, so the penalty $\sum_i \rev(F_i)$ is finite.

\begin{definition}[Regular Distribution]
	A differentiable cumulative distribution function $F$ with density $f$ is said to be \emph{regular} if the virtual valuation function $\phi(v) = v - \frac{1-F(v)}{f(v)}$ is monotone non-decreasing.
\end{definition}

\begin{definition}[Ironed Virtual Valuation]
	For any distribution $F$, the \emph{ironed virtual valuation} function $\bar{\phi}(v)$ is constructed by applying Myerson's ironing procedure to $\phi(v)$ to ensure monotonicity.
	Specifically, let $H(q) = \int_0^q \phi(F^{-1}(t)) dt$ for $q \in [0, 1]$.
	We define $G$ as the convex hull of $H$, i.e., the largest convex function such that $G(q) \le H(q)$ for all $q$.
	Then, $\bar{\phi}(v) = G'(F(v))$.
	If $F$ is regular, $\bar{\phi}(v) = \phi(v)$.
	We define the \emph{inverse} $\bar{\phi}^{-1}(y) := \inf\{v : \bar{\phi}(v) \geq y\}$, which is well-defined even when $\bar{\phi}$ has flat regions due to ironing.
	An \emph{ironed interval} of $F$ is a maximal interval $[a, b]$ of values on whose interior $G < H$; $\bar{\phi}$ is constant on it and differs from $\phi$ somewhere in it, and outside the ironed intervals $\bar{\phi} = \phi$.
	The \emph{ironed bid} $\ironedbid(v)$ is $v$ when $v$ lies in no ironed interval and the left endpoint $a$ of the ironed interval $[a, b]$ containing $v$ otherwise; it is non-decreasing and constant on every ironed interval.
\end{definition}

\begin{lemma}[Virtual surplus identity; {\citet{Myerson81}}]
	\label{lem:myerson}
	Let $M = (x, p)$ be a DSIC mechanism allocating a single item to agents with independent private values drawn from distributions $F_i$ satisfying $\lim_{v \to \infty} v\,(1 - F_i(v)) = 0$.
	Then the expected revenue equals the expected virtual surplus and is at most the expected ironed virtual surplus: \[ \rev^M(\bm F) = \e[\bm{v} \sim \bm F]{ \sum_{i} \phi_i(v_i) x_i(\bm{v}) } \le \e[\bm{v} \sim \bm F]{ \sum_{i} \ivv_i(v_i) x_i(\bm{v}) }, \] where $\phi_i$ and $\ivv_i$ are the virtual valuation and the ironed virtual valuation functions of bidder $i$.
	The inequality is an equality whenever, for every bidder $i$ and every $\bm v_{-i}$, the allocation $x_i(\cdot, \bm v_{-i})$ is constant on every ironed interval of $F_i$.
\end{lemma}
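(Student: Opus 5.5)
We prove the identity bidder by bidder: fix $i$ and $\bm v_{-i}$, and write $x(z) = x_i(z, \bm v_{-i})$, which is non-decreasing by \Cref{thm:myerson_char}. The payment identity gives
\[
p_i(v_i,\bm v_{-i}) = v_i x(v_i) - \int_0^{v_i} x(z)\,dz .
\]
We take the expectation over $v_i \sim F_i$ and exchange the order of integration (Fubini/Tonelli; the integrand is non-negative since $x\in[0,1]$). This yields
\[
\e[v_i]{p_i} = \int_0^\infty v f_i(v) x(v)\,dv - \int_0^\infty x(z)\,(1-F_i(z))\,dz .
\]
Equivalently, we integrate $\int_0^v x$ against $-d(1-F_i)$ by parts. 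The boundary term is $\lim_{v\to\infty}(1-F_i(v))\int_0^v x \le \lim_{v \to \infty} v(1-F_i(v)) = 0$, and this is exactly where the vanishing revenue tail is used. Combining the two integrals gives
\[
\int_0^\infty \Bigl(v - \tfrac{1-F_i(v)}{f_i(v)}\Bigr) x(v) f_i(v)\,dv = \e[v_i]{\phi_i(v_i)\,x_i(\bm v)} .
\]
Summing over $i$ and taking expectations over $\bm v_{-i}$ (independence lets us integrate $v_i$ first) gives the equality. There is a technical worry: the sum of the two integrals must be well defined, since $\int v f_i$ may be infinite. To handle it, we work on truncations $[0,T]$ and let $T\to\infty$, which is again justified by the tail condition.

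For the inequality, we pass to quantile space. Let $q = F_i(v)$ and $y(q) = x(F_i^{-1}(q))$, which is non-decreasing in $q$. Then $\e{\phi_i x} = \int_0^1 H'(q) y(q)\,dq$ and $\e{\ivv_i x} = \int_0^1 G'(q) y(q)\,dq$. Integrating by parts (Stieltjes, since $y$ is monotone) gives
\[
\int_0^1 (H'-G')\,y\,dq = \bigl[(H-G)y\bigr]_0^1 - \int_0^1 (H-G)\,dy .
\]
The boundary term vanishes because $H=G$ at $q=0$ and at $q=1$: the convex hull agrees with $H$ at the endpoints of the domain. Since $H - G \ge 0$ and $dy \ge 0$, the integral term is non-negative, so $\e{\phi_i x} \le \e{\ivv_i x}$.

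For the equality case, $H-G>0$ only on the interiors of the ironed intervals. If $x$ is constant on each ironed interval, then $dy=0$ there, so $\int (H-G)\,dy = 0$ and equality holds.

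The main obstacle is the endpoint behaviour at $q\to 1$, where values may be unbounded. We must check that $H(1)$ is finite and equals $G(1)$, and that $(H-G)y\to 0$. Here $H(q) = -F_i^{-1}(q)(1-q)$ up to normalization, so $H(q)\to 0$ as $q \to 1$ precisely by $v(1-F_i(v))\to0$. We would verify this, extending $H$ continuously to $q=1$ before taking the hull. We would also handle non-differentiable $F_i$ (atoms) either by the standard quantile formulation or by assuming densities, as the paper does.
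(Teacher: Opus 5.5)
Your proof is correct. The paper does not prove this lemma itself (it imports it from \citet{Myerson81}), so the useful comparison is with the two places where the paper runs the same machinery.

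\textbf{First half.} This is exactly the paper's truncated computation in the proof of \Cref{lem:information_leakage} and in the single-bidder intuition before it. Truncating at $T$ gives $\int_0^T p_i f_i = \int_0^T \phi_i x_i f_i + (1-F_i(T))\int_0^T x_i$. The boundary term is at most $T(1-F_i(T))$, uniformly in $\bm v_{-i}$. That uniformity is what kills the term as $T\to\infty$, and it is also what licenses exchanging the limit with the expectation over $\bm v_{-i}$. The truncation is not a technicality: the tail hypothesis allows $\e{v_i}=\infty$ (e.g.\ $1-F_i(v)=1/(v\log v)$), and then your untruncated split $\int v x f - \int x(1-F_i)$ can be $\infty-\infty$.

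\textbf{Second half.} This genuinely differs from the paper's justification of \Cref{lem:ironing_equivalence}. That argument is local: $\ivv_i$ is the conditional mean of $\phi_i$ on each complete ironed interval, so a constant allocation can be pulled out there. It yields only the equality case and says nothing about allocations that vary inside an ironed interval. Your identity $\int (H'-G')\,y\,dq = -\int (H-G)\,dy$ buys more in three ways.
\begin{enumerate}
\item It gives the inequality for every monotone allocation.
\item It pins down the equality case exactly: equality holds when $dy$ puts no mass where $H>G$.
\item Run on $[0,F_i(c)]$ instead of $[0,1]$, it leaves the boundary term $(H-G)(F_i(c))\,y(F_i(c))$. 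This vanishes precisely when $c$ is not interior to an ironed interval, so your argument also recovers the paper's Ironing Equivalence with its exact hypothesis.
\end{enumerate}

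\textbf{The cost.} The price is the endpoint bookkeeping you flagged, which checks out.
\begin{itemize}
\item For support starting at $a$, one has $H(q) = a - (1-q)F_i^{-1}(q)$, so $H(1)=a$ rather than $0$; this is harmless.
\item $H$ extends continuously to $q=1$ by the tail condition.
\item The convex envelope of a continuous function on $[0,1]$ is continuous and agrees with it at both endpoints. This is what makes $(H-G)\,y\to 0$ at $q=1$.
\item Since $H-G$ is continuous, the Stieltjes integration by parts against the monotone $y$ needs no jump corrections.
\end{itemize}
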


Myerson's result identifies the revenue-optimal auction as the mechanism that allocates the item to the bidder with the highest ironed virtual value, provided it is non-negative.
Ironing makes ties a positive-measure event rather than a null one---$\ivv_i$ is flat on each ironed interval, so every profile landing inside one produces a tie---and the Winner Determination Gadget of \Cref{sec:protocol} publishes which of the tied parties is provisionally winning.
We therefore fix the comparison and the tie-break together, as a single ranking:

\begin{definition}[Ranking]
	\label{def:ranking}
	For \emph{participants} $i \ne j$ submitting bids $b_i, b_j$, we write $\ivv_i(b_i) < \ivv_j(b_j)$ to mean that the mechanism ranks $j$ above $i$: either $\ivv_i(b_i)$ is numerically smaller than $\ivv_j(b_j)$; or the two are numerically equal and $\ironedbid_i(b_i) < \ironedbid_j(b_j)$; or both of these are equalities and $j < i$.
	We write $\le$ for the reflexive version.
	A higher ironed virtual value therefore wins, ties go to the higher ironed bid, and any remaining tie goes to the smaller index; the last clause separates any two distinct participants, so this is a strict total order and every finite set of participants has a unique maximum.

	Two conventions keep this unambiguous.
	First, the overload applies only between \emph{two participants}' virtual values; a comparison against a constant, such as $\ivv_i(v_i) \ge 0$, is always the ordinary numeric one.
	Second, the relation is determined by the pairs $(i, b_i)$ and $(j, b_j)$ rather than by the two numbers alone, so it is antisymmetric even where the numbers coincide: $\ivv_i(b_i) \le \ivv_j(b_j)$ and $\ivv_j(b_j) \le \ivv_i(b_i)$ hold together only when $i = j$.
	The bids and identities are always clear from context.

	The ranking is \emph{monotone in a participant's own bid}: for $z < z'$ we have $\ivv_i(z) \le \ivv_i(z')$ and $\ironedbid_i(z) \le \ironedbid_i(z')$ numerically because both are non-decreasing, and either the first inequality is strict, in which case the first clause ranks $z'$ above $z$, or it is an equality and the second clause ranks $z'$ weakly above $z$, strictly unless $z$ and $z'$ lie in a common ironed interval.
	A participant's rank is therefore constant on each ironed interval of its own distribution.
\end{definition}

\begin{definition}[Myerson mechanism]
	\label{def:myerson_mech}
	The mechanism allocates to the highest-ranked participant whose ironed virtual value is non-negative, $x_i(\bm{b}) = \ind{\ivv_i(b_i) \ge 0} \cdot \ind{\ivv_j(b_j) < \ivv_i(b_i) \ \forall j \ne i}$, and charges the winner the threshold payment induced by that rule, $p_i(\bm{b}) = \inf\{ z \ge 0 : x_i(z, \bm{b}_{-i}) = 1 \}$, losers paying nothing.
\end{definition}

Winning under \Cref{def:myerson_mech} requires $i$ to outrank the fixed best competitor and to satisfy $\ivv_i(b_i) \ge 0$; both conditions are upward closed in $i$'s own bid by the monotonicity of \Cref{def:ranking}, so the allocation rule is monotone and, with its threshold payments, the mechanism is DSIC by \Cref{thm:myerson_char}.
Both conditions depend on $b_i$ only through the rank of $i$, which is constant on each ironed interval of $F_i$, so \Cref{lem:myerson} holds with equality and the expected revenue is the expected ironed virtual surplus $\e{\max_i \ivv_i(v_i)^+} = \OPT(\bm F)$.
Thus the the Myerson mechanism is revenue-optimal.

The protocol design problem sets \emph{safety} (the protocol should handle dropouts gracefully) against \emph{credibility} (the auctioneer should not be able to exploit the protocol); the definitions below make this tension precise.

\subsection{Credibility}
In this section, we formalize the notion of credibility we consider.
Intuitively, a mechanism is credible if the auctioneer cannot profit by deviating from the protocol in a way that is indistinguishable from honest behavior.

\begin{definition}[Auction Protocol]
	An \emph{auction protocol} is a pair $(M, \Pi)$, where $M = (x, p)$ is a mechanism and $\Pi$ is a cryptographic protocol that purportedly realizes $M$.
\end{definition}

\begin{definition}[Safe Deviation~\cite{AkbarpourLi}]
	A deviation by the auctioneer in protocol $\Pi$ is \emph{safe} if, for every honest bidder $i$, there exists a strategy profile of the other bidders $N \cup S \setminus \{i\}$ such that $i$'s view under the deviation is computationally indistinguishable from $i$'s view under the honest execution of $\Pi$ with that strategy profile.
\end{definition}

To illustrate, consider a second-price auction.
The deviation where the auctioneer shill bids is a \emph{safe} deviation since from any real bidder's perspective, the shill is indistinguishable from a real bidder.
On the other hand, an \emph{unsafe} deviation is one where the auctioneer refuses to allocate the item to the winner.
This aims to capture an auctioneer that is willing to deviate as long as deviations cannot be detected by honest bidders.

\begin{definition}[Credibility]\label{def:credibility}
	The \emph{revenue of a deviation} is the expected sum of the payments the auctioneer receives from the real bidders minus the penalties the protocol charges the auctioneer's shills, if any; a transfer from a shill to the auctioneer nets to zero.
	An auction protocol $(M, \Pi)$ is \emph{credible} if there exists no safe deviation for the auctioneer within $\Pi$ whose revenue is strictly higher than the expected revenue of the honest execution of $M$ without shills.
\end{definition}

\noindent We can now state our primary design goal:

\begin{definition}[Optimal Credible Auction Design Problem]
	The goal is to find an auction protocol $(M, \Pi)$ maximizing $\rev^M(\bm F)$ subject to $M$ being DSIC for the honest bidders $N$ and $(M, \Pi)$ being credible.
\end{definition}
One further requirement is what makes that problem hard, and it is the reason the strawman of \Cref{sec:strawman} fails.
A protocol must remain usable when a participant simply stops responding.

\begin{definition}[Safety against Honest Abort]
	\label{def:safety}
	A protocol $\Pi$ satisfies \emph{Safety against Honest Abort} if for any set of bidders $P$, if a non-auctioneer player $i \in P$ aborts, the protocol can continue and the expected revenue---that is, the expected sum of payments from the remaining bidders---is at least the expected revenue of the mechanism executed on the set $P \setminus \{i\}$.
\end{definition}

To satisfy \Cref{def:safety}, a protocol must ensure that a single bidder's refusal to participate does not destroy the auction's value, which in practice requires it to restart or to switch to a mode that ignores the aborting party's input.
Paradoxically, this creates the following vulnerability: it hands the auctioneer a ``free option'' to abort (perhaps through a shill) whenever the outcome is unfavorable, knowing the protocol \emph{must} offer a recourse that preserves revenue---so the auctioneer retries the auction having learned the outcome of the first attempt.
\Cref{sec:strawman} turns this observation into a concrete example.

\subsection{Cryptography}

This section introduces basic cryptographic definitions used in the paper.

\begin{definition}[Cryptographic Commitment]
	A commitment scheme lets a committer publish $c = \commit{v}{r}$, a \emph{commitment} to a message $v$ under randomness $r$.
	We require it to be \emph{computationally hiding}---for distinct $v, v'$ the distributions $\{\commit{v}{r}\}_r$ and $\{\commit{v'}{r}\}_r$ are computationally indistinguishable---\emph{perfectly binding}---no two pairs $(v, r) \ne (v', r')$ with $v \ne v'$ satisfy $\commit{v}{r} = \commit{v'}{r'}$---and \emph{non-malleable}, so that from $c$ no adversary can produce a commitment whose value bears a non-trivial relation to $v$.
	See \citet{Goldreich04} for the formal definitions.
\end{definition}

A \emph{multi-party computation (MPC) with identifiable abort} is a protocol that realizes an ideal functionality $\mathcal{F}$ that takes inputs $\bm{z} = (z_1, \ldots, z_n) \in (\mathcal{Z} \cup \{\bot\})^n$ from $n$ parties and computes a function $f(\bm{z}) = (y_1, \dots, y_n)$, where $y_i \in \mathcal{Y} \cup \{\bot\}$.
An input of $\bot$ means that the party contributed with no input, and an output of $\bot$ means that the party gets no output.

While an ideal MPC would guarantee output delivery to all parties, this is known to be impossible in the presence of a dishonest majority \citep{Cleve86}.
To build intuition, consider the simple ``Coin Flipping'' problem where two parties want to generate a uniform random bit $r = r_1 \oplus r_2$.
In the ideal world, a trusted party receives $r_1, r_2$ and sends $r$ to both.
In any real protocol, one party must send their value (or open their commitment) last.
If this last party sees the result and does not like it (e.g., they wanted Heads but the result is Tails), they can simply abort.
This allows the last party to bias the output (forcing an abort is equivalent to rejecting the coin flip).
This impossibility result implies that in any auction protocol with a potentially malicious auctioneer (dishonest majority), we must allow for the possibility of \emph{abort}.

This is particularly relevant in the context of auctions, where the auctioneer is allowed to control any number of shill bidders.
We therefore rely on the standard relaxation of security with \emph{identifiable abort}, which guarantees that if the adversary prevents honest parties from learning the output, their identity is revealed.

\begin{definition}[Ideal Functionality]
	Let $\F: \mathcal{Z}^n \to \mathcal{Y}^n$ be an $n$-party function.
	Ideally, a trusted party computes $(y_1, \dots, y_n) = \F(z_1, \dots, z_n)$ and sends $y_i$ to party $P_i$.
\end{definition}
We use the standard simulation-based notion of security with identifiable abort, in which the adversary may abort at any point during the execution; we state the ideal experiment explicitly, since the timing of the abort is what drives our incentive analysis.

\begin{definition}[Ideal model with identifiable abort]
	\label{def:ideal_ia}
	Fix an $n$-party ideal functionality $\F$ and a set $I \subset [n]$ of corrupted parties.
	In the \emph{ideal execution}, a simulator $\mathcal{S}$ controlling the parties in $I$ interacts with the trusted party computing $\F$ as follows.
	\begin{enumerate}
		\item Each honest party $i \notin I$ sends its input $z_i$ to the trusted party; $\mathcal{S}$ sends inputs $\{z_i\}_{i \in I}$ of its choice (possibly $\bot$).
		\item The trusted party computes $(y_1, \ldots, y_n) = f(\bm z)$ and sends $\{y_i\}_{i \in I}$ to $\mathcal{S}$.
		\item \label{step:sim_decision} $\mathcal{S}$ replies with either $\mathsf{continue}$, in which case each honest party $i$ receives $y_i$; or $(\mathsf{abort}, i^*)$ for some $i^* \in I$ of its choice, in which case every party outputs $(\bot, i^*)$.
	\end{enumerate}
\end{definition}

\begin{definition}[Realization with Identifiable Abort]
	\label{def:realization_ia}
	A protocol $\Pi$ \emph{securely realizes $\F$ with identifiable abort} if for every probabilistic polynomial-time adversary $\mathcal{A}$ statically corrupting a set $I \subset [n]$ in the real execution of $\Pi$, there exists a probabilistic polynomial-time simulator $\mathcal{S}$ in the ideal model of \Cref{def:ideal_ia} such that the joint distribution of the honest parties' outputs and the adversary's view is computationally indistinguishable in the two executions.
	See \citet{Goldreich04} and \citet{IOZ14} for the formal treatment.
\end{definition}

Note that \Cref{def:ideal_ia} already encodes the worst case for our analysis: the adversary learns its own outputs $\{y_i\}_{i \in I}$ \emph{before} deciding whether honest parties receive theirs, and an abort names a corrupted party.
Every deviation we analyze in \Cref{sec:credible} is a choice of $\mathsf{continue}$ versus $(\mathsf{abort}, i^*)$ at \Cref{step:sim_decision} of \Cref{def:ideal_ia}, so no generality is lost by restricting attention to that decision.

\begin{theorem}[\citet{IOZ14, BaumOrsini20, Cohen23}]
	\label{thm:mpc_rounds}
	Under standard cryptographic assumptions and the availability of a broadcast channel, for any ideal functionality $\F$, there exists a protocol $\Pi$ that securely realizes $\F$ with identifiable abort in the presence of a static adversary controlling any number of corrupted parties $t < n$ in a \emph{constant} number of rounds of communication.
\end{theorem}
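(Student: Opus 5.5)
Since this is a cited result, the plan is to reconstruct it in the style of \citet{IOZ14}: start from a constant-round protocol secure with (ordinary) abort and make every abort \emph{publicly attributable}. First I would fix a constant-round garbled-circuit protocol in the BMR style for the circuit computing $\F$. In that protocol all parties jointly generate the garbled tables, then each party broadcasts its masked inputs and keys and evaluates the garbled circuit locally. This gives a constant number of rounds independent of the circuit depth, provided the garbling phase itself runs in constant rounds. The garbling phase is a fixed-depth computation of PRF outputs and wire masks, and it can be done with a constant-round secure computation of a constant-depth functionality in the OT-hybrid (or correlated-randomness) model.

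Second, to obtain identifiability I would require every party to commit (perfectly binding) to its input and random tape at the start, and to send all messages over the broadcast channel. Each message is accompanied by a (non-interactive, or constant-round) zero-knowledge proof that it is the correct next message given the committed input and tape and the public transcript, i.e.\ the GMW compiler applied round by round. Any party whose message is missing or whose proof fails is publicly named. Because every honest party checks the same broadcast transcript, they agree on the culprit $i^*$. The simulator then extracts the corrupted inputs from the commitments (using an extractable/non-malleable commitment), queries the trusted party, and forwards $(\mathsf{abort}, i^*)$ exactly when the real adversary causes a detectable failure. This matches \Cref{def:ideal_ia}.

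The main obstacle is the setup phase: the OT or correlated randomness used for garbling is produced between pairs of parties. An inconsistency there can be blamed by either side, so identifiability breaks. I would first try the \citet{IOZ14} approach: realize the preprocessing in a model where the correlated randomness is generated by a protocol with \emph{public verifiability}. Concretely, use OT whose messages are committed and whose correctness is proven in zero knowledge over broadcast, so a complaint can always be resolved by opening the committed transcript. Since the preprocessing does not depend on the inputs, opening it on abort leaks nothing.

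Finally, I would count rounds: a constant number for commitments, for preprocessing with proofs, for input distribution, and for output reconstruction. Output reconstruction is where the adversary, seeing its output first, may abort. That is allowed in the ideal model, and identifiability still holds because the shares released in that step are also verified against commitments.
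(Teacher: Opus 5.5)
The paper gives no proof of this statement. It imports it from \citet{IOZ14}, \citet{BaumOrsini20} and \citet{Cohen23}, and uses only the ideal-model interface of \Cref{def:ideal_ia}. Your reconstruction is the folklore generic route: GMW-compile a constant-round BMR protocol, put every message on the broadcast channel, and back each one with a proof, so that any failure becomes a publicly checkable fact about a named party. You also correctly locate the one place this breaks, namely pairwise OT or correlated randomness, which is the same obstacle that rules out information-theoretic identifiable abort in the OT-hybrid model. Your remark that input-independent preprocessing can be opened on abort is essentially the vindicating-release idea of \citet{Cohen23}. What your route buys is a self-contained feasibility argument. What it costs is non-black-box proofs about the PRF evaluations inside the garbling, plus the full constant-round zero-knowledge machinery among many simultaneous provers and verifiers (parallel composition, non-malleability, extraction). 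That machinery is where the real work lies, and the cited constructions largely avoid it: \citet{IOZ14} make a correlated-randomness setup publicly verifiable, and \citet{BaumOrsini20} give an efficient BMR-style protocol. The paper needs only feasibility, so either route suffices for it. Two details need tightening if you write this out. First, the GMW compiler does not let a party commit to a self-chosen random tape; it fixes the tape by coin-tossing into the well. Without that step your proofs only enforce semi-malicious behavior, and that is exactly where a broadcast OT can fail: in trapdoor-permutation OT, a receiver who picks its ``random'' image with a known preimage learns both messages. Second, each interactive proof must be checkable from the broadcast transcript alone, for example by having the verifier commit to its challenge and open it on broadcast. Otherwise a rejected proof reopens the prover-versus-verifier dispute you identified for OT. A non-interactive proof sidesteps this, but it needs a common reference string, which the statement does not grant.
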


\begin{standalonebib}
	\bibliographystyle{ACM-Reference-Format}
	\bibliography{mybib}
\end{standalonebib}

\section{Strawman}\label{sec:strawman}

In this section, we analyze the credibility of a direct attempt to realize a revenue-optimal auction using generic Multi-Party Computation (MPC).
We define the ideal functionality $\F_{OPT}$ for the single-item Myerson auction---which receives bids, computes the optimal allocation and payments, and distributes the results---and demonstrate that any protocol realizing $\F_{OPT}$ while satisfying \emph{Safety against Honest Abort} (\Cref{def:safety}) is vulnerable to a welfare-extraction attack by a malicious auctioneer.

\begin{algorithm}[H]\label{alg:f_opt}
	\caption{Optimal Auction Functionality $\F_{OPT}$}
	\SetAlgoNoLine
	\DontPrintSemicolon
	\KwIn{Bid profile $\bm b = (b_i)_{i \in P}$ for participants $P$, each registered with a distribution $F_i$.}
	\textbf{Computation:}\;
	\begin{enumerate}
		\item Determine the allocation $x$ of the Myerson mechanism (\Cref{def:myerson_mech}): $x_i(\bm b) = 1$ if $\ivv_i(b_i) \geq 0$ and $\ivv_j(b_j) < \ivv_i(b_i)$ for every $j \in P \setminus \{i\}$ in the ranking of \Cref{def:ranking}, and $x_i(\bm b) = 0$ otherwise.
		\item Compute the threshold payment $p_i(\bm b) = \inf \{ z \geq 0 : x_i(z, \bm b_{-i}) = 1 \}$ if $x_i(\bm b) = 1$, and $p_i(\bm b) = 0$ otherwise.
	\end{enumerate}
	\Return{$(\bm x, \bm p)$.}
\end{algorithm}

We now consider a protocol $\Pi$ that realizes $\F_{OPT}$.
To satisfy \emph{Safety against Honest Abort} (\Cref{def:safety})---ensuring that a single bidder's dropout allows the auctioneer to recover the revenue from the remaining bidders---practical implementations (like \citealt{Bogetoft09}) must be able to exclude the aborting party and complete the auction (e.g., via restart).

\begin{algorithm}[H]\label{alg:strawman}
	\caption{Protocol that realizes $\F_{OPT}$ with Safety Restart}
	\SetAlgoNoLine
	\DontPrintSemicolon
	\KwIn{Bid profile $\bm b = (b_1, \dots, b_n)$ for bidders $P$.}
	\KwIn{MPC protocol $\Pi$ that realizes $\F_{OPT}$ with identifiable abort}

	\textbf{Round 1:}\;
	Parties invoke $\Pi$ on inputs $\bm b$.\;
	\If{$\Pi$ outputs $\bot$ due to abort by $k \in P$}{
		Identify aborter $k$.\;
		Update $P \leftarrow P \setminus \{k\}$.\;
		\textbf{Restart} round 1 with remaining players $P$.
	}
	\Return $(\bm x, \bm p)$ of $\Pi$ to all participants.\;
\end{algorithm}

\begin{theorem}\label{thm:impossibility_restart}
	Let $F$ be a regular distribution with bounded support $[\underline{v}, \bar{v}]$ and monopoly price $r^* = \ivv^{-1}(0)$, and let $\Pi$ be a protocol that realizes $\F_{OPT}$ for a single bidder with valuation $v$ drawn from $F$.
	If $\Pi$ satisfies \emph{Safety against Honest Abort} (\Cref{def:safety}), then for every $\epsilon > 0$ there exists a safe deviation for the auctioneer whose payment from the bidder is at least $v - \epsilon$ whenever $v \geq r^*$.
\end{theorem}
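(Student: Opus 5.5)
The plan is to build an explicit descending-shill strategy and show it is safe and extracts nearly the full value. Fix $\epsilon > 0$ and discretize $[r^*, \bar v]$ into a grid $\bar v = s_0 > s_1 > \dots > s_K = r^*$ with consecutive gaps at most $\epsilon$; bounded support makes $K = \lceil (\bar v - r^*)/\epsilon \rceil$ finite. The auctioneer registers shill bidders $n+1, \dots, n+K$. Each shill $n+k$ is registered with distribution $F$ and bids $s_k$, so in the ranking of \Cref{def:ranking} it competes exactly like a real bidder with value $s_k$. Since $F$ is regular, $\ivv = \phi$ is non-decreasing and tie-breaking is by bid and index. The first execution of $\Pi$ therefore runs on participants $P = \{1, n+1, \dots, n+K\}$.

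The deviation proceeds in stages. At stage $k$, the remaining participants are the real bidder together with shills $n+k, \dots, n+K$. By \Cref{def:ideal_ia}, the corrupted shills receive their outputs $(x, p)$ before honest parties do. The auctioneer then acts according to who is provisionally winning.
\begin{itemize}
\item If a shill is provisionally winning, it must be the highest one, $n+k$. The auctioneer issues $(\mathsf{abort}, n+k)$. The protocol's safety restart (\Cref{alg:strawman}), which exists because $\Pi$ satisfies \Cref{def:safety}, removes $n+k$ and reruns the auction.
\item If the real bidder is provisionally winning, its threshold payment is the smallest bid beating the highest remaining shill, which is about $s_k$. The auctioneer continues.
\end{itemize}
Stage $0$ must be handled slightly carefully: start the grid so that the top shill bids $\bar v$, and let the tie-break against the real bidder be resolved by the index clause. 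Alternatively, start at $\bar v - \epsilon$, which is harmless.

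For the payoff, take $v \ge r^*$ and let $k$ be the first stage at which $\phi(v)$ outranks $\phi(s_k)$, that is, essentially the first $s_k \le v$. By regularity and monotonicity of the ranking, this happens once the shills remaining all bid at most $v$, and it happens no later than $s_K = r^*$. At that stage the real bidder's threshold is $\inf\{z : \phi(z) \text{ outranks } \phi(s_k),\ \phi(z)\ge 0\} \ge s_k > v - \epsilon$, using that $\phi$ is strictly comparable and $s_k \ge r^*$ keeps $\phi(s_k) \ge 0$. Each abort carries no penalty in $\F_{OPT}$ with restart, so the auctioneer collects at least $v - \epsilon$.

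Safety is the remaining step. For any honest bidder, the view is a transcript of MPC executions and aborts naming shills. This is identical to the honest execution of the protocol against the strategy profile in which bidders $n+1, \dots, n+K$ are real participants with values $s_k$ who drop out whenever they are provisionally winning. \Cref{def:safety} explicitly permits such dropouts. Security with identifiable abort (\Cref{def:realization_ia}) guarantees that the honest bidder's view is simulatable given only its own outputs and the abort identities, so the two views are computationally indistinguishable.

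I expect the main obstacle to be getting the threshold-payment calculation right at boundary cases and under the tie-breaking rules: ties at $s_k = v$, the requirement $\phi \ge 0$ at the threshold, and ensuring the top stage is handled correctly. All of these are routine once the ranking convention of \Cref{def:ranking} is unwound.
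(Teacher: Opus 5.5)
Your proof is correct, but it takes a different route from the paper's.

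\textbf{The paper's route.} It uses two shills registered with $F$. $S_2$ bids $\bar v$ and wins the first execution. The payment $\F_{OPT}$ charges $S_2$, namely $\max\{v, r^*\}$, reveals $v$ exactly. $S_2$ then aborts. Because the restart runs from scratch, the surviving shill $S_1$ submits the fresh bid $v - \epsilon$, so after a single abort the bidder pays $\max\{v - \epsilon, r^*\}$.

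\textbf{Your route.} Your descending grid never reads the payment output. It uses only the identity of the provisional winner. The price is $\lceil (\bar v - r^*)/\epsilon \rceil$ shills and as many restarts, and this is where bounded support enters essentially; in the paper it only ensures that $S_2$ wins surely with $\ivv(\bar v) \ge 0$. In exchange, all your shill bids are fixed before the first execution. Your attack therefore survives a restart that binds surviving participants to their original committed bids. The paper's re-bidding step does not survive this: with $S_1$ stuck at $\underline v$, the bidder would pay only $r^*$. Your argument is essentially a formal version of the paper's remark that committing bids does not help.

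\textbf{What each proof shows.} The paper's proof isolates the payment leakage that the Winner Determination Gadget removes. Yours shows that, once restarts are costless, the winner's identity alone already suffices. That is exactly why the SRA still needs the penalty $\pen$ on top of the gadget. Your descent is the $\pen = 0$ instance of the shill ladder in \Cref{ex:equal_revenue_warmup} and \Cref{thm:lower_bound}.

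\textbf{Details that check out.} Your handling of the boundary cases is fine. Since the bidder has the smallest index, it wins ties against shills. The threshold is at least $s_k$ whether or not $\phi(s_k) \ge 0$. The safety argument, that other participants may drop out so every abort is consistent with an honest execution, is the same as the paper's.
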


The honest mechanism charges the bidder $r^*$ on the event $v \geq r^*$ and nothing otherwise, so the deviation extracts the bidder's entire surplus above the reserve, up to $\epsilon$.
The auctioneer uses one shill to probe the payment, which under the Myerson mechanism reveals the bidder's valuation, and then aborts to exploit that information in the restart.

\begin{proof}
	Fix $\epsilon > 0$.
	We define the deviation as follows.
	The auctioneer registers two shills $S_1$ and $S_2$ with the bidder's own distribution $F$; in the first execution of $\Pi$, $S_1$ bids $\underline{v}$ and $S_2$ bids $\bar{v}$.
	Note $S_2$ outranks the bidder: $S_2$ shares the bidder's transformation, and since $F$ is regular nothing is ironed, so the ranking of \Cref{def:ranking} settles the tie between equal ironed virtual values on the raw bid.
	Note $\ivv(\bar{v}) = \bar{v} \geq 0$, since $1 - F(\bar{v}) = 0$.
	Thus $S_2$ wins the first execution, and the auctioneer observes the payment charged to $S_2$, which is the least bid $z$ with $\ivv(z) \geq 0$ that outranks the bidder's bid $v$, that is, $\max\{v, r^*\}$.
	The auctioneer then instructs $S_2$ to abort.
	By safety, Algorithm~\ref{alg:strawman} restarts $\Pi$ with $S_2$ excluded and $S_1$ still active, and since it restarts from scratch, $S_1$ may submit a fresh bid.
	On the event $v \geq r^*$ the auctioneer has learned $v$ and sets the bid of $S_1$ to $v - \epsilon$; the bidder outranks $S_1$, wins, and pays the threshold $\max\{v - \epsilon, r^*\} \geq v - \epsilon$.
	The deviation is safe: a real bidder may also abort, so the bidder's view of the first execution is that of an honest execution in which another participant aborted, and its view of the second is that of an honest execution against a bidder with valuation $v - \epsilon$.
	This proves that the deviation extracts at least $v - \epsilon$ whenever $v \geq r^*$.
\end{proof}

This impossibility result reveals the fundamental weakness of monolithic MPC for auction design: the protocol reveals the \emph{entire outcome} (winner and price) to the adversary before honest parties can act.
In particular, the adversary learns the highest honest valuation through the payment identity, and the safety restart mechanism provides a costless option to exploit this information.

\paragraph*{Why committing bids does not help.}
One might attempt to fix this by requiring bids to be cryptographically committed, so that the auctioneer cannot modify shill bids after learning the outcome.
However, this defense fails: the auctioneer commits to $k$ shill bidders with bids $1, 2, \ldots, k$ and aborts all shills bidding above the highest honest bidder, probing the honest valuation at no cost.

\Cref{sec:protocol} presents the Sequential Revelation Auction, which avoids this vulnerability by decomposing the computation into minimal gadgets that reveal only the winner identity---never the price---and by imposing penalties that make aborting unprofitable.

\begin{standalonebib}
	\bibliographystyle{ACM-Reference-Format}
	\bibliography{mybib}
\end{standalonebib}

\section{The Protocol}\label{sec:protocol}

The strawman analysis (\Cref{sec:strawman}) shows that treating the MPC as a black box that realizes an auction functionality does not yield credibility: the adversary learns the full outcome upon abort, providing a ``free option'' to retry with extra information.
Our protocol takes a different approach.
Rather than compute the entire auction mechanism---allocation \emph{and} payments---inside MPC, we decompose the computation into a minimal \emph{gadget} that reveals only the winner identity, and we compute payments \emph{outside} the MPC via simultaneous revelation of committed bids.
This design ensures that an abort reveals no payment information, and economic penalties make aborting unprofitable.

The auctioneer's ``free option'' in the strawman protocol stems from learning payment information upon abort.
If the MPC computed only the winner identity---but not the payment---an abort would reveal who is winning, but not how much they would pay.
Since payments, not allocations, directly determine revenue, this limited information alone is not enough for the auctioneer to construct a profitable deviation, provided the restart is costly.
We formalize this idea via the \emph{Winner Determination Gadget}.

\begin{definition}[Winner determination gadget]
	\label{def:wdg}
	Let $\bm{c} = (c_i)_{i \in A}$ be a vector of commitments, broadcast by the bidders in the commit phase, \emph{before} the gadget is invoked (Step~\ref{step:commit} of Algorithm~\ref{alg:credible_protocol}).
	The \emph{Winner Determination Gadget} $\F_{\text{WDG}}^{\bm{c}}$ is the ideal functionality, parameterized by $\bm{c}$, by the distributions $(F_i)_{i \in A}$, and by the Myerson-optimal allocation rule $x^*$ (\Cref{def:myerson_mech}), that behaves as follows:
	\begin{enumerate}
		\item \textbf{Input:}
		      Each party $i \in A$ submits a private input $(v_i, \rho_i)$.
		\item \textbf{Computation:}
		      Let $A' = \{i \in A : c_i = \commit{v_i}{\rho_i}\}$ be the set of parties whose opening is consistent with their commitment.
		      Parties in $A \setminus A'$ are treated as having contributed no input.
		      The functionality computes the allocation $x^*(\bm{v}_{A'})$ and the identity $i^*$ of the winner, if any.
		\item \textbf{Output:}
		      Every party $i \in A$ receives the pair $\bigl(i^*, A \setminus A'\bigr)$, where $i^* = \bot$ if the item is unallocated.
	\end{enumerate}
\end{definition}

First, the winner's identity is a \emph{public} output, which the protocol needs---Algorithm~\ref{alg:credible_protocol} must detect when the provisional winner refuses to reveal, and the penalty can only be charged to a party the honest bidders can name---and which is harmless for credibility: wherever a real bidder wins, every shill is outbid and revealing weakly dominates aborting (\Cref{lem:reveal_outbid}), so the auctioneer never acts on the extra information and \Cref{lem:information_leakage} is unaffected.
Second, the functionality does not itself abort.
Commitment verification is a computation on the parties' inputs, not a protocol event: a party with an inconsistent opening is treated as absent and its identity reported.
Aborts are handled in the ideal model of \Cref{def:ideal_ia}, where the adversary may, after learning its own output, send $(\mathsf{abort}, i^*)$ for a corrupted $i^*$, in which case all parties output $(\bot, \{i^*\})$.
The gadget therefore reveals nothing about payments, or about others' valuations beyond what the allocation implies.
A party in $A \setminus A'$ is excluded by Algorithm~\ref{alg:credible_protocol} and forfeits its collateral like any other party that aborts, so an inconsistent opening is never a free way out of the auction.

\begin{algorithm}[htbp]
	\SetAlgoNoLine
	\LinesNumbered
	\DontPrintSemicolon
	\KwParameters{Penalty $\pen$ per abort, MPC protocol $\Pi_{\text{WDG}}$ that realizes $\F_{\text{WDG}}$, Myerson mechanism $(x, p)$.}
	\KwIn{Valuations $v_i$ from bidders $i \in A$.}
	\KwSetup{Each participant $i \in A$ registers a public distribution $F_i$, which fixes the ironed virtual value $\ivv_i$ and the ironed bid $\ironedbid_i$ by which the gadget ranks $i$ (\Cref{def:ranking}).}
	\KwOut{An ex-post allocation $x(v_A)$ and payment $p(v_A)$.}
	\textbf{Commit Phase:}\;
	Each bidder $i \in A$ deposits a collateral $\pen$\label{step:deposit} in escrow\;
	Each bidder $i \in A$ samples randomness $\rho_i$ and broadcasts $c_i = \commit{v_i}{\rho_i}$\label{step:commit}\;
	\textbf{Reveal Phase:}\;
	Invoke $\Pi_{\text{WDG}}$ with input $\{(v_i, \rho_i, c_i)\}_{i \in A}$\;
	\If{$\Pi_{\text{WDG}}$ aborts}{
		Let $k$ be the faulty bidder identified by $\Pi_{\text{WDG}}$\;
		Exclude $k$: $A \leftarrow A \setminus \{k\}$\;
		\textbf{Restart from Reveal Phase} with surviving bidders $A$\;
	}
	\Else{
		$\Pi_{\text{WDG}}$ publicly outputs the identity $i^*$ of the \emph{provisional winner} and the set $A \setminus A'$ of parties whose opening was inconsistent with their commitment\;
		Exclude the inconsistent parties: $A \leftarrow A'$\label{step:exclude-inconsistent}\;
		\If{$i^* = \bot$}{
			The item is unsold; the auctioneer refunds the collateral of every $i \in A$, burns the rest, and the protocol halts\;
		}
		The provisional winner reveals their commitment by broadcasting $(v_{i^*}, \rho_{i^*})$, then transfers $v_{i^*}$ to the auctioneer\label{step:reveal}\; \If{the provisional winner refuses to participate}{ Exclude the provisional winner: $A \leftarrow A \setminus \{i^*\}$\; \textbf{Restart from Reveal Phase} with remaining bidders $A$\; } \Else{ The provisional winner becomes the \emph{final winner}.
			All bidders $j \in A \setminus \{i^*\}$ reveal their commitment by broadcasting $(v_j, \rho_j)$\;
			\For{Each $j \in A \setminus \{i^*\}$}{
				\If{$j$ fails to reveal}{
					Exclude $j$: $A \leftarrow A \setminus \{j\}$\;
				}
			}
			Auctioneer refunds $v_{i^*} - p_{i^*}(v_A)$ to the winner (where $A$ is the set of currently active bidders)\label{step:refund}\;
			The auctioneer returns the collateral to each bidder that did not abort (i.e., $i \in A$), and \textbf{burns} the collateral for all others\label{step:burn}\;
		}
	}
	\caption{Sequential revelation auction (SRA)}
	\label{alg:credible_protocol}
\end{algorithm}

\begin{example}[Walk-through of the SRA]
	\label{ex:sra_walkthrough}
	Consider a single real bidder with valuation $v = 5$ drawn from the equal-revenue distribution $F(v) = 1 - 1/v$ on $[1, \infty)$, for which $\ivv(v) = v - (1 - F(v))/f(v) = v - v = 0$ for every $v$ in the support.
	The optimal reserve is $r^* = 1$ and $\rev(F) = 1$.
	The auctioneer introduces two shills with bids $r_1 = 2$ and $r_2 = 8$, committing them to the same transformation $\ivv$ as the real bidder.
	Since $\ivv \equiv 0$, all three bids carry the same ironed virtual value, and since $F$ is regular nothing is ironed and the ironed bid is the bid itself, so the ranking of \Cref{def:ranking} allocates the item to the highest raw bid.

	\textbf{Round 1.}
	The WDG is invoked with all three bids and identifies shill $r_2 = 8$ as the provisional winner.
	The auctioneer must decide between two options.
	\emph{Reveal} $r_2$: the shill wins, no real bidder is allocated the item, and the auctioneer collects revenue $0$---this is the cost of committing to a price floor of $8$.
	\emph{Hide} $r_2$: abort, forfeit the penalty $\pen$, and restart without $r_2$.

	Suppose the auctioneer hides $r_2$.
	Penalty $\pen$ is burned, and the auction restarts with $\{\text{bidder}, r_1\}$.

	\textbf{Round 2.}
	The WDG runs on $\{\text{bidder}, r_1\}$.
	Since $v = 5 > r_1 = 2$, the real bidder wins.
	The bidder reveals $v = 5$ and deposits $5$.
	Then $r_1$ is revealed as $2$.
	The payment is the threshold of \Cref{def:myerson_mech}---the least bid at which the bidder still wins---and since all bids tie at $\ivv \equiv 0$ that threshold is settled on the raw bid, giving $r_1 = 2$ rather than the $\ivv^{-1}(\ivv(r_1)) = 1$ a numeric inverse would suggest.
	The auctioneer refunds $5 - 2 = 3$, keeping revenue $2$.

	On this particular realization the honest mechanism collects the reserve $1$, whereas the deviation nets $2 - \pen$, which exceeds $1$ whenever $\pen < 1$ and ties it at $\pen = 1$---so \emph{ex post} the abort looks attractive.
	The decision to hide $r_2$, however, must be made \emph{before} $v$ is known, and at that point the auctioneer is conditioning only on the event that all real bids fall below $8$.
	In expectation over $v$, the continuation revenue after the abort is at most $\rev(F) = 1$, so the net expected payoff from hiding is at most $1 - \pen \leq 0$.
	Thus the auctioneer has no incentive to hide $r_2$ in the first place.
	\Cref{sec:info_leakage} is precisely the general form of this calculation: it bounds the value of the information the auctioneer obtains by aborting.
\end{example}

\subsection{Safety}

The requirement that the winner $i^*$ deposits $v_{i^*}$ \emph{before} other bidders reveal their commitments is structural, not incidental.
It resolves a potential conflict between safety and credibility.
Specifically, if $i^*$ could observe the other bids before committing funds, they could abort upon discovering an unfavorable price (e.g., $p_{i^*} \approx v_{i^*}$), forcing a restart.
By acting first, $i^*$ is effectively bound: an abort after this point results in forfeiture of $v_{i^*}$, which by individual rationality exceeds any possible payment $p_{i^*}(v)$.
Thus, the auctioneer need not restart; the forfeited deposit $v_{i^*}$ serves as sufficient revenue.

\begin{theorem}\label{thm:safety}
	The SRA satisfies safety against honest aborts.
\end{theorem}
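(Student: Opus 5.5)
The plan is to verify \Cref{def:safety} directly by a case analysis on the point in Algorithm~\ref{alg:credible_protocol} at which the non-auctioneer participant $i \in P$ aborts. In every case we will show one of two things. Either the protocol continues as an honest execution of the SRA on $P \setminus \{i\}$, whose expected revenue is that of the Myerson mechanism on $P\setminus\{i\}$. Or the auctioneer collects, from the forfeited funds alone, at least what the mechanism on the remaining bidders would have paid.

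The natural split is into three phases.

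\textbf{Case 1: $i$ aborts before or during the WDG.} The abort happens during the commit phase or during an invocation of $\Pi_{\text{WDG}}$, including an inconsistent opening, which Step~\ref{step:exclude-inconsistent} treats as absence. By identifiable abort (\Cref{def:realization_ia}, \Cref{thm:mpc_rounds}), $i$ is named and excluded, and the protocol restarts from the Reveal Phase on $A \setminus \{i\}$. The key point is that the gadget released at most the pair $(i^*, A\setminus A')$, and it released nothing to honest parties about their own strategies. The remaining honest bidders therefore still face a DSIC Myerson mechanism on $A\setminus\{i\}$ with their committed truthful bids, so they reveal truthfully and the restarted run realizes $x, p$ on $P\setminus\{i\}$. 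Its expected revenue equals $\rev^M$ on $P\setminus\{i\}$ by \Cref{lem:myerson}, and the burned collateral $\pen$ only adds to this.

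\textbf{Case 2: $i = i^*$ refuses to reveal or deposit.} The protocol excludes $i^*$ and restarts on $A\setminus\{i^*\}$, which is exactly the mechanism on the remaining set. This is the same argument as in Case 1.

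\textbf{Case 3: a loser $j \ne i^*$ aborts in the final reveal.} This case is the main obstacle, because the protocol does not restart. The winner has already been fixed using $j$'s bid, and $j$ is then removed from $A$ before the payment at Step~\ref{step:refund} is computed. We must show that the revenue $p_{i^*}(v_{A\setminus\{j\}})$ is at least what the mechanism on $P\setminus\{j\}$ would yield. First, $i^*$ outranks everyone in $A$, so $i^*$ is also the Myerson winner on $A\setminus\{j\}$, and the allocation coincides ex post. Second, $p_{i^*}(v_{A\setminus\{j\}})$ is by definition the threshold payment on that set. So the realized revenue equals the Myerson revenue on $P\setminus\{j\}$ pointwise, and $j$'s burned $\pen$ is extra.

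Finally, we must handle the case where the winner aborts after depositing $v_{i^*}$. As the paragraph preceding the theorem argues, the forfeited deposit $v_{i^*} \ge p_{i^*}(v)$ by individual rationality of threshold payments. This dominates the revenue of the mechanism on the remaining bidders only if we compare it to the optimal revenue on those bidders. I expect this comparison to need care. My first attempt will bound it pointwise: since $i^*$ outranks every other bidder, any remaining winner $k$ would pay a threshold no larger than $b_k$. Along the monotone ranking this is at most the threshold to beat $i^*$, which is at most $v_{i^*}$. Taking expectations over the restarted run then gives the comparison.
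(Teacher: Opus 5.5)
Your case split matches the paper's. A gadget abort or inconsistent opening triggers a restart, and so does a provisional winner who will not open or escrow. For a loser $j$ who drops out, you observe that $i^*$ still outranks everyone in $A\setminus\{j\}$, so the refund step charges exactly the threshold on that set. You also place this abort correctly, after the winner's escrow, which is where Algorithm~\ref{alg:credible_protocol} has the losers open.

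\textbf{The gap is in the last case.} Your chain is $p_k \le b_k \le \tau \le v_{i^*}$, where $\tau$ is the least bid at which $k$ would outrank $i^*$. The final inequality is false. The ranking compares ironed virtual values, so $\tau$ lives on $k$'s own value scale and need not lie below $v_{i^*}$. For a concrete case, let $P=\{i^*,k\}$ with $F_{i^*}$ uniform on $[0,1]$ and $F_k$ uniform on $[0,10]$, so $\phi_{i^*}(v)=2v-1$ and $\phi_k(v)=2v-10$. Take $v_{i^*}=0.9$ and $v_k=5.3$. The virtual values are $0.8>0.6$, so $i^*$ wins. Yet $\tau=5.4$, and the mechanism on $\{k\}$ alone charges $k$ its reserve $5>0.9=v_{i^*}$. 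So no pointwise comparison between the forfeited escrow and the mechanism on $P\setminus\{i^*\}$ is available.

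The paper does not attempt one. It only compares the escrow with what the honest execution would have charged, $v_{i^*}\ge p_{i^*}(\bm v)$ by individual rationality, and concludes that this abort can only raise revenue. That is the sentence you open the case with.

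If you want the literal comparison with $P\setminus\{i\}$ that \Cref{def:safety} states, make it in expectation. On the event that $i$ wins and walks away, revenue is $v_i\ge p_i(\bm v)$. Hence expected revenue is at least that of the Myerson mechanism on $P$. That in turn is at least the optimal revenue on $P\setminus\{i\}$, because the Myerson mechanism on $P\setminus\{i\}$ is itself a DSIC mechanism on $P$ that ignores $i$.

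Separately, drop the claim that burned collateral adds to revenue. The burn step destroys it rather than paying it to the auctioneer, and \Cref{def:safety} counts only payments from the remaining bidders. Your other inequalities hold without it.
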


\begin{proof}
	By \Cref{def:safety}, it suffices to show that upon any abort by a non-auctioneer participant, the protocol either (i) restarts with the remaining participants $P' \subset P$, preserving the original mechanism's outcome on $P'$, or (ii) terminates with revenue at least that of the mechanism on $P'$.
	We classify the possible abort conditions into three disjoint cases based on the protocol's progress:

	\begin{enumerate}
		\item \textbf{Abort during Computation (WDG):}
		      If the MPC aborts, the Identifiable Abort property ensures the deviation is attributed to a specific party $k$; if instead the gadget reports a party $k$ whose opening is inconsistent with its commitment, Step~\ref{step:exclude-inconsistent} names $k$ directly.
		      The protocol acts as if $k$ never participated.
		      The protocol restarts with $P \setminus \{k\}$, satisfying the definition.
		\item \textbf{Abort during Revelation (Ex-Ante):}
		      Suppose a bidder $j$ aborts \emph{before} the winner's deposit is secured.
		      \begin{itemize}
			      \item If $j$ is the winner $i^*$, they are excluded, and the protocol restarts.
			            The revenue equals the revenue with $P \setminus \{i^*\}$.
			      \item If $j$ is a loser, they are similarly excluded.
			            The allocation rule implies the winner remains $i^*$ (since $i^*$ outranks $j$), and the payment remains determined by the highest remaining losing bid.
			            Thus, the outcome is identical to the outcome with $P \setminus \{j\}$.
		      \end{itemize}
		\item \textbf{Abort during Revelation (Ex-Post):}
		      Suppose the winner $i^*$ aborts \emph{after} depositing $v_{i^*}$.
		      The protocol guarantees revenue $v_{i^*}$.
		      Since $v_{i^*} \ge p_{i^*}(v)$ (by individual rationality), an abort by $i^*$ can only increase the auctioneer's revenue.
	\end{enumerate}

	In all cases, the protocol recovers at least the revenue of the Myerson mechanism on the surviving set of bidders.
\end{proof}

\subsection{Characterizing Deviations}

We characterize the auctioneer's safe deviations in the SRA.
Unlike real bidders, whose valuations are drawn from known distributions $F_i$, shill bidders are entirely controlled by the auctioneer.
Crucially, the auctioneer must commit not only to each shill's bid $b_j$ but also to a distribution for it, registered in the setup of Algorithm~\ref{alg:credible_protocol} before the commit phase; the gadget then ranks shill $j$ by the monotone non-decreasing \emph{ironed virtual value transformation} $\ivv_j : \mathbb{R}_+ \to \mathbb{R}$ and the ironed bid $\ironedbid_j$ of that distribution.
This transformation is analogous to the ironed virtual value function for real bidders, but is chosen strategically.

The auctioneer's strategy thus consists of:
\begin{enumerate}
	\item Selecting a set of shill bidders $S$, each $j \in S$ with bid $b_j$ and committed virtual value transformation $\ivv_j(\cdot)$.
	\item When shill $j \in S$ is identified as the winner, deciding whether to \emph{reveal} $b_j$ or \emph{hide} (abort and restart without $j$).
	\item When shill $j \in S$ is not identified as the winner, deciding whether to \emph{reveal} $b_j$ or \emph{hide} (refuse to open $b_j$, forfeiting $\pen$; no restart occurs, and $j$ is dropped from the set on which the payment is computed).
\end{enumerate}

For credibility, we require that the honest mechanism weakly dominates \emph{all} such deviations---regardless of the choice of shill bids $\{b_j\}_{j \in S}$ and transformations $\{\ivv_j\}_{j \in S}$.

\begin{definition}[Threshold virtual value]
	\label{def:threshold}
	For bidding profile $\bm{b}$ and bidder $i \in N \cup S$, the \emph{threshold virtual value} is $\beta_i(\bm{b}) \coloneqq \max_{j \neq i} \ivv_j(b_j)$, the highest ironed virtual value among all other bidders.
\end{definition}

We analyze the auctioneer's decision by considering whether a shill is outbid (does not win) or wins the allocation.

\begin{observation}
	\label{obs:shill_wins_highest}
	Shill $j \in S$ is identified as the winner if and only if $\beta_j(\bm b) < \ivv_j(b_j)$ and $\ivv_j(b_j) \ge 0$, where $\bm b$ is the bidding profile of all non-aborting bidders.
\end{observation}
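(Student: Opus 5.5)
The plan is to unfold the definitions in order. By \Cref{def:wdg}, the gadget runs on the set $A'$ of non-aborting participants with consistent openings and returns the winner $i^*$ of the Myerson allocation $x^*(\bm b_{A'})$ of \Cref{def:myerson_mech}. So shill $j$ is identified as the winner exactly when $x^*_j(\bm b) = 1$, where $\bm b$ is the profile of the non-aborting bidders.

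By \Cref{def:myerson_mech}, $x^*_j(\bm b) = \ind{\ivv_j(b_j) \ge 0} \cdot \ind{\ivv_k(b_k) < \ivv_j(b_j) \ \forall k \ne j}$. The first indicator is the numeric condition $\ivv_j(b_j) \ge 0$. This matches the first convention of \Cref{def:ranking}, which says comparisons against constants are ordinary numeric ones.

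For the second indicator I would argue that ``$\ivv_k(b_k) < \ivv_j(b_j)$ for all $k \ne j$'' is equivalent to $\beta_j(\bm b) < \ivv_j(b_j)$. The point needing care is that in \Cref{def:threshold} the maximum $\max_{k \ne j} \ivv_k(b_k)$ must be read in the ranking order of \Cref{def:ranking}, not numerically. \Cref{def:ranking} says this order is a strict total order on participants, and every finite set of participants has a unique maximum. So $\beta_j(\bm b)$ is the ranked value $\ivv_{k^*}(b_{k^*})$ of the top-ranked competitor $k^*$.

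The equivalence then follows from transitivity. If $j$ outranks every $k \neq j$, it outranks $k^*$ in particular. Conversely, if $j$ outranks $k^*$, then for every $k \ne j$ we have $\ivv_k(b_k) \le \ivv_{k^*}(b_{k^*}) < \ivv_j(b_j)$, so $j$ outranks $k$ as well.

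Two degenerate cases need a remark. If $j$ has no competitors, the maximum is over an empty set; take $\beta_j = -\infty$, so the condition holds vacuously, matching the vacuous universal quantifier. Bidders removed by aborts or inconsistent openings do not appear in $\bm b$, by the definition of $A'$ and the restart rule of Algorithm~\ref{alg:credible_protocol}.

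The only real obstacle is the reading of $\beta_j$ under ties. If the maximum in \Cref{def:threshold} were taken numerically, the strict inequality could fail on a tie that the ranking breaks in $j$'s favor, for example through the ironed-bid or index clause. I would settle this by stating explicitly that the maximum, like the overloaded $<$, is taken in the ranking order. That reading is consistent with \Cref{def:ranking}'s statement that the relation is determined by the pairs (participant, bid).
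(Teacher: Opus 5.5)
Your proof is correct and follows the same route as the paper. The paper's proof simply notes that the gadget implements the Myerson allocation of \Cref{def:myerson_mech}, so $j$ wins exactly when $\ivv_j(b_j) \ge 0$ and $j$ outranks every other non-aborting bidder. Your explicit reading of the maximum in \Cref{def:threshold} in the ranking order of \Cref{def:ranking}, together with the transitivity step, is what that one-line argument uses implicitly; it is needed, because with a numeric maximum the claim fails on ties such as the $\ivv \equiv 0$ profiles of \Cref{ex:sra_walkthrough}.
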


\begin{proof}
	The winner determination gadget implements the Myerson optimal allocation of \Cref{def:myerson_mech}, which allocates to the highest-ranked participant with non-negative ironed virtual value.
	Shill $j$ is therefore the winner exactly when it outranks every other non-aborting bidder, real or shill.
\end{proof}

\paragraph*{Refinements.}
Three assumptions on the auctioneer's deviations are without loss of generality.
First, we assume a shill bidder never aborts during the WDG computation phase since the protocol restarts without them.
This is strategically equivalent to the shill bidder aborting immediately after the WDG output is revealed (identifying them as the winner or loser).
Thus, we can focus our analysis on deviations that occur during the revelation phase.
Second, we assume every shill has non-negative ironed virtual value, $\ivv_j(b_j) \ge 0$: a shill with $\ivv_j(b_j) < 0$ never wins (\Cref{obs:shill_wins_highest}) and never raises the winner's payment, since any bid clearing the reserve, $\ivv_i(z) \ge 0$, already outranks it, so removing it changes neither the outcome nor the penalties.
Third, we assume an abort names the provisional winner: the ideal adversary of \Cref{def:ideal_ia} may name any corrupted party, but aborting a shill other than the provisional winner is weakly dominated, since the re-run returns the same provisional winner without providing any information advantage.

The next lemma shows that if a shill is outbid by a real bidder, then revealing dominates hiding.

\begin{lemma}[Outbid shills]
	\label{lem:reveal_outbid}
	For shill $j \in S$, if $\ivv_j(b_j) < \beta_j(\bm b)$, then revealing $b_j$ weakly dominates aborting.
\end{lemma}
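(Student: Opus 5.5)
Since $\ivv_j(b_j) < \beta_j(\bm b)$, shill $j$ is not the provisional winner, so by the characterization of deviations its hide option is the non-restart one: $j$ refuses to open, forfeits $\pen$, and is dropped from the set on which the payment is computed. Reveal and hide are therefore evaluated at the same point of the protocol, with the same provisional winner $i^*$ already fixed and (once we are past Step~\ref{step:reveal}) the winner's deposit already made. I will compare the auctioneer's revenue under the two choices pointwise, for every realization of the real bids and every choice made at the other shills.

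There are two cases. If $i^*$ is itself a shill, the revenue from real bidders is zero either way, because no real bidder is allocated. Hiding $j$ only adds a burned collateral $\pen$, so revealing is weakly better. If $i^*$ is a real bidder, the payment at Step~\ref{step:refund} is the threshold $p_{i^*}(\bm v_A) = \inf\{z \ge 0 : x_{i^*}(z, \bm v_{A\setminus\{i^*\}}) = 1\}$. This requires $i^*$ to clear the reserve and to outrank every other active participant. Removing $j$ from $A$ removes one constraint from that set, so the set of winning $z$ weakly grows and the infimum weakly falls. Hence $p_{i^*}$ with $j$ revealed is at least $p_{i^*}$ with $j$ hidden. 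Hiding also costs $\pen \ge 0$. So revealing is weakly better in this case too.

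I need the monotonicity of the threshold in the competitor set to be stated carefully in terms of the ranking of \Cref{def:ranking}, since that ranking is a strict total order on pairs $(i,b_i)$. Winning is the conjunction of $\ivv_{i^*}(z) \ge 0$ and, for each active $k \ne i^*$, the condition that $(i^*,z)$ outranks $(k,b_k)$. Dropping a conjunct enlarges the feasible set of $z$, so the infimum cannot increase. The remaining check is that hiding $j$ cannot change the identity of the winner or trigger a restart. The protocol specifies that a non-winning shill's refusal only excludes it at the payment stage, with no restart. Also, $i^*$ still outranks everyone in $A\setminus\{j\}$, so $x_{i^*}(\bm v_{A\setminus\{j\}}) = 1$ still holds.

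The main obstacle is making the pointwise comparison legitimate when the auctioneer's strategy is adaptive across several shills. For example, the decision at $j$ might be correlated with the decisions at other outbid shills. I would handle this by fixing the decisions at all other shills as an arbitrary (possibly randomized) function of the public history. All losers reveal simultaneously after the winner is final, so nothing the auctioneer learns later depends on $j$'s choice. The argument above therefore applies conditional on every history, and taking expectations gives weak dominance.
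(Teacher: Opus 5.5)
Your proof is correct and follows the paper's argument: since $j$ is not the provisional winner, hiding it triggers no restart. Removing $j$ from the competitor set can only lower the winner's threshold payment, while hiding forfeits $\pen$, so revealing weakly dominates. Your explicit handling of the case where the provisional winner is itself a shill, and of adaptive choices at the other shills, fills in details that the paper's short proof leaves implicit.
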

\begin{proof}
	By \Cref{obs:shill_wins_highest}, if $\ivv_j(b_j) < \beta_j(\bm b)$ then shill $j$ is not the winner, so some other bidder $i^* \neq j$ wins and reveals first.
	At this point revealing $b_j$ acts as a \emph{price floor}: the winner pays the least bid beating $\beta_{i^*}(\bm b)$, and including $b_j$ among the competitors can only raise that maximum, hence it can only raise the payment.
	Aborting instead removes $b_j$ from the competitor set, weakly decreasing revenue while incurring penalty $\pen$.
\end{proof}

\begin{definition}[Refined safe deviation]
	\label{def:refined_deviation}
	A \emph{refined safe deviation} is a safe deviation in which, when choosing whether to reveal or abort $j \in S$, the deviation reveals $b_j$ if any bid (shill or real) has been revealed so far, and otherwise chooses freely.
\end{definition}

\begin{lemma}[Refinement]
	\label{lem:refinement}
	For any safe deviation there exists a refined safe deviation with weakly greater expected revenue.
	Moreover, the shills of any safe deviation may be relabelled so that, writing $r_j$ for the bid of the $j$-th shill and $\ivv(r_j)$ for $\ivv_j(r_j)$, its ironed virtual value under its own committed transformation, $\ivv(r_1) < \ivv(r_2) < \cdots < \ivv(r_k)$ in the ranking of \Cref{def:ranking}.
\end{lemma}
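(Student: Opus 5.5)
We would prove the two claims of \Cref{lem:refinement} separately, beginning with the refinement, which is the substantive part. Fix an arbitrary safe deviation $\sigma$. We build a refined deviation $\sigma'$ that agrees with $\sigma$ at every decision point reached before any bid has been revealed. At any decision point reached after some bid has been revealed, $\sigma'$ always reveals the shill.

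The key observation is that in Algorithm~\ref{alg:credible_protocol} a bid becomes public only at Step~\ref{step:reveal}. Once the provisional winner opens its commitment there, that winner is final and no further restart is possible. Every later reveal-or-hide decision therefore concerns a shill $j$ that is not the final winner. Such a shill satisfies $\ivv_j(b_j) < \beta_j(\bm b)$ by \Cref{obs:shill_wins_highest}, so \Cref{lem:reveal_outbid} applies to it.

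If the final winner is itself a shill, no real bidder pays. The only remaining effect of the loser decisions is then the penalties, and revealing saves $\pen$ per shill. If the final winner is a real bidder $i^*$, its payment is the threshold payment computed on the active set. Adding a revealed competitor weakly raises $\beta_{i^*}$ and hence the payment, while hiding additionally costs $\pen$.

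We would then switch the decisions to ``reveal'' one at a time, in the order the protocol processes losers. Each switch weakly increases revenue pointwise in $\bm v$ and does not affect any earlier decision, since those earlier decisions were all taken before the switch point. Because each switch changes only the shill's own visible action from ``fail to open'' to ``open'', safety is preserved: the bidder's view coincides with an honest execution in which that participant bids $b_j$ honestly. Taking expectations over $\bm v$ gives the weak revenue inequality.

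The step we expect to be the main obstacle is making the monotonicity of the threshold payment precise under the ranking of \Cref{def:ranking}. We need $p_{i^*} = \inf\{z : x_{i^*}(z,\bm b_{-i^*}) = 1\}$ to be weakly increasing in the competitor set. This holds because winning requires outranking every competitor, so enlarging the competitor set shrinks the winning set $\{z : x_{i^*}(z,\bm b_{-i^*}) = 1\}$, and the infimum of a smaller set can only rise. Ties and ironing cause no trouble here: the ranking is a strict total order, so the condition ``$z$ outranks $j$'' is upward closed in $z$ by the monotonicity clause of \Cref{def:ranking}, and the intersection of such upward-closed sets is again upward closed.

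For the relabelling claim, \Cref{def:ranking} makes the pairs $(j, r_j)$ strictly totally ordered, and distinct shills are distinct participants. We therefore sort the finitely many shills by their rank and index them accordingly. Shill identities carry no strategic meaning in the protocol beyond their bids and transformations, so renaming does not change revenue or safety. The resulting order is strict, $\ivv(r_1) < \cdots < \ivv(r_k)$, because the tie-breaking by index in \Cref{def:ranking} separates any two shills with equal numeric values.
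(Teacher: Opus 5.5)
Your proposal is correct and follows the paper's proof. Once the provisional winner opens its commitment at Step~\ref{step:reveal}, the winner is fixed, every remaining shill is outbid, and \Cref{lem:reveal_outbid} makes revealing weakly dominant; the relabelling is simply a sort of the shills by the strict total order of \Cref{def:ranking}. You spell out more than the paper does: the shill-winner case, the monotonicity of the threshold payment in the competitor set, and the safety of the modified deviation. The one thing to tidy is that the relabelling should be read as purely notational, because participant indices enter the last tie-breaking clause of \Cref{def:ranking}; your remark that shill identities carry no strategic meaning is therefore not literally true, though nothing in your argument depends on it.
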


\begin{proof}
	The ordering is immediate: the ranking is a strict total order (\Cref{def:ranking}), so it linearly orders any finite set of shills whatever transformations $\{\ivv_j\}_{j \in S}$ they were committed to.

	For the reveal-after-reveal property, suppose some bid has been revealed.
	The winner is then already determined, so every remaining shill $j$ satisfies $\ivv_j(b_j) < \beta_j(\bm b)$, and \Cref{lem:reveal_outbid} says revealing $b_j$ weakly dominates aborting.
\end{proof}

\subsection{DSIC Property of the SRA}

We now establish that when all shills are revealed, the SRA reduces to a DSIC mechanism.

\begin{theorem}[DSIC under full revelation]
	\label{thm:dsic_full_reveal}
	For a refined safe deviation of the Sequential Revelation Auction, if the auctioneer reveals all shills (i.e., never aborts), then this deviation is Dominant Strategy Incentive Compatible (DSIC) for the real bidders.
\end{theorem}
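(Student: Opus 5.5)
The plan is to show that, once the auctioneer commits to revealing every shill, the SRA run by any real bidder $i$ coincides outcome-by-outcome with the Myerson mechanism of \Cref{def:myerson_mech} on the participant set $N \cup S$. Here the shill bids $\{r_j\}_{j \in S}$ and their committed transformations $\{\ivv_j\}_{j \in S}$ are treated as fixed reports $\bm b_{-i}$ of additional participants. DSIC of that mechanism for every participant then gives DSIC for the real bidders via \Cref{thm:myerson_char}. The important point is that a refined deviation that never aborts is a \emph{non-adaptive} strategy. The shill bids and transformations are fixed in the setup and commit phases, before anything about $v_i$ is revealed. By the hiding property of the commitments, and because the gadget output depends only on the committed inputs, these choices cannot depend on $i$'s report. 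So from $i$'s perspective $\bm b_{-i}$ is a fixed profile, which is exactly the quantifier in the DSIC definition.

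The key steps, in order, are as follows. First, the allocation. With no aborts by shills, and assuming the real bidders follow the protocol, the gadget is invoked once. By \Cref{def:wdg} and \Cref{obs:shill_wins_highest}, it returns the highest-ranked participant with non-negative ironed virtual value under the ranking of \Cref{def:ranking}. This is $x^*(\bm b)$. The provisional winner reveals and becomes the final winner, so the allocation equals the Myerson allocation on $N \cup S$. Second, the payment. The winner deposits $b_{i^*}$. All other bids are then revealed, since shills reveal by assumption and losing real bidders reveal because doing so recovers collateral $\pen$ with no effect on their own allocation. The refund at Step~\ref{step:refund} is $b_{i^*} - p_{i^*}(\bm b)$, where $p_{i^*}$ is the threshold payment computed on the full active set. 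The net transfer is therefore exactly $p_{i^*}(\bm b)$, and losers pay nothing and recover their collateral. Third, monotonicity. I would invoke the remark after \Cref{def:myerson_mech}: the winning conditions (outranking the best fixed competitor and $\ivv_i(b_i)\ge 0$) are upward closed in $b_i$ by the monotonicity clause of \Cref{def:ranking}. This holds regardless of what transformations the shills chose, since the shills only enter through a fixed competitor rank. Combined with threshold payments, \Cref{thm:myerson_char} gives DSIC.

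The main obstacle is that a bidder's strategy space in the protocol is larger than reporting $v'_i$. A bidder may also abort during the gadget, refuse to reveal as provisional winner, or withhold an opening as a loser. Each of these must be shown to be weakly dominated by truthful participation. For a loser, withholding forfeits $\pen$ and leaves the allocation unchanged. For the provisional winner, refusing triggers exclusion and loss of $\pen$, yielding utility $-\pen \le 0 \le v_i - p_i(\bm b)$, where the last inequality is individual rationality of the threshold payment when bidding truthfully. Aborting in the gadget likewise yields $-\pen$. I would handle this by reducing every protocol strategy of $i$ to either a report $v'_i$ (an effective direct-revelation deviation, covered by Myerson DSIC) or an exit with utility $-\pen$, which is dominated by truthful reporting with utility at least $0$. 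Care is needed to argue that a committed value $v'_i$ together with the reveal/abort choices cannot emulate any non-monotone allocation, which follows because the shills never abort and $\bm b_{-i}$ is fixed.
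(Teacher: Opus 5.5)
Your proposal is correct and follows the paper's proof. With no aborts, the SRA's outcome is the Myerson mechanism of \Cref{def:myerson_mech} on $N \cup S$, with the shills' committed bids and transformations treated as fixed reports; its allocation is monotone by \Cref{def:ranking}, its net transfer is the threshold payment, and \Cref{thm:myerson_char} then gives DSIC. Your two extra checks go beyond the paper's proof of this theorem: that the shill reports cannot depend on $i$'s bid, and that exiting the protocol is dominated. The paper proves the second of these separately as \Cref{prop:bidder_ir}.
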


\begin{proof}
	Under full reveal no participant is excluded, so the allocation is exactly that of \Cref{def:myerson_mech} applied to $A = N \cup S$: the item goes to the highest-ranked participant with non-negative ironed virtual value, and the winner pays the least bid that would still beat $\beta_i(\bm b)$.
	The allocation is monotone in each real bidder's own bid by the monotonicity of \Cref{def:ranking}, and its payments are the induced threshold payments, so \Cref{thm:myerson_char} gives that the deviation is DSIC for the real bidders.
\end{proof}

We close this section by verifying that honest behavior is also optimal for the real bidders in the \emph{extensive form}, not merely in the reported values.
Two steps of Algorithm~\ref{alg:credible_protocol} warrant this check: the provisional winner must escrow the full $v_{i^*}$ before learning the price, and the losers must open commitments that serve only to raise the winner's payment.

\begin{proposition}[Honest play is weakly dominant for real bidders]
	\label{prop:bidder_ir}
	Fix $\pen > 0$ and a real bidder $i \in N$ participating in the SRA.
	Conditional on reaching any information set of Algorithm~\ref{alg:credible_protocol}, following the protocol weakly dominates aborting.
\end{proposition}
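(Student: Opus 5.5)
The plan is to walk through the information sets of Algorithm~\ref{alg:credible_protocol} at which a real bidder $i$ can deviate, and at each compare the continuation utility of following the protocol with that of aborting. Throughout, I use that an abort by $i$ always forfeits the collateral $\pen > 0$ (Step~\ref{step:burn}), and that after any abort $i$ is excluded, so $i$ never receives the item and pays nothing further. Aborting therefore yields continuation utility exactly $-\pen$ relative to the deposit, or $-\pen - v_i$ if it happens after $i$ escrowed $v_i$. It suffices to show that following the protocol yields continuation utility at least $0$ beyond recovering $\pen$. In addition, following the protocol must never risk losing $\pen$.

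The cases follow the protocol's progress.

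\textbf{(a) Commit phase and WDG invocation.} Before or during $\Pi_{\text{WDG}}$, honest participation leads, at worst, to a later information set covered by the cases below. Any such path ends either with $i$ losing, which refunds $\pen$ and gives utility $0$, or with $i$ winning at a threshold price. An inconsistent opening or an MPC abort is excluded and burns $\pen$ (\Cref{def:wdg}), so it is dominated.

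\textbf{(b) $i$ is announced as provisional winner.} Following the protocol, $i$ escrows $v_i$ and is refunded $v_i - p_i(v_A)$. Its utility is $v_i - p_i(v_A) \ge 0$, because the threshold payment of \Cref{def:myerson_mech} is at most $v_i$: $i$ wins at its own bid, so the infimum in $p_i$ is at most $v_i$. This holds for whatever set $A$ survives, since later exclusions of losers only remove competitors and weakly lower the threshold. The collateral is also returned. Refusing to reveal instead gives $-\pen < 0$.

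\textbf{(c) $i$ is not the winner and is asked to open its commitment.} Revealing returns $\pen$ and gives allocation utility $0$. Aborting gives $-\pen$, and by Step~\ref{step:reveal} no restart occurs, so $i$ cannot regain the item.

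\textbf{(d) $i$ has escrowed $v_i$ and others are revealing.} Aborting forfeits both the escrow and $\pen$, whereas continuing yields $v_i - p_i \ge 0$ plus the refund of $\pen$.

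The step I expect to need the most care is (b). There the provisional winner commits $v_i$ \emph{before} seeing the price, so I must argue that $p_i(v_A) \le v_i$ holds for \emph{every} continuation. That includes continuations where shills selectively reveal or abort (\Cref{def:refined_deviation}), since a weak-dominance statement must hold against any behavior of the others. I would handle this with the price-floor argument of \Cref{lem:reveal_outbid}, run in reverse. Whatever subset of losers ends up revealed, $i$ still outranks all of them, because $i$ outranked the full set when the gadget named it. So $i$ wins at bid $v_i$ against the final competitor set, and by the definition of the threshold payment, $p_i(v_A) \le v_i$. With that established, every information set compares a nonnegative continuation against a strictly negative one, which proves weak dominance.
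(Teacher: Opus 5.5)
Your proof is correct and follows essentially the same route as the paper's. Aborting forfeits $\pen$ and any allocation, while continuing yields $0$ as a loser or $v_i - p_i(\bm v_A) \ge 0$ as the provisional winner, by individual rationality of the threshold payment; your extra cases (the WDG phase and the post-escrow stage) and your check that $p_i(\bm v_A) \le v_i$ survives later exclusions of losers spell out details the paper's two-case argument leaves implicit.
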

\begin{proof}
	A bidder who aborts forfeits the collateral $\pen$ and is excluded from the continuation, so it suffices to show that continuing yields a non-negative payoff.
	There are two cases.

	If $i$ is a loser, then $i$ receives no allocation and makes no payment whether or not they open their commitment: their payoff from continuing is $0$, while aborting costs $\pen > 0$.
	(Real bidders are not revenue-maximizers, so the fact that opening raises the winner's payment does not enter $i$'s utility.)

	If $i$ is the provisional winner, continuing means escrowing $v_i$ and later receiving the refund $v_i - p_i(\bm{v}_A)$, for a net payoff of $v_i - p_i(\bm{v}_A) \ge 0$ by individual rationality of the Myerson payment rule: $p_i(\bm{v}_A)$ is the least bid beating $\beta_i(\bm{v}_A)$, which is at most $v_i$ whenever $i$ wins.
	Aborting instead forfeits $\pen$ and yields no allocation, for a payoff of $-\pen < 0$.
\end{proof}

\section{Example: Equal Revenue Distribution}\label{sec:example}

We now illustrate credibility for the canonical equal revenue distribution $F(v) = 1 - 1/v$ ($v \geq 1$).
This distribution serves as a critical stress test for credibility due to two distinct properties:
\begin{enumerate}
	\item \textbf{Heavy Tail:}
	      The polynomial tail $\pr{v > t} = 1/t$ implies that high valuations occur with probability decaying only polynomially.
	      This allows shill-bidding strategies to chase huge payouts without the risk vanishing exponentially fast.
	\item \textbf{Constant Revenue:}
	      The ``equal revenue'' property ($r \cdot \pr{v \geq r} = 1$) implies that every reserve price yields the identical expected revenue of $1$.
	      The distribution is ``flat'' with respect to optimization.
\end{enumerate}
\citet{FerreiraWeinberg} demonstrate that for deferred revelation auctions (without MPC gadgets), \emph{no finite penalty} suffices for this distribution---the heavy tail allows geometric shill strategies to extract unbounded profit.
As a warmup, we show how a penalty of $\pen \ge 1$ prevents unbounded profit for a natural class of shill strategies under the SRA.
In \Cref{sec:lower}, we prove this result is tight.

\begin{example}[Warmup: Equal Revenue Distribution]
	\label{ex:equal_revenue_warmup}
	Consider a single real bidder with valuation drawn from the equal revenue distribution $F(v) = 1 - 1/v$ for $v \geq 1$.
	Suppose the auctioneer employs a safe deviation using $k$ shill bids $r_1 < r_2 < \cdots < r_k$, where they hide $r_i, \ldots, r_k$ if and only if $v \in [r_{i-1}, r_i)$.
	With a penalty $\pen \geq \rev(F) = 1$, we show this deviation yields expected revenue at most $1$.

	Without loss of generality, $r_j \geq 1$ for all $j$ (shills below the support never win).
	Set $r_0 = 1$ (the lower bound of the support) and $r_{k+1} = \infty$.
	For the equal revenue distribution, the mechanism allocates to the highest bidder (any reserve $r \geq 1$ is optimal).

	When $v \in [r_{i-1}, r_i)$ the auctioneer hides $r_i, \ldots, r_k$ (each incurring penalty $\pen$), while shills $r_1, \ldots, r_{i-1}$ are outbid and revealed as price floors (by \Cref{lem:reveal_outbid}).
	The bidder wins and pays $r_{i-1}$, giving ex-post revenue: \[ R(v) = r_{i-1} - (k - i + 1) \cdot \pen.
	\]

	We decompose the expected revenue into gross payments minus penalties.
	Using $\pr{v \in [r_{i-1}, r_i)} = 1/r_{i-1} - 1/r_i$, the gross payment term telescopes: \[ \text{Gross} = \sum_{i=1}^{k+1}\left(\frac{1}{r_{i-1}} - \frac{1}{r_i}\right) r_{i-1} = \sum_{i=1}^{k+1}\left(1 - \frac{r_{i-1}}{r_i}\right).
	\]
	For the penalties, each shill $r_j$ is penalized whenever it \emph{wins}, i.e., when $v < r_j$, which occurs with probability $1 - 1/r_j$.
	Thus the total expected penalty is $\pen \sum_{j=1}^{k}(1 - 1/r_j)$.
	Since the penalty term is non-decreasing in $\pen$, it suffices to take $\pen = 1$; separating the last term of the gross sum ($1 - r_k/r_{k+1} = 1$),
	\begin{align*}
		\e{R(v)} & = 1 + \sum_{j=1}^{k}\left[\underbrace{\left(1 - \frac{r_{j-1}}{r_j}\right)}_{\text{gross from interval } j} - \underbrace{\left(1 - \frac{1}{r_j}\right)}_{\text{penalty for shill } r_j}\right] = 1 + \sum_{j=1}^{k} \frac{1 - r_{j-1}}{r_j}.
	\end{align*}
	Since $r_{j-1} \geq r_0 = 1$ for all $j \geq 1$, each summand is non-positive, giving: \[ \e{R(v)} \leq 1 = \rev(F).
	\]
\end{example}

\begin{standalonebib}
	\bibliographystyle{ACM-Reference-Format}
	\bibliography{mybib}
\end{standalonebib}

\section{Bounding Information Leakage Under Abort}\label{sec:info_leakage}

By Myerson's characterization, the expected revenue of any DSIC mechanism equals its expected virtual surplus.
However, this classical equivalence holds only when the expectation is taken over the full, unconditioned prior distribution.
In our credibility analysis, the auctioneer's decision to abort and restart the auction hinges on a conditional event---specifically, learning that the highest real bid is below a certain threshold.
When we condition on such an upper-bound event---the event $E_r$ of \Cref{lem:information_leakage} below, on which no real bidder outranks the shill $r$---the boundary terms in Myerson's integration by parts no longer vanish, and expected revenue can exceed expected virtual surplus.

The \emph{Information Leakage Lemma} quantifies precisely how much the conditional expected revenue can exceed the conditional expected virtual surplus.
Crucially, we prove that this excess ``leakage'' is bounded by $\sum_{i=1}^n \rev(F_i)$ (the sum of the optimal monopoly revenues for each bidder).
This bounds the maximum option value the auctioneer can extract from restarting, dictating the minimum penalty $\pen$ required to deter deviations.

\paragraph*{Intuition via a single bidder.}
Consider a single bidder with valuation $v \sim F$, reserve price $r^*$, allocation rule $x(v) = \ind{v \ge r^*}$, and payment rule $p(v) = r^* \ind{v \ge r^*}$.
Under DSIC, the utility of a bidder of type $v$ is given by the envelope formula $u(v) = \int_0^v x(t)\, dt$, which represents the bidder's accumulated information rent.
Myerson's identity shows that the unconditional expected revenue is $\rev(F) = r^* (1 - F(r^*)) = \e{\phi(v)^+}$, where $\phi(v)$ is the virtual value.
Now, condition on the upper-bound event $E = \{v \le c\}$ for some threshold $c > r^*$.
Working with unnormalized expectations throughout, the truncated expected payment is $\e{p(v) \ind{v \le c}} = \e{\left(v x(v) - \int_0^v x(t)\, dt\right) \ind{v \le c}}$.
When we perform integration by parts on the truncated information rent term $\int_0^c f(v) \int_0^v x(t)\, dt\, dv$, the boundary term at $v = c$ does not vanish, yielding: \[ \int_0^c f(v) \int_0^v x(t)\, dt\, dv = F(c) \int_0^c x(t)\, dt - \int_0^c F(v) x(v)\, dv.
\]
Rearranging terms, the truncated expected revenue satisfies:
\[ \e{p(v) \ind{v \le c}} = \e{\phi(v) x(v) \ind{v \le c}} + (1-F(c)) \int_0^c x(t)\, dt. \]
The second term on the right-hand side is the non-vanishing boundary term representing the information leakage: the utility $u(c) = \int_0^c x(t)\, dt = c - r^*$ of the boundary type $c$ weighted by the probability of exceeding $c$ in the unconditioned world.
Crucially, this leakage is bounded by \[ (1 - F(c))\,(c - r^*) \;\le\; c\,(1 - F(c)) \;\le\; \rev(F), \] where the last inequality holds because posting the price $c$ cannot beat the optimal monopoly price.
The Information Leakage Lemma generalizes this analysis to $n$ bidders.
With several bidders, the same integration by parts produces the virtual value $\phi_i$ of each bidder on a truncated domain, and the proof replaces it there by the ironed virtual value $\ivv_i$, which is the content of the following identity.

\begin{lemma}[Ironing Equivalence; {\citet{Myerson81}}]\label{lem:ironing_equivalence}
	Let $F_i$ be a distribution with virtual value function $\phi_i$ and ironed virtual value function $\ivv_i$.
	Let $x_i$ be an allocation rule such that $x_i(\cdot, \bm{v}_{-i})$ is constant on every ironed interval of $F_i$.
	Let $c$ be any threshold that does not lie in the strict interior of an ironed interval of $F_i$.
	Then for any fixed $\bm{v}_{-i}$, $$\int_0^c x_i(v_i, \bm{v}_{-i})\, \phi_i(v_i)\, f_i(v_i)\, dv_i = \int_0^c x_i(v_i, \bm{v}_{-i})\, \ivv_i(v_i)\, f_i(v_i)\, dv_i.
	$$
\end{lemma}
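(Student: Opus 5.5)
I would prove the identity in quantile space, where the ironing construction is defined. Fix $\bm v_{-i}$ and abbreviate $x(v) = x_i(v, \bm v_{-i})$. Substitute $q = F_i(v)$, write $\hat x(q) = x(F_i^{-1}(q))$, and set $q_c = F_i(c)$. Since $dq = f_i(v)\,dv$, the claim becomes
\[
\int_0^{q_c} \hat x(q)\, H'(q)\, dq = \int_0^{q_c} \hat x(q)\, G'(q)\, dq ,
\]
where $H' = \phi_i \circ F_i^{-1}$ and $G' = \ivv_i \circ F_i^{-1}$ by the definition of the ironed virtual valuation. Equivalently, I want $\int_0^{q_c} \hat x \, d(H - G) = 0$. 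I would use the Stieltjes form so that $G$ is only required to be convex, with a monotone right derivative.

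The key step is integration by parts:
\[
\int_0^{q_c} \hat x\, d(H-G) = \hat x(q_c)\,\bigl(H(q_c) - G(q_c)\bigr) - \hat x(0)\,\bigl(H(0) - G(0)\bigr) - \int_0^{q_c} (H - G)\, d\hat x .
\]
Both boundary terms vanish, for the following reasons.

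At $q = 0$ we have $H(0) = 0$. Moreover $G(0) = H(0)$, because the convex hull agrees with $H$ at the endpoints of $[0,1]$.

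At $q_c$, the hypothesis says $c$ is not in the strict interior of an ironed interval. Since ironed intervals are exactly the maximal sets on whose interior $G < H$, this gives $G(q_c) = H(q_c)$.

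For the remaining integral, the measure $d\hat x$ is supported where $\hat x$ changes, while $H - G$ is nonzero only on the interiors of ironed intervals, where $\hat x$ is constant by hypothesis. So the integrand vanishes $d\hat x$-almost everywhere, and the whole expression is zero.

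I expect the main obstacle to be regularity: $\hat x$ need not be monotone or of bounded variation, so the Stieltjes integral against $d\hat x$ may not make sense. To avoid this I would decompose directly rather than integrate by parts globally. Split $[0, q_c]$ into the ironed intervals $(a_k, b_k)$ contained in it and their complement.

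On the complement, $G' = H'$ almost everywhere. This holds because $G = H$ there, and two functions that agree on a set have equal derivatives at its density points.

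On each $(a_k, b_k)$, $\hat x$ equals a constant $x_k$. So
\[
\int_{a_k}^{b_k} \hat x\,(H' - G') = x_k\bigl[(H - G)(b_k) - (H - G)(a_k)\bigr] = 0,
\]
since $G = H$ at the endpoints. Because $q_c$ is not interior to any ironed interval, each interval meeting $[0, q_c]$ lies entirely inside it.

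Summing over the countably many intervals requires absolute integrability. I would justify this from integrability of $\hat x\,\phi_i f_i$ on $[0, c]$ together with boundedness of $\hat x \in [0,1]$, and finish by dominated convergence. I also need $H$ to be absolutely continuous, which holds since $H$ is an integral, and $G$ is Lipschitz on compact subintervals of $(0,1)$.

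Finally, an atom or a flat region of $F_i$ would make $F_i^{-1}$ ambiguous. Such regions carry zero $f_i$-mass, so they do not affect either side of the identity.
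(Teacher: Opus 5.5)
Your proposal is correct and follows the paper's argument: partition $[0,c]$ into complete ironed intervals (possible because $c$ is not interior to one) and the region where $\phi_i = \ivv_i$, then factor the constant allocation out of each ironed interval. Your per-interval identity in quantile space, via the fundamental theorem of calculus and $G = H$ at the endpoints, is the paper's ``$\ivv_i$ is the conditional mean of $\phi_i$ on the interval'' in another form, and your integrability and countable-summation remarks fill in details the paper leaves to \citet{Myerson81}.
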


This identity holds because $x_i$ is constant on any ironed interval $I = [\underline{v}, \overline{v}]$ of $F_i$ by hypothesis.
Thus, on each complete ironed interval, the allocation rule $x_i(v_i, \bm{v}_{-i})$ can be factored out of the integral.
Since $\ivv_i$ is defined as the conditional expectation of $\phi_i$ with respect to the distribution $F_i$ on that interval, we have $\int_I \phi_i(v_i) f_i(v_i)\, dv_i = \ivv_i(v) \int_I f_i(v_i)\, dv_i$ for any $v \in I$.
Because the threshold $c$ never bisects an ironed interval, the domain $[0, c]$ is partitioned into complete ironed intervals and regions where $\phi_i = \ivv_i$.
Thus, the equivalence holds over the truncated domain; see \citet[Section~4]{Myerson81}.

We state the lemma for the class of mechanisms that arise as continuations of the SRA after some shills have been revealed: the Myerson optimal auction augmented with a price floor.
Formally, for a shill bid $r \ge 0$ let $M^{(r)}$ denote the DSIC mechanism with allocation rule \[ x_i(\bm{v}) = \ind{\ivv_j(v_j) < \ivv_i(v_i) \ \forall j \in N \setminus \{i\}} \cdot \ind{\ivv(r) < \ivv_i(v_i)} \cdot \ind{\ivv_i(v_i) \ge 0} \] and the induced threshold payments, all comparisons being taken in the ranking of \Cref{def:ranking}.
Placing $r$ below every participant recovers the Myerson optimal auction, and $M^{(r_i)}$ is the continuation in which the shills above $r_i$ have been aborted and $r_1 < \cdots < r_i$ remain, to be revealed as price floors.
Write $R^{(r)}(\bm v) = \sum_{i=1}^n p_i(\bm v)$ and $\vs^{(r)}(\bm v) = \sum_{i=1}^n x_i(\bm v)\, \ivv_i(v_i)$ for the ex-post revenue and the ex-post ironed virtual surplus of $M^{(r)}$.

\begin{lemma}[Information Leakage Lemma]\label{lem:information_leakage}
	Let there be $n$ bidders with independent valuations drawn from distributions $F_1, \dots, F_n$ with vanishing revenue tails, $\lim_{v \to \infty} v\,(1 - F_i(v)) = 0$.
	For a shill bid $r$ with $\ivv(r) \ge 0$, define the event \[ E_r = \{\bm{v} \mid \forall i,\, \ivv_i(v_i) \le \ivv(r)\} \] on which no real bidder outranks $r$.
	Let $R(\bm{v}) = \sum_{i=1}^n p_i(\bm{v})$ and $\vs(\bm{v}) = \sum_{i=1}^n x_i(\bm{v}) \ivv_i(v_i)$ denote the ex-post revenue and ex-post ironed virtual surplus, respectively, of $M^{(r')}$ for any floor $r' \le r$.
	Then, \[\e{R(\bm{v}) \ind{E_r}} \le \e{\vs(\bm{v}) \ind{E_r}} + M \cdot \pr{E_r}\] where $M = \sum_{i=1}^n \rev(F_i)$ and $\rev(F_i)$ is the expected revenue of the optimal monopoly posted price for distribution $F_i$.
\end{lemma}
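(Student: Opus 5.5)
Our plan is to follow the single-bidder intuition above bidder by bidder: decompose $R = \sum_i p_i$, fix $\bm v_{-i}$, and run Myerson's integration by parts on the truncated domain of $v_i$ cut out by $E_r$. Then we identify the non-vanishing boundary term as the information leakage and bound it by $\rev(F_i)$ times the probability of the remaining constraints.

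\textbf{Step 1 (slicing).} Write $\ind{E_r} = \ind{\ivv_i(v_i) \le \ivv(r)} \cdot \prod_{j \ne i} \ind{\ivv_j(v_j) \le \ivv(r)}$. The second factor depends only on $\bm v_{-i}$, so for fixed $\bm v_{-i}$ the event is $\{v_i \le c_i\}$ with $c_i = \sup\{v : \ivv_i(v) \le \ivv(r)\}$. Here "$\le$" is in the ranking of \Cref{def:ranking}. Because a participant's rank is constant on each ironed interval, this set is a union of whole ironed intervals, so $c_i$ is never interior to an ironed interval of $F_i$. This is exactly the hypothesis of \Cref{lem:ironing_equivalence}.

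\textbf{Step 2 (truncated payment identity).} Since $M^{(r')}$ is DSIC, \Cref{thm:myerson_char} gives $p_i = v_i x_i - \int_0^{v_i} x_i(t,\bm v_{-i})\,dt$. Repeating the integration by parts from the intuition paragraph on $[0,c_i]$ yields
\[
\int_0^{c_i} p_i f_i\,dv_i = \int_0^{c_i} x_i \phi_i f_i\,dv_i + (1-F_i(c_i))\int_0^{c_i} x_i(t,\bm v_{-i})\,dt .
\]
If $c_i = \infty$, the vanishing-tail hypothesis kills the boundary term, as in \Cref{lem:myerson}. The allocation $x_i$ of $M^{(r')}$ depends on $v_i$ only through $i$'s rank, so it is constant on ironed intervals. \Cref{lem:ironing_equivalence} therefore replaces $\phi_i$ by $\ivv_i$, and the first term becomes the slice of $\e{\vs\,\ind{E_r}}$ contributed by bidder $i$.

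\textbf{Step 3 (bounding the leakage).} The boundary term is $(1-F_i(c_i))\,u_i(c_i)$, where $u_i(c_i) = \int_0^{c_i} x_i\,dt \le c_i$. Hence it is at most $c_i(1-F_i(c_i)) \le \rev(F_i)$, because posting price $c_i$ cannot beat the monopoly price. Multiplying by $\prod_{j\ne i}\ind{\ivv_j(v_j)\le\ivv(r)}$ and taking expectations over $\bm v_{-i}$ bounds bidder $i$'s leakage by $\rev(F_i)\cdot\pr{\forall j\ne i:\ivv_j(v_j)\le \ivv(r)}$. Summing over $i$ gives the claim, except that the probability we obtain omits bidder $i$'s own factor $F_i(c_i)$, whereas the target carries the full $\pr{E_r}$.

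\textbf{Main obstacle.} The hard step is recovering $\pr{E_r}$ rather than $\pr{E_r^{-i}}$. A cruder bound with $\pr{E_r^{-i}}$ is weaker than stated, so the boundary term must be sharpened to $c_i(1-F_i(c_i))F_i(c_i)$ or better. The plan is to use that $u_i(c_i)$ is really $c_i - \tau_i$, where $\tau_i$ is the threshold bid. One then shows $(1-F_i(c_i))(c_i - \tau_i) \le \rev(F_i)\,F_i(c_i)$ by comparing prices $c_i$ and $\tau_i$: since $\tau_i$ clears the floor $r' \le r$ and the reserve, $(c_i-\tau_i)(1-F_i(c_i)) \le c_i(1-F_i(c_i)) - \tau_i(1-F_i(\tau_i)) + \tau_i(F_i(c_i)-F_i(\tau_i))$, and the last two terms are bounded using $\rev(F_i)\ge \tau_i(1-F_i(\tau_i))$. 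If this sharpening fails, the fallback is to absorb the missing factor by noting that on $v_i > c_i$ the bidder outranks $r$, so one can charge the leakage against $\pr{E_r}$ via a conditioning argument on $\max_j$ rank.
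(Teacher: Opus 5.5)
There is a genuine gap at the one step that makes the lemma work. Your Steps 1 and 2 are the paper's argument: the product form of $E_r$, truncated integration by parts leaving the boundary term $(1-F_i(c_i))\int_0^{c_i}x_i\,dt$, and the Ironing Equivalence. You also correctly see that the crude bound $c_i(1-F_i(c_i))\le\rev(F_i)$ only yields $\sum_i \rev(F_i)\,\pr{E_r^{-i}}$, which is too weak; the penalty accounting downstream needs exactly $M\cdot\pr{E_r}$.

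The fix you propose does not close the gap. Your displayed inequality
\[
(c_i-\tau_i)(1-F_i(c_i)) \le c_i(1-F_i(c_i)) - \tau_i(1-F_i(\tau_i)) + \tau_i\bigl(F_i(c_i)-F_i(\tau_i)\bigr)
\]
is actually an identity: both sides equal $c_i(1-F_i(c_i))-\tau_i(1-F_i(c_i))$. So after bounding $c_i(1-F_i(c_i))\le\rev(F_i)$, what remains is $\tau_i(1-F_i(c_i))\ge\rev(F_i)(1-F_i(c_i))$. That is a \emph{lower} bound $\tau_i\ge\rev(F_i)$ on the threshold.

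The fact you cite, $\rev(F_i)\ge\tau_i(1-F_i(\tau_i))$, only gives $-\tau_i(1-F_i(\tau_i))\ge-\rev(F_i)$, which points the wrong way. It also holds for $\tau_i=0$, where the target is false: with no reserve and $c_i$ equal to the monopoly price, the boundary term is $\rev(F_i)>\rev(F_i)F_i(c_i)$. So no argument built on that inequality can work. Your fallback, a conditioning argument on $\max_j$ rank, is too unspecified to evaluate.

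The missing idea is that the reserve inside $M^{(r')}$ makes the threshold large in absolute terms. Let $r_i^*:=\ivv_i^{-1}(0)$, the monopoly price.
\begin{itemize}
\item Since $x_i$ requires $\ivv_i(v_i)\ge 0$, we have $x_i(t,\bm v_{-i})=0$ for $t<r_i^*$, so $u_i(c_i)\le c_i-r_i^*$.
\item We have $c_i\ge r_i^*$, because $\ivv(r)\ge 0$ and no bid with negative ironed virtual value outranks $r$.
\item We have $r_i^*\ge r_i^*(1-F_i(r_i^*))=\rev(F_i)$.
\end{itemize}
Hence, pointwise in $\bm v_{-i}$,
\[
(1-F_i(c_i))(c_i-r_i^*) \le \rev(F_i) - r_i^*(1-F_i(c_i)) \le \rev(F_i) - \rev(F_i)(1-F_i(c_i)) = \rev(F_i)\,F_i(c_i).
\]
Multiplying by the indicator that $v_j<c_j$ for all $j\ne i$ and integrating gives $\rev(F_i)\,F_i(c_i)\prod_{j\ne i}F_j(c_j)=\rev(F_i)\,\pr{E_r}$. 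This is exactly how the paper bounds $M_i$.

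You noted in passing that $\tau_i$ clears the reserve, but the consequence you need is $\tau_i\ge r_i^*\ge\rev(F_i)$, not a comparison of posted-price revenues. The floor $r'$ only raises $\tau_i$ further and plays no role in the bound.
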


Observe the event $E_r$ is a product of half-lines, $E_r = \prod_{i \in N} \{v_i < c_i\}$, where $c_i := \inf\{z : \ivv(r) < \ivv_i(z)\}$ is the least bid at which $i$ would overtake $r$ in the ranking of \Cref{def:ranking}; the set of bids at which $i$ outranks $r$ is an up-set by the monotonicity of the ranking, which is what makes each factor a half-line.
This product form is the only property of $E_r$ the proof uses.
Note that $c_i$ need not be $\ivv_i^{-1}(\ivv(r))$: when $\ivv_i$ is flat at the level $\ivv(r)$, the tie is settled on the ironed bid, so $c_i$ is $r$ on a flat piece of a regular $\phi_i$ and an endpoint of the ironed interval at that level otherwise.
In neither case does $c_i$ lie in the interior of an ironed interval of $F_i$: on an ironed interval the rank of $i$ is constant, so $i$ outranks $r$ either throughout it or nowhere in it, and the infimum $c_i$ cannot fall strictly inside.

\submission{The proof of \Cref{lem:information_leakage} is deferred to the full version.}
\arxiv{
	\begin{proof}
		Recall the thresholds $c_i$ and the product form $E_r = \{\bm{v} : v_i < c_i \text{ for all } i\}$.
		If $\pr{E_r} = 0$ both sides of the inequality vanish, so assume $F_i(c_i) > 0$ for every $i$.
		We write the integrals below over $[0, c_i]$; the endpoint carries no mass, since $F_i$ has a density.
		We prove the lemma in two steps: an exact identity, \eqref{eq:revenue_identity}, expressing the truncated revenue as the truncated ironed virtual surplus plus one boundary term per bidder, and a bound of each boundary term by $\rev(F_i)$.

		Fix a bidder $i$ and a profile $\bm{v}_{-i}$ with $v_j < c_j$ for all $j \neq i$.
		By Myerson's characterization (\Cref{thm:myerson_char}), the payment of $i$ in the DSIC mechanism $M^{(r')}$ is $p_i(\bm{v}) = v_i x_i(\bm{v}) - \int_0^{v_i} x_i(t, \bm{v}_{-i})\, dt$, so its expectation truncated at $v_i \le c_i$ is
		$$\int_0^{c_i} p_i(v_i, \bm{v}_{-i}) f_i(v_i)\, dv_i = \int_0^{c_i} \left( v_i x_i(v_i, \bm{v}_{-i}) - \int_0^{v_i} x_i(t, \bm{v}_{-i})\, dt \right) f_i(v_i)\, dv_i.$$
		Integrating the double integral by parts, with $-(F_i(c_i) - F_i(v_i))$ as the antiderivative of $f_i(v_i)$, gives
		$$\int_0^{c_i} p_i(\bm{v}) f_i(v_i)\, dv_i = \int_0^{c_i} x_i(\bm{v}) \left( v_i - \frac{1 - F_i(v_i)}{f_i(v_i)} + \frac{1 - F_i(c_i)}{f_i(v_i)} \right) f_i(v_i)\, dv_i,$$
		where the bracket is the virtual value $\phi_i(v_i)$ plus the boundary correction $(1 - F_i(c_i))/f_i(v_i)$.
		Next, we replace $\phi_i$ by $\ivv_i$ under the allocation rule.
		The Ironing Equivalence (\Cref{lem:ironing_equivalence}) applies once $x_i(\cdot, \bm{v}_{-i})$ is constant on every ironed interval of $F_i$ and $c_i$ lies outside the interior of every ironed interval; we check both.
		Note $x_i(\cdot, \bm{v}_{-i})$ depends on $v_i$ only through the rank of $i$, which \Cref{def:ranking} keeps constant on each ironed interval of $F_i$.
		Note $c_i$ is never interior to an ironed interval of $F_i$, as observed after the statement.
		Thus the Ironing Equivalence gives
		$$\int_0^{c_i} p_i(\bm{v}) f_i(v_i)\, dv_i = \int_0^{c_i} x_i(\bm{v}) \ivv_i(v_i) f_i(v_i)\, dv_i + (1 - F_i(c_i)) \int_0^{c_i} x_i(v_i, \bm{v}_{-i})\, dv_i.$$
		The argument so far uses only two properties of $M^{(r')}$: that it ranks by \Cref{def:ranking}, so its allocation is constant on ironed intervals, and that it is DSIC with threshold payments.
		Neither singles out the Myerson optimal auction, which is why the lemma covers every continuation mechanism arising in \Cref{sec:credible}.
		The last integral is $u_i(c_i, \bm{v}_{-i}; c_i)$, the ex-post utility of type $c_i$ when the other bidders report $\bm{v}_{-i}$.
		Taking the expectation over $\bm{v}_{-i}$ subject to $v_j < c_j$ for all $j \neq i$ and summing over the $n$ bidders gives the identity
		\begin{equation}\label{eq:revenue_identity}
			\e{R(\bm{v}) \ind{E_r}} = \e{\vs(\bm{v}) \ind{E_r}} + \sum_{i=1}^n (1 - F_i(c_i))\, \e{u_i(c_i, \bm{v}_{-i}; c_i) \ind{\bm{v}_{-i} \in E_r}},
		\end{equation}
		where $\bm{v}_{-i} \in E_r$ abbreviates $v_j < c_j$ for all $j \neq i$.

		Next, we bound the boundary terms.
		Dividing \eqref{eq:revenue_identity} by $\pr{E_r} = \prod_{j=1}^n F_j(c_j)$, it suffices to show $\sum_{i=1}^n M_i \le \sum_{i=1}^n \rev(F_i)$, where
		$$M_i := \frac{1 - F_i(c_i)}{F_i(c_i)}\, \e{u_i(c_i, \bm{v}_{-i}; c_i) \mid \bm{v}_{-i} \in E_r}.$$
		Write $r_i^* := \ivv_i^{-1}(0)$ for the least bid of $i$ with non-negative ironed virtual value, which is the monopoly price of $F_i$, so that $\rev(F_i) = r_i^*(1 - F_i(r_i^*))$.
		In $M^{(r')}$ the allocation requires $\ivv_i(v_i) \ge 0$, so $x_i(t, \bm{v}_{-i}) = 0$ for every $t < r_i^*$; a higher floor only lowers the utility bounded below, so it suffices to argue with $r_i^*$.
		Note $c_i \ge \ivv_i^{-1}(\ivv(r)) \ge \ivv_i^{-1}(0) = r_i^*$, since no bid $z$ with $\ivv_i(z) < \ivv(r)$ outranks $r$ and $\ivv(r) \ge 0$.
		Thus the utility of type $c_i$ is at most the margin above the reserve, $u_i(c_i, \bm{v}_{-i}; c_i) = \int_{r_i^*}^{c_i} x_i(t, \bm{v}_{-i})\, dt \le c_i - r_i^*$, and
		$$M_i \le \frac{(1 - F_i(c_i))(c_i - r_i^*)}{F_i(c_i)} = \frac{c_i(1 - F_i(c_i)) - r_i^*(1 - F_i(c_i))}{F_i(c_i)}.$$
		Note $c_i (1 - F_i(c_i)) \le \rev(F_i)$, since posting the price $c_i$ earns at most the monopoly revenue, and $\rev(F_i) = r_i^*(1 - F_i(r_i^*)) \le r_i^*$.
		Thus, substituting the first inequality in the first term of the numerator and the second in its subtracted term,
		$$M_i \le \frac{\rev(F_i) - \rev(F_i)(1 - F_i(c_i))}{F_i(c_i)} = \rev(F_i).$$
		Summing over $i$ gives $\sum_{i=1}^n M_i \le \sum_{i=1}^n \rev(F_i)$, and multiplying \eqref{eq:revenue_identity} back by $\pr{E_r}$ proves the lemma.
	\end{proof}
}

\section{Sequential Revelation Auction is Credible}\label{sec:credible}

The warmup (\Cref{sec:example}) demonstrates credibility for a restricted class of strategies against the equal-revenue distribution.
The Information Leakage Lemma (\Cref{sec:info_leakage}) establishes the quantitative tool: the excess revenue an auctioneer can extract from conditional information is bounded by $\sum_{i=1}^n \rev(F_i)$.
We combine these ingredients to prove credibility in full generality.

\begin{theorem}[Credibility of the sequential revelation auction]\label{thm:credibility}
	Consider a single-item auction with $n$ bidders with independent valuations drawn from product distribution $\bm{F}$ whose components have densities and vanishing revenue tails, $\lim_{v \to \infty} v\,(1 - F_i(v)) = 0$.
	Then the Sequential Revelation Auction (Algorithm~\ref{alg:credible_protocol}) with penalty $\pen \geq \sum_{i=1}^n \rev(F_i)$ is credible.
\end{theorem}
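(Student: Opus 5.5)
By \Cref{lem:refinement} we may restrict attention to refined safe deviations. Such a deviation has shills $r_1 < \cdots < r_k$ in the ranking, each with $\ivv(r_j) \ge 0$. Before any bid is revealed, the auctioneer's only decisions concern shills identified as provisional winners. The WDG reveals only the winner's identity, so when $r_j$ is announced as the provisional winner (shills above it having been aborted), the auctioneer has learned exactly the event $E_{r_j}$ and nothing more about payments.

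\textbf{Step 1: reduce to threshold strategies.} An adaptive strategy is a rule that, upon seeing $r_j$ win, either reveals $r_j$ or aborts it. The information available at that point is exactly the history ``$r_k, \ldots, r_{j+1}$ aborted, $r_j$ wins.'' This history is determined by the strategy together with $E_{r_j}$. Any randomization over reveal or abort at a given node can be replaced by its better pure branch. Hence a deterministic strategy is a threshold $r_m$: abort every shill above $r_m$ and reveal $r_m$ if it wins. If the descent reaches a real bidder, \Cref{lem:reveal_outbid} says all remaining shills are revealed, acting as price floors. For a non-top shill, aborting $r_j$ after having aborted the higher ones is the same as choosing a lower threshold. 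So each pure strategy is indexed by $m \in \{0, \ldots, k\}$, where $m = 0$ means every shill is aborted whenever it wins.

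\textbf{Step 2: decompose the revenue.} Fix threshold $m$. Write $D_j = E_{r_j} \setminus E_{r_{j-1}}$ for $j \le m$, meaning the highest real bidder lies between $r_{j-1}$ and $r_j$, with $E_{r_0} = \emptyset$ or the reserve event. On $E_{r_m}$ the revealed shill $r_m$ wins, so the revenue is $0$ minus penalties. Off $E_{r_m}$, a real bidder wins in $M^{(r_m)}$, and the auctioneer pays one penalty $\pen$ for each shill $r_j$ with $j > m$ on which the descent stopped, i.e., for each $j > m$ with $\bm v \in E_{r_j}$. The revenue of the deviation is therefore
\[
\e{R^{(r_m)}(\bm v)\,\ind{\bm v \notin E_{r_m}}} \;-\; \pen \sum_{j>m} \pr{E_{r_j}}.
\]
Compare this with the honest revenue $\e{\vs^{(0)}}$, which equals $\OPT$. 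Add and subtract $\e{R^{(r_m)} \ind{E_{r_m}}}$ and apply \Cref{lem:myerson} to $M^{(r_m)}$ unconditionally. The expected revenue of $M^{(r_m)}$ equals its virtual surplus, which is at most $\OPT$. The term we subtracted is at least the truncated virtual surplus on $E_{r_m}$, and it can be nonnegative since real-bidder wins inside $E_{r_m}$ are impossible. It therefore remains to handle the aborted layers.

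The cleaner route, which I will follow, is an induction on $k$ in which I peel off the top shill. Aborting $r_k$ costs $\pen \pr{E_{r_k}}$ and turns the continuation into the $(k-1)$-shill game, but now played conditionally on $E_{r_k}$. Revealing $r_k$ instead yields the game in which $r_k$ is a price floor and nothing is earned on $E_{r_k}$. The difference between these two choices is the truncated revenue $\e{R\,\ind{E_{r_k}}}$ of the continuation mechanism $M^{(r')}$ with floor $r' \le r_k$, minus the penalty.

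\textbf{Step 3: apply \Cref{lem:information_leakage} (the main obstacle).} \Cref{lem:information_leakage} bounds that truncated revenue by $\e{\vs\,\ind{E_{r_k}}} + \sum_i \rev(F_i)\,\pr{E_{r_k}}$. The penalty $\pen \ge \sum_i \rev(F_i)$ cancels the leakage term. What remains is the truncated ironed virtual surplus, and on $E_{r_k}$ the honest mechanism already collects this surplus: pointwise, the virtual surplus of $M^{(r')}$ is at most $\max_i \ivv_i(v_i)^+$. The delicate part is to set up the induction so that the recursive continuation within $E_{r_k}$ is again bounded by the honest virtual surplus restricted to $E_{r_k}$ plus leakage terms that earlier penalties absorb.

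Concretely, I would prove the following by induction on the number of shills below $r$: for every refined strategy, the net revenue on $E_r$ is at most $\e{\max_i \ivv_i^+\,\ind{E_r}} + \pen\,\pr{E_r}$ when the shill $r$ itself is aborted, and at most the surplus term alone otherwise. I would then check that the two bounds chain correctly, since the lemma holds for every floor $r' \le r$. The product form of $E_r$ and the fact that thresholds never fall inside ironed intervals are exactly what the lemma needs. Finally, applying the statement to the outermost level (the full space) gives a total of at most $\OPT$, which is the honest revenue.
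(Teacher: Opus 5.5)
Your final route, the backward induction that peels off the top shill, is sound. It uses exactly the paper's ingredients, organized recursively rather than case by case.

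The paper proceeds in four steps. It first reduces to the pure threshold cases $C_i$, via Kuhn's theorem plus the observation that a reveal-set $T$ behaves like $C_{\max T}$. It then writes the net revenue of $C_i$ in closed form as $\sum_{\ell \ge i} R^{(r_\ell)}\ind{D_\ell} - \pen \sum_{j > i} \ind{\mathcal{E}_j}$, with $D_\ell = \mathcal{E}_{\ell+1} \setminus \mathcal{E}_\ell$. Next it applies \Cref{lem:information_leakage} to each slice with cut $r_{\ell+1}$ and floor $r_\ell$, and uses \Cref{lem:revenue_E} on the top slice. Summing gives $\OPT(\bm F) + (M - \pen)\sum_{j > i}\pr{\mathcal{E}_j}$.

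Your induction makes the same pairing: the leakage $M\,\pr{E_{r_{j+1}}}$ of the slice with floor $r_j$ is paid by the penalty of $r_{j+1}$. Because your bound holds on both branches at every node, it covers randomized and adaptive strategies directly, so your Step 1 is not needed. The paper's closed form, in exchange, displays the slack $(M-\pen)$ explicitly.

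That said, the revenue formula in your Step 2 is false and should be deleted, not merely bypassed. Off $E_{r_m}$ the real winner does not pay as in $M^{(r_m)}$. Suppose the top real bidder lies between $r_\ell$ and $r_{\ell+1}$ with $\ell > m$. Then $r_{\ell+1}, \dots, r_k$ are aborted, but $r_\ell$ is outbid and revealed, so the floor is $r_\ell$ and the payment is $R^{(r_\ell)} \ge R^{(r_m)}$. Relatedly, the slices you define (with $j \le m$) lie inside $E_{r_m}$, where revenue is zero; the slices that matter are above $r_m$.

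You can check that the formula cannot be right. Since $R^{(r_m)}$ vanishes on $E_{r_m}$, your expression equals $\e{R^{(r_m)}} - \pen\sum_{j>m}\pr{E_{r_j}}$. By \Cref{lem:myerson} alone this is at most $\OPT(\bm F)$ for every $\pen \ge 0$, which contradicts \Cref{thm:lower_bound}. The higher floors on the aborted layers are exactly the auctioneer's gain.

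When you write up the induction, three points need care.
\begin{itemize}
\item State the hypothesis precisely: after $r$ is aborted, the revenue accrued on $E_r$ is at most $\e{\max_i \ivv_i(v_i)^+ \ind{E_r}} + \pen\,\pr{E_r}$. This revenue \emph{excludes} $r$'s own penalty but includes the penalties of lower shills.
\item The inductive step for $r = r_{j+1}$ splits $E_{r_{j+1}}$ into two parts. On $E_{r_{j+1}} \setminus E_{r_j}$ the revenue is $R^{(r_j)}$, and \Cref{lem:information_leakage} applies with cut $r_{j+1}$ and floor $r_j$. On $E_{r_j}$, either choice at $r_j$ nets at most $\e{\max_i \ivv_i(v_i)^+\ind{E_{r_j}}}$.
\item At the outermost level, use the exact identity of \Cref{lem:revenue_E} on $E_{r_k}^c$. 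Treating the full space as one more aborted layer would leave a leakage term $M$ with no penalty to absorb it.
\end{itemize}
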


\paragraph*{Proof roadmap.}
The proof reduces the auctioneer's complex, adaptive strategy space to a finite number of deterministic cases, then bounds each one.
Throughout, $M^{(r)}$ is the continuation mechanism of \Cref{sec:info_leakage}, the Myerson optimal auction with a price floor at $r$, and $R^{(r)}$ and $\vs^{(r)}$ are its ex-post revenue and ironed virtual surplus.
The reduction has three layers:
\begin{enumerate}
	\item \textbf{Refinement} (\Cref{lem:refinement}): Without loss of revenue, the shills are linearly ordered by the ranking of \Cref{def:ranking} and outbid shills are always revealed.
	\item \textbf{Kuhn's Theorem} (\Cref{lem:reduction_pure}): Any adaptive (behavioral) strategy is equivalent to a mixture of pure strategies, each of which commits to a threshold: ``abort all shills above $r_i$, reveal $r_i$ and below.''
	\item \textbf{Case analysis} (\Cref{lem:unified_cases}): For each threshold $r_i$, the expected net revenue decomposes over the shill at which the descent stops.
	      Each abort of a shill $r_j$ leaks at most $\sum_{i'} \rev(F_{i'}) \cdot \pr{\mathcal{E}_j}$ of value by the Information Leakage Lemma and costs $\pen \cdot \pr{\mathcal{E}_j}$ in forfeited collateral, so the penalty $\pen \ge \sum_{i'} \rev(F_{i'})$ offsets it abort by abort.
\end{enumerate}
The central tension is a \emph{commitment \versus free option} trade-off.
Revealing a shill commits the auctioneer to a fixed price floor, whereas hiding it opens a ``free option'' to restart---but at cost $\pen$.
We show the option value is always bounded by $\sum_j \rev(F_j)$, so the penalty neutralizes it.

We start by bounding the revenue on good events for the auctioneer---i.e., when a real bidder is guaranteed to be the winner.
We show that on the event that a real bidder outranks the floor, the expected revenue of $M^{(r)}$ equals the expected maximum ironed virtual value on that event.

\begin{lemma}[Revenue on good events]\label{lem:revenue_E}
	Let $r$ be a shill bid with $\ivv(r) \ge 0$, and let $E_r^c = \{ \bm{v} : \exists i \in N,\, \ivv(r) < \ivv_i(v_i) \}$ be the event that some real bidder outranks $r$.
	Then \[ \e{R^{(r)}(\bm v) \cdot \ind{E_r^c}} = \e{\max_{i \in N} \ivv_i(v_i) \cdot \ind{E_r^c}}.
	\]
\end{lemma}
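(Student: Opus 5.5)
The plan is to avoid conditioning on $E_r^c$ altogether. I will show that both sides vanish on the complementary event $E_r$. Then the unconditional virtual surplus identity, \Cref{lem:myerson}, applied to $M^{(r)}$ over the full prior gives the result.

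\textbf{Step 1: revenue and surplus vanish on $E_r$.} By definition of $M^{(r)}$, $x_i(\bm v) = 1$ requires $\ivv(r) < \ivv_i(v_i)$ in the ranking of \Cref{def:ranking}. On $E_r$ no real bidder outranks $r$, so $x_i(\bm v) = 0$ for every $i \in N$. Losers pay nothing, so $R^{(r)}(\bm v) = 0$ and $\vs^{(r)}(\bm v) = 0$ on $E_r$. Hence $\e{R^{(r)}\ind{E_r^c}} = \e{R^{(r)}}$ and $\e{\vs^{(r)}\ind{E_r^c}} = \e{\vs^{(r)}}$.

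\textbf{Step 2: revenue equals ironed virtual surplus.} $M^{(r)}$ is DSIC with threshold payments. Its allocation depends on $v_i$ only through the rank of $i$, which \Cref{def:ranking} keeps constant on each ironed interval of $F_i$. The vanishing-tail hypothesis is in force, so \Cref{lem:myerson} holds with equality and gives $\e{R^{(r)}} = \e{\vs^{(r)}}$. Combining with Step 1 yields $\e{R^{(r)}\ind{E_r^c}} = \e{\vs^{(r)}\ind{E_r^c}}$.

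\textbf{Step 3: identify the surplus on $E_r^c$.} On $E_r^c$ some real bidder outranks $r$. Let $i^\star$ be the highest-ranked real bidder; it outranks $r$ and every other real bidder. Since $\ivv(r)\ge 0$, outranking $r$ forces $\ivv_{i^\star}(v_{i^\star}) \ge \ivv(r) \ge 0$ numerically, so the reserve condition holds. Thus $i^\star$ wins and $\vs^{(r)}(\bm v) = \ivv_{i^\star}(v_{i^\star})$.

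It remains to check that this equals the numeric $\max_{i\in N}\ivv_i(v_i)$. This is the one point needing care, because the ranking refines the numeric order by tie-breaks. If $j$ ranks below $i^\star$, then by the first clause of \Cref{def:ranking} $\ivv_j(v_j)$ is numerically at most $\ivv_{i^\star}(v_{i^\star})$: a strictly larger number would have ranked $j$ higher. So the top-ranked real bidder attains the numeric maximum, and the claimed identity follows.

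I expect the only delicate step to be Step 2. There one must confirm that the extra indicator $\ind{\ivv(r) < \ivv_i(v_i)}$ preserves monotonicity and constancy on ironed intervals. Both follow from the monotonicity of the ranking in a participant's own bid, since the indicator depends on $v_i$ only through $i$'s rank.
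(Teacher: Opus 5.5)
Your proposal is correct and follows essentially the same route as the paper. \Cref{lem:myerson} holds with equality for $M^{(r)}$ because the allocation depends on $v_i$ only through $i$'s rank; revenue and ironed virtual surplus both vanish on $E_r$; and on $E_r^c$ the winner is the top-ranked real bidder. Your explicit checks that this bidder clears the reserve and attains the numeric maximum despite the tie-breaking clauses are details the paper leaves implicit.
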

\submission{The proof of \Cref{lem:revenue_E} is deferred to the full version.}
\arxiv{
	\begin{proof}
		Since the ranking is monotone in a participant's own bid (\Cref{def:ranking}), the allocation rule of $M^{(r)}$ is non-decreasing in each $v_i$, and its payments are the induced threshold payments, so $M^{(r)}$ is DSIC by \Cref{thm:myerson_char}.
		Note the allocation $x_i(\cdot, \bm v_{-i})$ of $M^{(r)}$ depends on $v_i$ only through the rank of $i$, which is constant on every ironed interval of $F_i$.
		Thus \Cref{lem:myerson} applies with equality: \[ \e{R^{(r)}(\bm v)} = \e{\sum_{i \in N} \ivv_i(v_i) \cdot x_i(\bm{v})}.
		\]
		We now decompose both sides over $E_r^c$ and its complement.
		On $E_r^c$, some real bidder outranks $r$, so $M^{(r)}$ allocates to the highest-ranked real bidder, whose ironed virtual value is $\max_{i \in N} \ivv_i(v_i)$.
		On $E_r$, no real bidder outranks $r$, so $M^{(r)}$ allocates to no real bidder, no real bidder pays, and both the revenue and the virtual surplus vanish.
		Thus $\e{R^{(r)}(\bm v) \ind{E_r^c}} = \e{R^{(r)}(\bm v)} = \e{\max_{i \in N} \ivv_i(v_i) \ind{E_r^c}}$, as desired.
	\end{proof}
}

\subsection{Bounding Refined Deviations}

By \Cref{lem:refinement}, it suffices to consider refined safe deviations, where the auctioneer submits shill bids $r_1 < \dots < r_k$---ordered by the ranking of \Cref{def:ranking}---and processes them from highest to lowest.
We define $r_{k+1}$ to be a value that every participant precedes and $r_0$ to be a value that every participant outranks, so that $M^{(r_0)}$ is the Myerson mechanism of \Cref{def:myerson_mech} with no shill.

Let $\mathcal{E}_j$ be the event that no real bidder outranks the $j$-th shill: \[ \mathcal{E}_j := \{\bm{v}_N \in \mathbb R_+^N : \ivv_i(v_i) \le \ivv(r_j) \text{ for all } i \in N\}, \] with $\mathcal{E}_0 := \emptyset$ and $\mathcal{E}_{k+1} := \bm V$.
Because the ranking is a total order, $\mathcal{E}_j$ \emph{is} the event that shill $r_j$ is the provisional winner, by \Cref{obs:shill_wins_highest}---not merely a superset of it.
This is what the tie-break inside the ranking is for: had ties been left to a rule outside the comparison, a real bidder tied with $r_j$ but carrying a higher raw bid would lie in $\mathcal{E}_j$ while $r_j$ nonetheless loses, and the penalty accounting in \Cref{lem:unified_cases} would charge for an abort that never happens.
It is the event $E_{r_j}$ of the Information Leakage Lemma (\Cref{lem:information_leakage}) and the complement of the event of \Cref{lem:revenue_E}, which is what lets the lemmas compose in \Cref{lem:unified_cases}.
Since $r_1 < \cdots < r_k$, the events are nested: $\mathcal{E}_1 \subseteq \cdots \subseteq \mathcal{E}_k$.

Next, we formalize the deterministic strategies.
Any deterministic strategy corresponds to a case $C_i$ for some $i \in \{0, 1, \ldots, k\}$:
\begin{quote}
	$C_i$: \emph{The auctioneer reveals shill $r_i$ (if $i \ge 1$) and aborts all higher shills $r_{i+1}, \ldots, r_k$.
		The case $C_0$ corresponds to aborting all shills.
	}
\end{quote}
For a case $C_i$, write $R(\bm v; C_i)$ for the auctioneer's net revenue: the payments received from real bidders minus the collateral forfeited by aborted shills.

\begin{lemma}[Bounding Deterministic Cases]\label{lem:unified_cases}
	For any $i \in \{0, 1, \ldots, k\}$, the expected net revenue $\e{R(\bm v; C_i)}$ under $C_i$ is at most $\OPT(\bm{F})$, provided $\pen \ge \sum_{j=1}^n \rev(F_j)$.
\end{lemma}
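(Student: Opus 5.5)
We describe the ex-post net revenue under $C_i$ by where the real bidders fall relative to the nested events $\mathcal{E}_0 = \emptyset \subseteq \mathcal{E}_1 \subseteq \cdots \subseteq \mathcal{E}_k \subseteq \mathcal{E}_{k+1} = \bm V$.

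Under $C_i$, shill $r_j$ with $j > i$ is aborted exactly when it becomes the provisional winner. Since the shills above it have already been removed, this happens exactly on $\mathcal{E}_j$. Each such abort forfeits $\pen$. Once the descent reaches $r_i$, shill $r_i$ and every lower shill are revealed, either as a winning shill (revenue $0$ from real bidders) or as price floors, by \Cref{lem:refinement}. So the continuation is precisely $M^{(r_i)}$ run on the real bidders. This gives the ex-post decomposition
\[
R(\bm v; C_i) = R^{(r_i)}(\bm v) - \pen \sum_{j=i+1}^{k} \ind{\mathcal{E}_j},
\]
where the revenue term vanishes on $\mathcal{E}_i$ because shill $r_i$ wins there. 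For $C_0$ the floor is $r_0$, so the continuation is the Myerson mechanism itself.

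Next we split the continuation revenue over the partition $\mathcal{E}_i^c$ and $\mathcal{E}_j \setminus \mathcal{E}_{j-1}$ for $j = i+1, \ldots, k+1$. On $\mathcal{E}_i^c$ a real bidder outranks $r_i$, and \Cref{lem:revenue_E} gives
\[
\e{R^{(r_i)} \ind{\mathcal{E}_i^c}} = \e{\max_{i'} \ivv_{i'}(v_{i'}) \ind{\mathcal{E}_i^c}},
\]
which is the optimal virtual surplus on that event. On $\mathcal{E}_i$ the revenue is zero, while $\OPT$ collects $\e{\max_{i'} \ivv_{i'}^+ \ind{\mathcal{E}_i}} \ge 0$. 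So we write
\[
\OPT(\bm F) = \e{\max_{i'} \ivv_{i'}(v_{i'})^+},
\]
and it remains to show that the excess of $\e{R^{(r_i)}\ind{\mathcal{E}_i^c}}$ over that optimal benchmark, restricted to each layer, is paid for by the penalties.

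The cleaner route is to use \Cref{lem:information_leakage} directly on each $\mathcal{E}_j$, $j > i$, with floor $r_i \le r_j$:
\[
\e{R^{(r_i)} \ind{\mathcal{E}_j}} \le \e{\vs^{(r_i)} \ind{\mathcal{E}_j}} + M\,\pr{\mathcal{E}_j}.
\]
Telescope this over the layers. Write $R^{(r_i)}\ind{\mathcal{E}_k}$ as part of the total, and handle the remaining mass on $\mathcal{E}_k^c$ with \Cref{lem:revenue_E} at floor $r_k$. On $\mathcal{E}_k^c$ the mechanisms $M^{(r_i)}$ and $M^{(r_k)}$ agree in allocation, but their payments differ, so a floor adjustment is needed.

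Rather than telescoping, the step I would actually attempt is a single application:
\[
\e{R^{(r_i)}} = \e{R^{(r_i)} \ind{\mathcal{E}_k}} + \e{R^{(r_i)} \ind{\mathcal{E}_k^c}}.
\]
Bound the first term by Leakage at $r = r_k$ as $\e{\vs \ind{\mathcal{E}_k}} + M \pr{\mathcal{E}_k}$. Bound the second by its virtual surplus via an analogue of \Cref{lem:revenue_E}. Summing gives at most $\OPT + M\,\pr{\mathcal{E}_k}$, which is offset by the single penalty $\pen\,\pr{\mathcal{E}_k}$ when $i < k$. When $i = k$ there is no abort, and the mechanism is just $M^{(r_k)}$, whose revenue equals its virtual surplus and is therefore at most $\OPT$ by \Cref{lem:myerson}.

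The main obstacle is the second term. On $\mathcal{E}_k^c$, $M^{(r_i)}$ has a lower floor than $r_k$, so its payments are not those of $M^{(r_k)}$, and \Cref{lem:revenue_E} does not apply verbatim. I would handle this by noting that on $\mathcal{E}_k^c$ the winner's threshold in $M^{(r_i)}$ is at most its threshold in the mechanism whose floor is exactly the highest real competitor's rank. I would then rerun the integration-by-parts identity of \Cref{lem:information_leakage} on the complementary half-line, where the boundary term appears with the opposite sign ($-(1-F(c_i))\,u$ replaced by $+F(c_i)\,u$ accounting). The sign is then favorable, giving revenue at most virtual surplus there.

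If that fails, the fallback is layer-by-layer: apply Leakage to each $\mathcal{E}_j$, $j > i$, and charge each $M\,\pr{\mathcal{E}_j}$ against that layer's own penalty $\pen\,\pr{\mathcal{E}_j}$. This matches the roadmap's ``abort by abort'' accounting.
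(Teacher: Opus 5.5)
Your ex-post decomposition $R(\bm v; C_i) = R^{(r_i)}(\bm v) - \pen \sum_{j=i+1}^{k} \ind{\mathcal{E}_j}$ is wrong, and everything after it inherits the error. You correctly say that $r_j$, $j > i$, is aborted only when it is the provisional winner, i.e.\ on $\mathcal{E}_j$. Now take the slab $D_\ell := \mathcal{E}_{\ell+1} \setminus \mathcal{E}_\ell$ with $\ell > i$. There the descent never reaches $r_i$. After $r_k, \ldots, r_{\ell+1}$ are aborted, a real bidder outranks $r_\ell$ and wins. The surviving shills $r_{i+1}, \ldots, r_\ell$ are then outbid, so they are revealed as price floors (\Cref{def:refined_deviation}). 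The winner therefore pays the threshold of $M^{(r_\ell)}$, not of $M^{(r_i)}$, and $R^{(r_\ell)} \ge R^{(r_i)}$ there. Your formula undercounts revenue, so an upper bound on it proves nothing.

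A sanity check exposes this. Under your formula the lemma would hold even with $\pen = 0$, because $M^{(r_i)}$ is DSIC and $\e{R^{(r_i)}} \le \OPT(\bm F)$ by \Cref{lem:revenue_E} (or \Cref{lem:myerson} when $i = 0$). Yet the deviation of \Cref{thm:lower_bound}, which is exactly $C_0$, beats $\OPT(\bm F)$ whenever $\pen < n$. Your ``main obstacle'' is also an artifact of the error. On $\mathcal{E}_k^c = D_k$ nothing is aborted, so the protocol's payments \emph{are} those of $M^{(r_k)}$, and \Cref{lem:revenue_E} applies verbatim.

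With the correct revenue, the single-application route cannot be repaired. One leakage term $M \pr{\mathcal{E}_k} \le M$ cannot absorb a gross gain that, in \Cref{thm:lower_bound}, grows linearly in the number of shills. What is needed is per-slab accounting: $R(\bm v; C_i) = \sum_{\ell=i}^{k} R^{(r_\ell)}(\bm v) \ind{D_\ell} - \pen \sum_{j=i+1}^{k} \ind{\mathcal{E}_j}$.
\begin{itemize}
\item Since $R^{(r_\ell)}$ vanishes on $\mathcal{E}_\ell$, you have $R^{(r_\ell)} \ind{D_\ell} = R^{(r_\ell)} \ind{\mathcal{E}_{\ell+1}}$. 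This replaces the slab by a product of half-lines.
\item For $\ell < k$, \Cref{lem:information_leakage} then applies with floor $r_\ell$ and cut $r_{\ell+1}$. For $\ell = k$, \Cref{lem:revenue_E} with floor $r_k$ handles the term.
\item The virtual-surplus terms become $\e{\max_{i'} \ivv_{i'}(v_{i'})^+ \ind{D_\ell}}$, and they sum over $\ell \ge i$ to at most $\OPT(\bm F)$.
\item The leakage $M \pr{\mathcal{E}_{\ell+1}}$ of slab $\ell$ is paid by the penalty for aborting $r_{\ell+1}$.
\end{itemize}

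Your fallback gestures at this abort-by-abort charging, but as stated it applies the lemma to $R^{(r_i)}$ on the nested, overlapping events $\mathcal{E}_j$. Those virtual-surplus terms overcount and do not sum to anything bounded by $\OPT(\bm F)$. The missing ingredients are the layer-dependent floor and the vanishing-on-$\mathcal{E}_\ell$ step.
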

\submission{The proof of \Cref{lem:unified_cases} is deferred to the full version.}
\arxiv{
	\begin{proof}
		For $\ell \in \{0, \ldots, k\}$ let $D_\ell := \mathcal{E}_{\ell+1} \setminus \mathcal{E}_\ell$ be the event that the highest-ranked real bidder outranks $r_\ell$ but not $r_{\ell+1}$; since the events $\mathcal{E}_j$ are nested, the $D_\ell$ partition $\bm V$.

		We first determine the outcome of $C_i$ on each slice.
		On $D_\ell$ with $\ell \ge i$, no real bidder outranks $r_{\ell+1}, \ldots, r_k$, so these shills win in turn and are aborted, costing $(k - \ell)\pen$; the gadget then runs on $N \cup \{r_1, \ldots, r_\ell\}$, some real bidder outranks $r_\ell$ and wins, and every surviving shill is outbid and revealed (\Cref{def:refined_deviation}), so the winner pays the least bid beating every other real bidder and $r_\ell$.
		The payments on $D_\ell$ are therefore those of $M^{(r_\ell)}$.
		On $\mathcal{E}_i = D_0 \cup \cdots \cup D_{i-1}$, no real bidder outranks $r_i$: the shills $r_k, \ldots, r_{i+1}$ are aborted, costing $(k-i)\pen$, and $r_i$ then wins and is revealed, so no real bidder pays.
		Thus \[ R(\bm v; C_i) = \sum_{\ell = i}^{k} R^{(r_\ell)}(\bm v) \ind{D_\ell} - \pen \sum_{j=i+1}^{k} \ind{\mathcal{E}_j}, \] where the penalty count is right because $\bm v \in \mathcal{E}_j$ exactly when $r_j$ is the provisional winner at its turn (\Cref{obs:shill_wins_highest}), and on $D_\ell$ this holds for the $k - \ell$ indices $j > \ell$, while on $\mathcal{E}_i$ it holds for all $k - i$ indices $j > i$.

		Next, we bound the payment terms.
		Since $R^{(r_\ell)}$ vanishes on $\mathcal{E}_\ell$, where no real bidder outranks the floor, $R^{(r_\ell)} \ind{D_\ell} = R^{(r_\ell)} \ind{\mathcal{E}_{\ell+1}}$.
		For $\ell \le k - 1$, the Information Leakage Lemma (\Cref{lem:information_leakage}) applies with the cut $r_{\ell+1}$, which satisfies $\ivv(r_{\ell+1}) \ge 0$, and the floor $r_\ell \le r_{\ell+1}$: \[ \e{R^{(r_\ell)}(\bm v) \ind{\mathcal{E}_{\ell+1}}} \le \e{\vs^{(r_\ell)}(\bm v) \ind{\mathcal{E}_{\ell+1}}} + M \cdot \pr{\mathcal{E}_{\ell+1}}, \] where $M = \sum_{j=1}^n \rev(F_j)$.
		For $\ell = k$, \Cref{lem:revenue_E} with the floor $r_k$ gives $\e{R^{(r_k)}(\bm v) \ind{\mathcal{E}_{k+1}}} = \e{R^{(r_k)}(\bm v)} = \e{\max_{i' \in N} \ivv_{i'}(v_{i'}) \ind{\mathcal{E}_k^c}}$, with no leakage term.
		On $D_\ell$ with $\ell \ge 1$ the winner of $M^{(r_\ell)}$ is the highest-ranked real bidder, whose ironed virtual value is at least $\ivv(r_\ell) \ge 0$; on $D_0$ the Myerson mechanism $M^{(r_0)}$ allocates to the highest-ranked real bidder exactly when its ironed virtual value is non-negative; and on $\mathcal{E}_\ell$ no real bidder wins.
		Thus $\vs^{(r_\ell)} \ind{\mathcal{E}_{\ell+1}} = \max_{i' \in N} \ivv_{i'}(v_{i'})^+ \cdot \ind{D_\ell}$ for every $\ell$, and the $\ell = k$ term has the same form since $\mathcal{E}_k^c = D_k$.
		Summing over $\ell \ge i$, \[ \sum_{\ell = i}^{k} \e{\vs^{(r_\ell)}(\bm v) \ind{\mathcal{E}_{\ell+1}}} = \e{\max_{i' \in N} \ivv_{i'}(v_{i'})^+ \cdot \ind{\mathcal{E}_i^c}} \le \e{\max_{i' \in N} \ivv_{i'}(v_{i'})^+} = \OPT(\bm F), \] the last equality being Myerson's characterization of the optimal revenue (\Cref{lem:myerson}).

		Combining the displays, \[ \e{R(\bm v; C_i)} \le \OPT(\bm F) + M \sum_{\ell=i}^{k-1} \pr{\mathcal{E}_{\ell+1}} - \pen \sum_{j=i+1}^{k} \pr{\mathcal{E}_j} = \OPT(\bm F) + (M - \pen) \sum_{j=i+1}^{k} \pr{\mathcal{E}_j}.
		\]
		Since $\pen \ge M$, the last sum is non-positive.
		This proves that the expected net revenue under $C_i$ is at most $\OPT(\bm F)$: each abort leaks at most $M \cdot \pr{\mathcal{E}_j}$ of value to the auctioneer and costs $\pen \cdot \pr{\mathcal{E}_j}$ in forfeited collateral.
	\end{proof}
}

Finally, we restrict the auctioneer's adaptive strategies to pure strategies.
In the extensive-form game, the auctioneer observes the history of aborted shills and may randomize at each decision; the reduction is Kuhn's Theorem.

\begin{theorem}[Kuhn's Theorem; {\citet{Kuhn53}}]\label{thm:kuhn}
	In any finite extensive-form game with perfect recall, for any behavioral strategy $\beta$, there exists a mixed strategy $\mu$ (a probability distribution over pure strategies) that is outcome-equivalent to $\beta$.
\end{theorem}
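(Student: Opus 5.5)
The plan is to prove \Cref{thm:kuhn} directly: build the mixed strategy as the \emph{product measure} induced by the behavioral strategy, then check outcome-equivalence one terminal node at a time. Fix a player (in our application, the auctioneer) with behavioral strategy $\beta = (\beta_I)_I$, where $I$ ranges over that player's information sets and $\beta_I$ is a distribution over the actions available at $I$. Since the game is finite, the set of pure strategies $s$, which assign an action $s(I)$ to every information set $I$, is finite. I would define
\[ \mu(s) := \prod_{I} \beta_I\bigl(s(I)\bigr). \]
Equivalently, $\mu$ draws the action at every information set independently from $\beta_I$, before play starts. This is a probability distribution because it is a finite product of distributions.

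Next, I fix the strategies of all other players and of nature arbitrarily, and show that every terminal node $z$ is reached with the same probability under $\beta$ and under $\mu$. Along the unique path from the root to $z$, the reach probability factors into two parts. The first is the product of the other players' and nature's move probabilities, which is identical in both cases. The second is the contribution of our player's own moves. Under $\beta$ this contribution is $\prod_{t} \beta_{I_t}(a_t)$, where $(I_t, a_t)$ are the player's decision points on the path and the actions taken there. Under $\mu$ it is the probability that a random pure strategy $s$ satisfies $s(I_t) = a_t$ for every $t$. If the information sets $I_t$ are pairwise distinct, independence of the coordinates of $\mu$ makes this probability exactly $\prod_t \beta_{I_t}(a_t)$, and equality of outcome distributions follows by summing over $z$.

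The main obstacle is justifying that the $I_t$ are pairwise distinct, and this is exactly where perfect recall enters. I would prove it by contradiction. Suppose two nodes $x \prec y$ on the same path lie in one information set $I$. Perfect recall says the player remembers their own past information sets and actions, so at $y$ the player knows that $I$ was already visited. Then there is a node $y' \in I$ whose history does not pass through $I$ (for instance, the first node of $I$ reached along some path). The nodes $y$ and $y'$ have different records of the player's own past, contradicting that they lie in a common information set. Hence no path crosses an information set twice. Without this step the $\mu$-probability would contain repeated factors such as $\beta_I(a)^2$, which is the absent-minded driver phenomenon, and the equivalence would fail. That is why the hypothesis is needed.

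Finally, outcome-equivalence holds for every profile of the other parties. In the SRA this applies to the finite tree obtained once the $k$ shills are fixed, with the auctioneer's information sets indexed by histories of aborted shills. So any randomized adaptive abort policy yields the same distribution over outcomes, and hence the same expected revenue, as a mixture of pure policies.
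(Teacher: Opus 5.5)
Your proof is correct and is the standard argument for this direction of Kuhn's theorem. The paper does not prove the statement itself; it cites \citet{Kuhn53}. In \Cref{lem:reduction_pure} it then instantiates exactly your product mixed strategy, a reveal-set $T$ with independent marginals $q_j$, justified by your key step that each information set (each coin) is crossed at most once on every play.

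There are two small slips, neither load-bearing. First, under absent-mindedness the repeated factor $\beta_I(a)^2$ appears in the $\beta$-probability, not the $\mu$-probability, which stays $\beta_I(a)$. Second, the SRA tree is not finite, since nature draws continuous valuations. Your argument still applies pointwise in $\bm v$, because the auctioneer has only $k$ information sets.
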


\begin{lemma}[Reduction to Pure Strategies]\label{lem:reduction_pure}
	The maximum expected revenue achievable by any behavioral strategy $\beta$ in the refined deviation space is bounded by the maximum expected revenue of the deterministic cases $\{C_0, \dots, C_k\}$.
\end{lemma}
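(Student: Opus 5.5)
The plan is to model the refined deviation as a finite extensive-form game, apply Kuhn's Theorem (\Cref{thm:kuhn}), and then show that every pure strategy in that game yields the same net revenue, realization by realization, as one of the cases $C_0, \dots, C_k$. Linearity of expectation then bounds any mixture by the largest of $\e{R(\bm v; C_i)}$.

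\textbf{Step 1: set up the game.} By \Cref{lem:refinement}, the shills are $r_1 < \cdots < r_k$ in the ranking. After any bid has been revealed, every remaining shill is revealed. So the auctioneer's only genuine decisions come during the descent: each time a shill $r_j$ is announced as provisional winner, the auctioneer chooses reveal or abort. The auctioneer's information at such a node consists of the history of gadget outputs, namely $r_k, r_{k-1}, \ldots, r_j$ were the successive winners and $r_{k}, \dots, r_{j+1}$ were aborted. This history is a deterministic function of the auctioneer's own past choices together with the single fact that $\bm v \in \mathcal{E}_j$. The tree is finite (depth at most $k$) and has perfect recall, since the auctioneer remembers its own actions and all public outputs. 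Kuhn's Theorem therefore converts any behavioral strategy $\beta$ into an outcome-equivalent mixture $\mu$ over pure strategies, and $\e{R_\beta} = \sum_\sigma \mu(\sigma)\,\e{R_\sigma} \le \max_\sigma \e{R_\sigma}$.

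\textbf{Step 2: identify each pure strategy with a case $C_i$.} Along the descent, the node ``$r_j$ is provisional winner'' is reachable only if all higher shills were aborted. A pure strategy $\sigma$ therefore specifies an action at each $j = k, k-1, \ldots, 1$ along this single path. Let $i$ be the largest index at which $\sigma$ reveals, with $i = 0$ if $\sigma$ never reveals. Fix a realization $\bm v$. The play under $\sigma$ aborts $r_k, \dots, r_{i+1}$ whenever they win, and it stops at the first real winner or at $r_i$. This is exactly the play of $C_i$, because the choices $\sigma$ makes at nodes below $i$ are never reached. Hence $R(\bm v;\sigma) = R(\bm v; C_i)$ pointwise.

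\textbf{Main obstacle.} I expect the delicate point to be justifying the finiteness and perfect-recall hypotheses. Restarts could in principle be repeated, and the auctioneer might randomize over shill bids and transformations. My first approach is to condition on the committed shill profile, which is fixed before the commit phase. With that profile fixed, the number of restarts is bounded by $k$, and the only randomness is $\bm v$ together with the auctioneer's own coins. Randomization over profiles then becomes an outer mixture, which is handled by the same averaging argument; it needs \Cref{lem:unified_cases} to hold for every profile, and that lemma is indeed stated for arbitrary $k$ and arbitrary $r_j$.
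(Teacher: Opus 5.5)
Your proposal is correct and matches the paper's proof. Both apply Kuhn's Theorem to the auctioneer's $k$ information sets (one per shill, reached only when that shill is the provisional winner after all higher shills were aborted). Both then identify each pure strategy with the case $C_i$, where $i$ is the largest index it would reveal (the paper's $C_{\max T}$), because choices at lower shills are never reached.

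Your extra remark, treating randomization over the committed shill profile as an outer mixture, covers a point the paper leaves implicit. It goes through because the deterministic-case bound $\OPT(\bm F)$ does not depend on the profile.
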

\submission{The proof of \Cref{lem:reduction_pure} is deferred to the full version.}
\arxiv{
	\begin{proof}
		The auctioneer has perfect recall, and although the valuations are continuous, the auctioneer has only $k$ information sets---one per shill, reached when that shill is the provisional winner---each reached at most once along any play.
		Thus \Cref{thm:kuhn} applies to the auctioneer's decisions: a behavioral strategy $\beta$ that reveals the $j$-th shill with probability $q_j$ yields the same expected revenue as the mixed strategy $\mu$ that draws a reveal-set $T \subseteq \{1, \ldots, k\}$ with independent marginals $\pr{j \in T} = q_j$, since each coin is flipped at most once on every play.
		Next, we show that every pure strategy is outcome-equivalent to one of the $k+1$ cases $\{C_0, \ldots, C_k\}$.

		A pure strategy, a priori, is an arbitrary reveal-set $T \subseteq \{1, \ldots, k\}$: for each shill $r_j$ it fixes in advance whether that shill would be revealed or aborted if it were reached.
		There are $2^k$ such sets, not $k+1$, so this step requires argument.
		Fix a valuation profile $\bm v$ and let $\ell(\bm v)$ be the largest index with $\ivv(r_\ell) \le \max_{i \in N} \ivv_i(v_i)$, i.e.\ the highest shill that a real bidder outranks (set $\ell = 0$ if no shill is outranked).
		Because the shills are processed from highest to lowest, the auctioneer under $T$ aborts shills $r_k, r_{k-1}, \ldots$ in turn, and the first index it \emph{reveals} is \[ m(\bm v) \;=\; \max\bigl(T \cap \{\ell(\bm v)+1, \ldots, k\}\bigr), \] if this set is non-empty; the descent stops there, shill $r_{m(\bm v)}$ takes the item, no real bidder pays, and lower shills are never reached as provisional winners.
		If the set is empty, every aborted shill lies above $\ell(\bm v)$, the descent passes all of them, and the highest-ranked real bidder wins (or nobody does, when $\ell(\bm v) = 0$ and no real bidder clears the reserve)---exactly the outcome of $C_{\ell(\bm v)}$ restricted to this profile.
		In either case the ex-post outcome under $T$ coincides with the outcome under the single case $C_{\max T}$: on profiles where $\max T > \ell(\bm v)$ the stopping index is $m(\bm v) = \max T$ and the two agree by construction, and on the remaining profiles both let the highest-ranked real bidder win at the floor $r_{\ell(\bm v)}$.
		Thus $T$ is outcome-equivalent to $C_{\max T}$ (with $C_0$ for $T = \emptyset$), so the mixture $\mu$ over pure strategies induces a distribution over $\{C_0, \ldots, C_k\}$.
		Therefore \[ \e{R(\bm{v}; \beta)} = \e{R(\bm{v}; \mu)} = \sum_{i=0}^k \mu(C_i)\, \e{R(\bm{v}; C_i)} \le \max_{i}\, \e{R(\bm{v}; C_i)}, \] so the expected revenue of $\beta$ is at most that of the best deterministic case, as desired.
	\end{proof}
}

\submission{The proof of \Cref{thm:credibility} is deferred to the full version.}
\arxiv{
	\begin{proof}[Proof of \Cref{thm:credibility}]
		By \Cref{lem:reduction_pure}, the expected revenue of any behavioral safe deviation $\beta$ is at most $\max_i \e{R(\bm{v}; C_i)}$.
		By \Cref{lem:unified_cases}, for all $i \in \{0, \ldots, k\}$, we have $\e{R(\bm{v}; C_i)} \le \OPT(\bm{F})$.
		Thus, no safe deviation is strictly profitable, and the SRA is credible.
	\end{proof}
}

\begin{theorem}\label{thm:main}
	For any product distribution $\bm{F}$ whose components have densities and vanishing revenue tails, the Sequential Revelation Auction (Algorithm~\ref{alg:credible_protocol}) with penalty $\pen \ge \sum_{i=1}^n \rev(F_i)$ is a credible auction protocol that implements the revenue-optimal auction.
	Moreover, on the equilibrium path without shills, the protocol terminates in a constant number of rounds.
\end{theorem}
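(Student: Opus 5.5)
The proof will assemble the three properties already established separately: credibility, revenue-optimality (with DSIC), and constant round complexity on the equilibrium path.

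\textbf{Credibility.} This part is immediate from \Cref{thm:credibility}. Its hypotheses coincide with those of \Cref{thm:main}: a product distribution whose components have densities and vanishing revenue tails, together with $\pen \ge \sum_{i=1}^n \rev(F_i)$. Hence no safe deviation earns more than $\OPT(\bm F)$.

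\textbf{Revenue-optimality.} I will argue that the honest execution without shills realizes the Myerson mechanism of \Cref{def:myerson_mech}, in three steps.
\begin{enumerate}
	\item By \Cref{prop:bidder_ir}, real bidders never abort on path.
	\item The gadget $\F_{\text{WDG}}$ outputs the highest-ranked bidder with non-negative ironed virtual value. If $i^* = \bot$, the item is unsold and all collateral is refunded, matching $x$.
	\item Otherwise the winner escrows $v_{i^*}$, the losers reveal, and the refund at Step~\ref{step:refund} leaves net payment $p_{i^*}(\bm v)$. All collateral is returned.
\end{enumerate}
The realized outcome is therefore $(x,p)$ of \Cref{def:myerson_mech}. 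By the discussion after that definition and \Cref{lem:myerson}, its expected revenue is $\e{\max_i \ivv_i(v_i)^+} = \OPT(\bm F)$. DSIC follows from \Cref{thm:dsic_full_reveal} applied with $S=\emptyset$, which is trivially a refined deviation. Finally, \Cref{prop:bidder_ir} shows that following the extensive-form protocol is weakly dominant.

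\textbf{Round complexity.} I will count communication rounds on the equilibrium path:
\begin{itemize}
	\item one round for the deposit and commitment broadcast;
	\item a constant number of rounds for $\Pi_{\text{WDG}}$, by \Cref{thm:mpc_rounds}, assuming broadcast;
	\item one round for the winner's reveal and escrow;
	\item one round for the losers' reveals;
	\item one round for refunds and collateral return.
\end{itemize}
Since no one aborts on path, no restart is triggered, so the total is $O(1)$.

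\textbf{Main obstacle.} The step I expect to need care is matching the instantiated gadget to the ideal functionality. Ideal-model security in the sense of \Cref{def:realization_ia} must be invoked so that the real-world honest execution is indistinguishable from the ideal one. This in turn requires that commitment-consistency checking inside $\F_{\text{WDG}}$ does not alter outcomes for honest parties. It does not, since an honest party's opening always satisfies $c_i = \commit{v_i}{\rho_i}$, so $A' = A$. The remaining points are bookkeeping.
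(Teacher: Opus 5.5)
Your proposal is correct and follows the paper's proof essentially step for step. Credibility is quoted from \Cref{thm:credibility}, DSIC and revenue-optimality come from the Myerson mechanism of \Cref{def:myerson_mech} via \Cref{thm:myerson_char} and \Cref{lem:myerson}, and the round bound is the same count of a constant-round $\Pi_{\text{WDG}}$ (\Cref{thm:mpc_rounds}) plus a constant number of reveal and refund rounds; your extra checks (that the on-path outcome equals $(x,p)$, that honest openings give $A' = A$, and the appeal to \Cref{prop:bidder_ir}) are details the paper leaves implicit and do not change the argument.
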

\submission{The proof of \Cref{thm:main} is deferred to the full version.}
\arxiv{
	\begin{proof}
		By \Cref{thm:credibility}, the SRA is credible for any product distribution $\bm{F}$ when $\pen \ge \sum_{i=1}^n \rev(F_i)$.
		The underlying mechanism is that of \Cref{def:myerson_mech}, which is DSIC by \Cref{thm:myerson_char} and, since its allocation is constant on every ironed interval, revenue-optimal by \Cref{lem:myerson}.
		Regarding round complexity, the protocol consists of a sequence of MPC executions.
		On the equilibrium path (no shills, or no aborts by the winner), the first MPC execution (Winner Determination Gadget) identifies the highest-ranked bidder.
		This bidder then reveals their bid, and the auction terminates.
		The WDG can be implemented in a constant number of rounds using the protocol from \Cref{thm:mpc_rounds}.
		The reveal phase takes a constant number of rounds: the winner reveals and escrows, then the losers reveal, then the auctioneer refunds.
		Thus, the total round complexity is constant.
	\end{proof}
}

\section{Lower Bound}\label{sec:lower}

The following result shows that \Cref{thm:credibility} is tight: the Sequential Revelation Auction (Algorithm~\ref{alg:credible_protocol}) admits profitable safe deviations when the penalty $\pen$ is smaller than $\sum_{i=1}^n \rev(F_i)$.
The equal-revenue distribution itself, $F(v) = 1 - 1/v$ on $[1, \infty)$, has $v\,(1 - F(v)) = 1$ for every $v$ and so lies outside the hypothesis of \Cref{thm:credibility}; we therefore truncate it.
For $H \ge 1$, let $F_H$ be the distribution on $[1, \infty)$ with \[ 1 - F_H(v) = \begin{cases} 1/v & \text{if } 1 \le v \le H, \\ (1/H)\, e^{-(v - H)/H} & \text{if } v > H. \end{cases} \]
Its density is $f_H(v) = 1/v^2$ on $[1, H]$ and $f_H(v) = (1/H^2)\, e^{-(v - H)/H}$ beyond, continuous at $H$, and its virtual value is $\phi_H(v) = 0$ on $[1, H]$ and $\phi_H(v) = v - H$ beyond, so $F_H$ is regular with a vanishing revenue tail and satisfies every hypothesis of \Cref{thm:credibility}.
Posting any price $p \le H$ earns $p \cdot (1/p) = 1$, and $p\,(1 - F_H(p))$ decreases for $p > H$, so $\rev(F_H) = 1$; and $\pr{v \ge y} = 1/y$ for every $y \le H$, which is all the construction below uses.

\begin{theorem}[Profitable Safe Deviation of the SRA]
	\label{thm:lower_bound}
	Fix $n \ge 1$ and a penalty $\pen < n$.
	There exists $H_0$ such that, for every $H \ge H_0$, the Sequential Revelation Auction with $n$ bidders whose valuations are drawn \iid from $F_H$ admits a safe deviation with expected net revenue strictly greater than $n$.
	Since $\OPT(\bm F) \le \sum_{i=1}^n \rev(F_H) = n$ for $\bm F = (F_H, \ldots, F_H)$, every such deviation is profitable, and the penalty $\pen \ge \sum_{i=1}^n \rev(F_i)$ of \Cref{thm:main} is exactly tight on this family.
\end{theorem}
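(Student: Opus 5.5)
The plan is to exhibit a one-shill deviation of the type analysed in \Cref{lem:unified_cases}, the case $C_0$ with $k = 1$. For that case the Information Leakage identity holds with equality and can be evaluated in closed form. The auctioneer registers a single shill with distribution $F_H$ and bid $r \in (1, H]$, chosen later; the natural choice is $r = \sqrt{H}$. Since $\phi_H \equiv 0$ on $[1,H]$ and nothing is ironed, the shill ties every real bid in $[1,H]$ at ironed virtual value $0$. The ranking of \Cref{def:ranking} then compares raw bids, so the shill wins exactly on $\mathcal{E}_1 = \{\max_i v_i < r\}$, whose probability is $(1 - 1/r)^n$. The strategy aborts the shill whenever it wins and otherwise reveals it. The deviation is safe because every real bidder's view is that of an honest execution in which some other participant aborted. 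From the slice decomposition in the proof of \Cref{lem:unified_cases}, the net revenue is
\[ \e{R^{(r)} \ind{\mathcal{E}_1^c}} + \e{R^{(r_0)} \ind{\mathcal{E}_1}} - \pen\,(1 - 1/r)^n . \]

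\textbf{First term.} By \Cref{lem:revenue_E}, the first term equals $\e{\max_i \phi_H(v_i)^+ \ind{\mathcal{E}_1^c}} = \e{\max_i (v_i - H)^+}$, since $\phi_H(v_i) > 0$ forces $v_i > H \ge r$. Each bidder has $\e{(v_i - H)^+} = (1/H)\cdot H = 1$. The maximum differs from the sum only when at least two bidders exceed $H$. A union bound with independence then gives a loss of at most about $\binom{n}{2} H^{-2}\cdot O(H)$, so the first term is at least $n - O(n^2/H)$.

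\textbf{Second term.} I would use the exact identity \eqref{eq:revenue_identity} from the proof of \Cref{lem:information_leakage} with $c_i = r$. Its surplus part vanishes because $\phi_H = 0$ on $[1, r]$. Each boundary term equals $(1/r)\,\e{u_i(r, \bm v_{-i}; r)\ind{\max_{j\ne i} v_j < r}}$. In $M^{(r_0)}$, type $r$ pays $\max(1, \max_{j \ne i} v_j)$, so $u_i = r - \max(1, \max_{j\ne i} v_j)$. Conditional on all lying below $r$, the expected maximum of $n-1$ truncated equal-revenue draws is at most $(n-1)(1 + \ln r)$. The second term is therefore at least
\[ n\,(1 - 1/r)^{n-1}\bigl(1 - (n-1)(1+\ln r)/r\bigr) = n - O\bigl(n^2 \ln r / r\bigr). \]

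\textbf{Conclusion.} Summing, the net revenue is at least $2n - \pen - O(n^2/H) - O(n^2 \ln r / r)$, using $\pen(1-1/r)^n \le \pen$. With $r = \sqrt{H}$, both error terms tend to $0$ as $H \to \infty$. Since $n - \pen > 0$ is a fixed slack, there is an $H_0$ beyond which the revenue exceeds $n$. The final sentence of the statement then follows from $\OPT(\bm F) = \e{\max_i \phi_H(v_i)^+} \le \sum_i \rev(F_H) = n$, as observed in the introduction.

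\textbf{Main obstacle.} The delicate step is the second term. The identity is exact only if $c_i = r$ is the correct cut under the tie-break, which holds because ties are broken on the raw bid and $r$ lies in no ironed interval. It also requires the payment in $M^{(r_0)}$ for type $r$ to be exactly $\max(1, \max_{j\ne i}v_j)$; this uses that the reserve is $r^* = 1$ and that $\phi_H(1) = 0 \ge 0$. If the identity route proves fiddly, the fallback is to compute $\e{R^{(r_0)}\ind{\mathcal E_1}}$ directly as the expected second-highest of $\max(1,\cdot)$ over the profile truncated below $r$. This is a routine integral against the density $1/v^2$, and it again yields $n - O(n^2 \ln r / r)$.
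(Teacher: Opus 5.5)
Your proof is correct, and it takes a different route from the paper's.

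The paper places $k$ shills at $r_j=\alpha^j$ and aborts each one that wins. It lower-bounds the net revenue by the highest floor below $V=\max_i v_i$ minus the penalties, discarding every payment set by a competing real bidder. The expectation then telescopes into marginal gains $\Delta_j \ge n(1-1/\alpha)-\pen-n(n-1)/(2\alpha^j)$, and enough of these are stacked to exceed $n$, with $H_0=\alpha^k$.

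You instead evaluate the case $C_0$ with $k=1$ essentially exactly. The floor term is about $n$ by \Cref{lem:revenue_E}. The abort-event term is also about $n$, whereas the paper's bound replaces it by the reserve $1$. This is because the restarted Myerson auction still collects the second-highest real bid on $\{V<r\}$, an event of probability close to $1$.

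A single shill therefore already earns about $2n-\pen$. This matches the upper bound $\OPT(\bm F)+\bigl(\sum_i \rev(F_i)-\pen\bigr)\,\pr{\mathcal{E}_1}$ from the proof of \Cref{lem:unified_cases} up to vanishing terms. So your construction shows that the Information Leakage Lemma is tight for a single abort, not only in aggregate.

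The paper's route buys two things. It is more elementary: it uses only floors and a tail bound on $V$, with no appeal to Myerson's identity or to order-statistic estimates. It also proves more than the statement needs, since its deviation's revenue grows without bound in $k$ (taking $H\ge\alpha^k$). The cost is many shills and a larger $H_0$.

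One small slip: for $n=1$ the competitor set is empty and $\max(1,\cdot)=1$, so the boundary term is $1-1/r$ rather than $1$. Your bound should read $n(1-1/r)^{n-1}\bigl(1-(1+(n-1)(1+\ln r))/r\bigr)$, which leaves your asymptotics intact.
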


The deviation places shill bids at exponentially growing values $r_j = \alpha^j$ for a constant $\alpha > 1$ fixed in the proof.
Each additional shill $r_j$ ``covers'' the interval $[r_{j-1}, r_j)$: conditional on the highest real valuation $V$ falling in this interval, the shill at $r_j$ wins (incurring penalty $\pen$), but the previous shill $r_{j-1}$ sets the price, extracting revenue $r_{j-1}$.
Below $H$, the probability mass of this interval decays at rate $1/r_{j-1}$, balancing the price $r_{j-1}$ to yield a constant expected gain, whereas the penalty cost is constant.
If the gain exceeds the cost (i.e., $\pen < n$), the auctioneer can extract revenue that grows without bound in the number of shills.

\submission{
	We defer the full proof to the full version.
}
\arxiv{
	\begin{proof}
		The proof has three steps: a tail bound for the highest real valuation, a decomposition of the auctioneer's profit into one marginal gain per shill, and the geometric choice of shill bids that makes every marginal gain from some index on positive.

		\paragraph*{Step 1: Tail bound for the order statistic.}
		Let $V := \max_{i \in N} v_i$ denote the highest real valuation.
		The valuations are \iid with $\pr{v_i \ge y} = 1/y$ for $y \le H$, so the CDF of $V$ satisfies $G(y) = (1 - 1/y)^n$ for $1 \le y \le H$.
		Since $(1 - x)^n \le 1 - nx + \binom{n}{2} x^2$ for $x \in [0, 1]$, the tail of $V$ satisfies, for $1 \le y \le H$,
		\begin{equation}\label{eq:tail}
			1 - G(y) = 1 - \left(1 - \frac{1}{y}\right)^n \ge \frac{n}{y} - \frac{n(n-1)}{2y^2}.
		\end{equation}

		\paragraph*{Step 2: Marginal profit analysis.}
		Consider shill bids $r_1 < \cdots < r_k$ with $1 < r_1$ and $r_k \le H$, all registered with the bidders' own distribution $F_H$, whose virtual value is $0$ on $[1, H]$ and which is regular, so that ties are settled on the raw bid; and put $r_0 := 1$ for the reserve price, which is not a shill and carries no penalty.
		The auctioneer aborts every shill that wins and reveals every shill that is outbid (\Cref{lem:reveal_outbid}), so the real winner pays at least the highest revealed floor, and each shill above $V$ wins in turn and forfeits $\pen$.
		The auctioneer's net revenue is therefore at least \[ \Pi = \max\{r_j : 0 \le j \le k,\ r_j \le V\} - \pen \cdot |\{j \ge 1 : r_j > V\}|, \] with equality when a single real bidder participates.
		We decompose the expected profit into marginal contributions.
		For each $j \ge 1$, define the $j$-th marginal gain as \[ \Delta_j := (r_j - r_{j-1}) \cdot \pr{V \ge r_j} - \pen \cdot \pr{V < r_j}.
		\]
		Since $\max\{r_j : r_j \le V\} = r_0 + \sum_{j=1}^{k} (r_j - r_{j-1}) \ind{V \ge r_j}$ and $|\{j \ge 1 : r_j > V\}| = \sum_{j=1}^{k} \ind{V < r_j}$, taking expectations gives $\e{\Pi} = 1 + \sum_{j=1}^k \Delta_j$.
		It therefore suffices to show that $\sum_{j=1}^k \Delta_j$ can be made arbitrarily large.

		\paragraph*{Step 3: Geometric construction.}
		Let $r_j = \alpha^j$ for some $\alpha > 1$ to be determined.
		Then $r_j - r_{j-1} = \alpha^j(1 - 1/\alpha)$, and applying the tail bound~\eqref{eq:tail} with $y = r_j$ and bounding $\pr{V < r_j}$ by $1$,
		\begin{align*}
			\Delta_j & \ge \alpha^j \left(1 - \frac{1}{\alpha}\right) \left(\frac{n}{\alpha^j} - \frac{n(n-1)}{2\alpha^{2j}}\right) - \pen \\
			         & \ge n\left(1 - \frac{1}{\alpha}\right) - \pen - \frac{n(n-1)}{2\alpha^{j}},
		\end{align*}
		where the second line uses $1 - 1/\alpha \le 1$.
		For the leading term to be strictly positive, we require \[ n\left(1 - \frac{1}{\alpha}\right) > \pen \quad \Longleftrightarrow \quad \alpha > \frac{n}{n - \pen}.
		\]
		This constraint admits a solution $\alpha > 1$ precisely when $\pen < n$.
		Fix such an $\alpha$ and put $\epsilon := n(1 - 1/\alpha) - \pen > 0$ and $J := \max\{1, \lceil \log_\alpha (n(n-1)/\epsilon) \rceil\}$; then $n(n-1)/(2\alpha^j) \le \epsilon/2$ and so $\Delta_j \ge \epsilon/2$ for all $j \ge J$, and $\Delta_j \ge -\pen$ for every $j$.
		Thus \[ \e{\Pi} = 1 + \sum_{j=1}^k \Delta_j \ge 1 - (J-1)\pen + (k - J + 1)\frac{\epsilon}{2}, \] which exceeds $n$ once $k \ge J + 2(n + (J-1)\pen)/\epsilon$.
		Fix such a $k$ and put $H_0 := \alpha^k$.
		For every $H \ge H_0$ all shills lie in $[1, H]$, so the tail bound~\eqref{eq:tail} applies at each $r_j$ and $\e{\Pi} > n \ge \OPT(\bm F)$, the honest revenue.
		This proves that the deviation is profitable.
	\end{proof}
}

\section{Conclusion}\label{sec:conclusion}

We introduced the Sequential Revelation Auction (SRA), a protocol that achieves credibility for any product distribution with vanishing revenue tails by combining a minimal MPC sub-computation---the Winner Determination Gadget---with sequential revelation and economic penalties.
The gadget reveals only the winner identity, never payment information, which prevents the adversary from exploiting the ``free option'' inherent in monolithic MPC approaches.
We proved that a penalty of $\pen \geq \sum_{i=1}^n \rev(F_i)$ is sufficient for credibility across all product distributions, and that it is tight: for equal-revenue distributions truncated at sufficiently large $H$, whose optimal revenue approaches the threshold, every smaller penalty admits a profitable deviation.

\paragraph*{Sharper penalties and richer adversaries.}
Our lower bound shows $\pen \ge \sum_{i=1}^n \rev(F_i)$ is necessary \emph{for the SRA}; whether another design---richer gadgets, or a different revelation order---is credible at a smaller penalty is open, as is tightening the constant in $\OPT(\bm F) \le \pen \le n \cdot \OPT(\bm F)$.
Our round bound holds on the equilibrium path only: the worst case grows linearly in the number of aborts, so relative to \citet{FerreiraEssaidi} the SRA gains a deterministic on-path bound and non-identical distributions without improving the worst case.
We also assume independent private values and a static adversary; correlation breaks both Myerson's benchmark and the leakage bound, and the repeated-auction setting, where the auctioneer's information advantage compounds across rounds, is outside the single-shot notion analyzed here.

\paragraph*{Practical implementation.}
Constant-round MPC with identifiable abort from MK-FHE or iO is currently impractical, so the SRA is at present a feasibility result rather than a deployable protocol.
The decomposition principle, however, is also a concrete efficiency argument: the gadget evaluates a single $\arg\max$ over $n$ committed values, a circuit of size $O(n\ell)$ for $\ell$-bit bids, rather than a full Myerson circuit with ironing and threshold-payment computation.
This suggests efficient instantiations may be within reach of modern MPC constructions such as garbled circuits or secret sharing over small fields, which is the most promising route to a practical credible auction.

\begin{standalonebib}
	\bibliographystyle{ACM-Reference-Format}
	\bibliography{mybib}
\end{standalonebib}

\bibliographystyle{ACM-Reference-Format}
\bibliography{mybib}


\begin{thebibliography}{25}


\ifx \showCODEN    \undefined \def \showCODEN     #1{\unskip}     \fi
\ifx \showISBNx    \undefined \def \showISBNx     #1{\unskip}     \fi
\ifx \showISBNxiii \undefined \def \showISBNxiii  #1{\unskip}     \fi
\ifx \showISSN     \undefined \def \showISSN      #1{\unskip}     \fi
\ifx \showLCCN     \undefined \def \showLCCN      #1{\unskip}     \fi
\ifx \shownote     \undefined \def \shownote      #1{#1}          \fi
\ifx \showarticletitle \undefined \def \showarticletitle #1{#1}   \fi
\ifx \showURL      \undefined \def \showURL       {\relax}        \fi
\providecommand\bibfield[2]{#2}
\providecommand\bibinfo[2]{#2}
\providecommand\natexlab[1]{#1}
\providecommand\showeprint[2][]{arXiv:#2}

\bibitem[Akbarpour and Li(2020)]%
        {AkbarpourLi}
\bibfield{author}{\bibinfo{person}{Mohammad Akbarpour} {and}
  \bibinfo{person}{Shengwu Li}.} \bibinfo{year}{2020}\natexlab{}.
\newblock \showarticletitle{Credible Auctions: A Trilemma}.
\newblock \bibinfo{journal}{\emph{Econometrica}} \bibinfo{volume}{88},
  \bibinfo{number}{2} (\bibinfo{year}{2020}), \bibinfo{pages}{425--467}.
\newblock
\href{https://doi.org/10.3982/ECTA15925}{doi:\nolinkurl{10.3982/ECTA15925}}


\bibitem[Baum et~al\mbox{.}(2020)]%
        {BaumOrsini20}
\bibfield{author}{\bibinfo{person}{Carsten Baum}, \bibinfo{person}{Emmanuela
  Orsini}, \bibinfo{person}{Peter Scholl}, {and} \bibinfo{person}{Eduardo
  Soria-Vazquez}.} \bibinfo{year}{2020}\natexlab{}.
\newblock \showarticletitle{Efficient Constant-Round MPC with Identifiable
  Abort and Public Verifiability}. In \bibinfo{booktitle}{\emph{Advances in
  Cryptology -- CRYPTO 2020}}. \bibinfo{publisher}{Springer},
  \bibinfo{address}{Cham}, \bibinfo{pages}{549--579}.
\newblock
\href{https://doi.org/10.1007/978-3-030-56880-1_20}{doi:\nolinkurl{10.1007/978-3-030-56880-1_20}}


\bibitem[Blum(1981)]%
        {Blum81}
\bibfield{author}{\bibinfo{person}{Manuel Blum}.}
  \bibinfo{year}{1981}\natexlab{}.
\newblock \showarticletitle{Coin flipping by telephone}. In
  \bibinfo{booktitle}{\emph{COMPCON Spring}}. \bibinfo{publisher}{IEEE},
  \bibinfo{address}{San Francisco, CA}, \bibinfo{pages}{133--137}.
\newblock


\bibitem[Bogetoft et~al\mbox{.}(2009)]%
        {Bogetoft09}
\bibfield{author}{\bibinfo{person}{Peter Bogetoft}, \bibinfo{person}{Dan~Lund
  Christensen}, \bibinfo{person}{Ivan Damg{\aa}rd}, \bibinfo{person}{Martin
  Geisler}, \bibinfo{person}{Thomas Jakobsen}, \bibinfo{person}{Mikkel
  Kr{\o}igaard}, \bibinfo{person}{Janus~Dam Nielsen},
  \bibinfo{person}{Jesper~Buus Nielsen}, \bibinfo{person}{Kurt Nielsen},
  \bibinfo{person}{Jakob Pagter}, {et~al\mbox{.}}}
  \bibinfo{year}{2009}\natexlab{}.
\newblock \showarticletitle{Secure multiparty computation goes live}. In
  \bibinfo{booktitle}{\emph{Financial Cryptography and Data Security (FC)}}.
  \bibinfo{publisher}{Springer}, \bibinfo{address}{Berlin, Heidelberg},
  \bibinfo{pages}{325--343}.
\newblock
\href{https://doi.org/10.1007/978-3-642-03549-4_20}{doi:\nolinkurl{10.1007/978-3-642-03549-4_20}}


\bibitem[Brandt(2006)]%
        {Brandt06}
\bibfield{author}{\bibinfo{person}{Felix Brandt}.}
  \bibinfo{year}{2006}\natexlab{}.
\newblock \showarticletitle{How to obtain full privacy in auctions}.
\newblock \bibinfo{journal}{\emph{International Journal of Information
  Security}} \bibinfo{volume}{5}, \bibinfo{number}{4} (\bibinfo{year}{2006}),
  \bibinfo{pages}{201--216}.
\newblock
\href{https://doi.org/10.1007/s10207-006-0001-y}{doi:\nolinkurl{10.1007/s10207-006-0001-y}}


\bibitem[Chitra et~al\mbox{.}(2024)]%
        {ChitraFerreira}
\bibfield{author}{\bibinfo{person}{Tarun Chitra}, \bibinfo{person}{Matheus
  V.~X. Ferreira}, {and} \bibinfo{person}{Kshitij Kulkarni}.}
  \bibinfo{year}{2024}\natexlab{}.
\newblock \showarticletitle{Credible, Optimal Auctions via Public Broadcast}.
  In \bibinfo{booktitle}{\emph{6th Conference on Advances in Financial
  Technologies (AFT 2024)}} \emph{(\bibinfo{series}{Leibniz International
  Proceedings in Informatics (LIPIcs)}, Vol.~\bibinfo{volume}{316})}.
  \bibinfo{publisher}{Schloss Dagstuhl -- Leibniz-Zentrum f{\"u}r Informatik},
  \bibinfo{address}{Dagstuhl, Germany}, \bibinfo{pages}{19:1--19:16}.
\newblock
\href{https://doi.org/10.4230/LIPIcs.AFT.2024.19}{doi:\nolinkurl{10.4230/LIPIcs.AFT.2024.19}}


\bibitem[Chung et~al\mbox{.}(2024)]%
        {ChungRoughgardenShi24}
\bibfield{author}{\bibinfo{person}{Hao Chung}, \bibinfo{person}{Tim
  Roughgarden}, {and} \bibinfo{person}{Elaine Shi}.}
  \bibinfo{year}{2024}\natexlab{}.
\newblock \showarticletitle{Collusion-Resilience in Transaction Fee Mechanism
  Design}. In \bibinfo{booktitle}{\emph{Proceedings of the 25th ACM Conference
  on Economics and Computation (EC)}}. \bibinfo{publisher}{ACM},
  \bibinfo{address}{New York, NY, USA}, \bibinfo{pages}{1045--1073}.
\newblock
\href{https://doi.org/10.1145/3670865.3673550}{doi:\nolinkurl{10.1145/3670865.3673550}}


\bibitem[Chung and Shi(2023)]%
        {ChungShi23}
\bibfield{author}{\bibinfo{person}{Hao Chung} {and} \bibinfo{person}{Elaine
  Shi}.} \bibinfo{year}{2023}\natexlab{}.
\newblock \showarticletitle{Foundations of Transaction Fee Mechanism Design}.
  In \bibinfo{booktitle}{\emph{Proceedings of the 2023 Annual ACM-SIAM
  Symposium on Discrete Algorithms (SODA)}}. \bibinfo{publisher}{SIAM},
  \bibinfo{pages}{3856--3899}.
\newblock
\href{https://doi.org/10.1137/1.9781611977554.ch150}{doi:\nolinkurl{10.1137/1.9781611977554.ch150}}


\bibitem[Ciampi et~al\mbox{.}(2022)]%
        {CiampiRaviSiniscalchiWaldner22}
\bibfield{author}{\bibinfo{person}{Michele Ciampi}, \bibinfo{person}{Divya
  Ravi}, \bibinfo{person}{Luisa Siniscalchi}, {and} \bibinfo{person}{Hendrik
  Waldner}.} \bibinfo{year}{2022}\natexlab{}.
\newblock \showarticletitle{Round-Optimal Multi-Party Computation with
  Identifiable Abort}. In \bibinfo{booktitle}{\emph{Advances in Cryptology --
  EUROCRYPT 2022}}. \bibinfo{publisher}{Springer}, \bibinfo{address}{Cham},
  \bibinfo{pages}{335--364}.
\newblock
\href{https://doi.org/10.1007/978-3-031-06944-4_12}{doi:\nolinkurl{10.1007/978-3-031-06944-4_12}}


\bibitem[Cleve(1986)]%
        {Cleve86}
\bibfield{author}{\bibinfo{person}{Richard Cleve}.}
  \bibinfo{year}{1986}\natexlab{}.
\newblock \showarticletitle{Limits on the security of coin flips when half the
  processors are faulty}. In \bibinfo{booktitle}{\emph{Proceedings of the
  Eighteenth Annual ACM Symposium on Theory of Computing (STOC)}}.
  \bibinfo{publisher}{ACM}, \bibinfo{address}{New York, NY, USA},
  \bibinfo{pages}{364--369}.
\newblock
\href{https://doi.org/10.1145/12130.12168}{doi:\nolinkurl{10.1145/12130.12168}}


\bibitem[Cohen et~al\mbox{.}(2024)]%
        {Cohen23}
\bibfield{author}{\bibinfo{person}{Ran Cohen}, \bibinfo{person}{Jack Doerner},
  \bibinfo{person}{Yashvanth Kondi}, {and} \bibinfo{person}{Abhi Shelat}.}
  \bibinfo{year}{2024}\natexlab{}.
\newblock \showarticletitle{Secure Multiparty Computation with Identifiable
  Abort via Vindicating Release}. In \bibinfo{booktitle}{\emph{Advances in
  Cryptology -- CRYPTO 2024}}. \bibinfo{publisher}{Springer},
  \bibinfo{address}{Cham}, \bibinfo{pages}{36--73}.
\newblock
\href{https://doi.org/10.1007/978-3-031-68397-8_2}{doi:\nolinkurl{10.1007/978-3-031-68397-8_2}}


\bibitem[Essaidi et~al\mbox{.}(2022)]%
        {FerreiraEssaidi}
\bibfield{author}{\bibinfo{person}{Meryem Essaidi}, \bibinfo{person}{Matheus
  V.~X. Ferreira}, {and} \bibinfo{person}{S.~Matthew Weinberg}.}
  \bibinfo{year}{2022}\natexlab{}.
\newblock \showarticletitle{Credible, Strategyproof, Optimal, and Bounded
  Expected-Round Single-Item Auctions for All Distributions}. In
  \bibinfo{booktitle}{\emph{Proceedings of the 13th Innovations in Theoretical
  Computer Science Conference (ITCS)}} \emph{(\bibinfo{series}{Leibniz
  International Proceedings in Informatics (LIPIcs)},
  Vol.~\bibinfo{volume}{215})}. \bibinfo{publisher}{Schloss Dagstuhl --
  Leibniz-Zentrum f{\"u}r Informatik}, \bibinfo{address}{Dagstuhl, Germany},
  \bibinfo{pages}{66:1--66:19}.
\newblock
\href{https://doi.org/10.4230/LIPIcs.ITCS.2022.66}{doi:\nolinkurl{10.4230/LIPIcs.ITCS.2022.66}}


\bibitem[Ferreira and Parkes(2023)]%
        {FerreiraParkes23}
\bibfield{author}{\bibinfo{person}{Matheus V.~X. Ferreira} {and}
  \bibinfo{person}{David~C. Parkes}.} \bibinfo{year}{2023}\natexlab{}.
\newblock \showarticletitle{Credible Decentralized Exchange Design via
  Verifiable Sequencing Rules}. In \bibinfo{booktitle}{\emph{Proceedings of the
  55th Annual ACM Symposium on Theory of Computing (STOC)}}.
  \bibinfo{publisher}{ACM}, \bibinfo{address}{New York, NY, USA},
  \bibinfo{pages}{723--736}.
\newblock
\href{https://doi.org/10.1145/3564246.3585233}{doi:\nolinkurl{10.1145/3564246.3585233}}


\bibitem[Ferreira and Weinberg(2020)]%
        {FerreiraWeinberg}
\bibfield{author}{\bibinfo{person}{Matheus V.~X. Ferreira} {and}
  \bibinfo{person}{S.~Matthew Weinberg}.} \bibinfo{year}{2020}\natexlab{}.
\newblock \showarticletitle{Credible, Truthful, and Two-Round (Optimal)
  Auctions via Cryptographic Commitments}. In
  \bibinfo{booktitle}{\emph{Proceedings of the 21st ACM Conference on Economics
  and Computation (EC)}}. \bibinfo{publisher}{ACM}, \bibinfo{address}{New York,
  NY, USA}, \bibinfo{pages}{683--702}.
\newblock
\href{https://doi.org/10.1145/3391403.3399495}{doi:\nolinkurl{10.1145/3391403.3399495}}


\bibitem[Ganesh and Zhang(2025)]%
        {GaneshZhang}
\bibfield{author}{\bibinfo{person}{Aadityan Ganesh} {and}
  \bibinfo{person}{Qianfan Zhang}.} \bibinfo{year}{2025}\natexlab{}.
\newblock \showarticletitle{Truthful, Credible, and Optimal Auctions for
  Matroids via Blockchains and Commitments}. In
  \bibinfo{booktitle}{\emph{Proceedings of the 26th ACM Conference on Economics
  and Computation (EC)}}. \bibinfo{publisher}{ACM}, \bibinfo{address}{New York,
  NY, USA}, \bibinfo{pages}{923--943}.
\newblock
\href{https://doi.org/10.1145/3736252.3742652}{doi:\nolinkurl{10.1145/3736252.3742652}}


\bibitem[Garg et~al\mbox{.}(2014)]%
        {GargGentryHaleviRaykova14}
\bibfield{author}{\bibinfo{person}{Sanjam Garg}, \bibinfo{person}{Craig
  Gentry}, \bibinfo{person}{Shai Halevi}, {and} \bibinfo{person}{Mariana
  Raykova}.} \bibinfo{year}{2014}\natexlab{}.
\newblock \showarticletitle{Two-Round Secure MPC from Indistinguishability
  Obfuscation}. In \bibinfo{booktitle}{\emph{Theory of Cryptography Conference
  (TCC)}}. \bibinfo{publisher}{Springer}, \bibinfo{address}{Berlin,
  Heidelberg}, \bibinfo{pages}{74--94}.
\newblock
\href{https://doi.org/10.1007/978-3-642-54242-8_4}{doi:\nolinkurl{10.1007/978-3-642-54242-8_4}}


\bibitem[Goldreich(2004)]%
        {Goldreich04}
\bibfield{author}{\bibinfo{person}{Oded Goldreich}.}
  \bibinfo{year}{2004}\natexlab{}.
\newblock \bibinfo{booktitle}{\emph{Foundations of Cryptography: Volume 2,
  Basic Applications}}. Vol.~\bibinfo{volume}{2}.
\newblock \bibinfo{publisher}{Cambridge University Press},
  \bibinfo{address}{Cambridge}.
\newblock
\href{https://doi.org/10.1017/CBO9780511721656}{doi:\nolinkurl{10.1017/CBO9780511721656}}


\bibitem[Ishai et~al\mbox{.}(2014)]%
        {IOZ14}
\bibfield{author}{\bibinfo{person}{Yuval Ishai}, \bibinfo{person}{Rafail
  Ostrovsky}, {and} \bibinfo{person}{Vassilis Zikas}.}
  \bibinfo{year}{2014}\natexlab{}.
\newblock \showarticletitle{Secure Multi-Party Computation with Identifiable
  Abort}. In \bibinfo{booktitle}{\emph{Advances in Cryptology -- CRYPTO 2014}}.
  Springer, \bibinfo{publisher}{Springer}, \bibinfo{address}{Berlin,
  Heidelberg}, \bibinfo{pages}{369--386}.
\newblock
\href{https://doi.org/10.1007/978-3-662-44381-1_21}{doi:\nolinkurl{10.1007/978-3-662-44381-1_21}}


\bibitem[Kuhn(1953)]%
        {Kuhn53}
\bibfield{author}{\bibinfo{person}{Harold~W. Kuhn}.}
  \bibinfo{year}{1953}\natexlab{}.
\newblock \showarticletitle{Extensive Games and the Problem of Information}.
\newblock In \bibinfo{booktitle}{\emph{Contributions to the Theory of Games,
  Volume II}}. \bibinfo{series}{Annals of Mathematics Studies},
  Vol.~\bibinfo{volume}{28}. \bibinfo{publisher}{Princeton University Press},
  \bibinfo{address}{Princeton, NJ}, \bibinfo{pages}{193--216}.
\newblock
\href{https://doi.org/10.1515/9781400881970-012}{doi:\nolinkurl{10.1515/9781400881970-012}}


\bibitem[Mukherjee and Wichs(2016)]%
        {MukherjeeWichs16}
\bibfield{author}{\bibinfo{person}{Pratyay Mukherjee} {and}
  \bibinfo{person}{Daniel Wichs}.} \bibinfo{year}{2016}\natexlab{}.
\newblock \showarticletitle{Two-Round Multiparty Computation via Multi-Key
  FHE}. In \bibinfo{booktitle}{\emph{Advances in Cryptology -- EUROCRYPT
  2016}}. \bibinfo{publisher}{Springer}, \bibinfo{address}{Berlin, Heidelberg},
  \bibinfo{pages}{735--763}.
\newblock
\href{https://doi.org/10.1007/978-3-662-49896-5_26}{doi:\nolinkurl{10.1007/978-3-662-49896-5_26}}


\bibitem[Myerson(1981)]%
        {Myerson81}
\bibfield{author}{\bibinfo{person}{Roger~B. Myerson}.}
  \bibinfo{year}{1981}\natexlab{}.
\newblock \showarticletitle{Optimal Auction Design}.
\newblock \bibinfo{journal}{\emph{Mathematics of Operations Research}}
  \bibinfo{volume}{6}, \bibinfo{number}{1} (\bibinfo{year}{1981}),
  \bibinfo{pages}{58--73}.
\newblock
\href{https://doi.org/10.1287/moor.6.1.58}{doi:\nolinkurl{10.1287/moor.6.1.58}}


\bibitem[Naor et~al\mbox{.}(1999)]%
        {NaorPinkasSumner99}
\bibfield{author}{\bibinfo{person}{Moni Naor}, \bibinfo{person}{Benny Pinkas},
  {and} \bibinfo{person}{Reuban Sumner}.} \bibinfo{year}{1999}\natexlab{}.
\newblock \showarticletitle{Privacy Preserving Auctions and Mechanism Design}.
  In \bibinfo{booktitle}{\emph{Proceedings of the 1st ACM Conference on
  Electronic Commerce (EC)}}. \bibinfo{publisher}{ACM},
  \bibinfo{address}{Denver, CO, USA}, \bibinfo{pages}{129--139}.
\newblock
\href{https://doi.org/10.1145/336992.337028}{doi:\nolinkurl{10.1145/336992.337028}}


\bibitem[Roughgarden(2024)]%
        {Roughgarden24}
\bibfield{author}{\bibinfo{person}{Tim Roughgarden}.}
  \bibinfo{year}{2024}\natexlab{}.
\newblock \showarticletitle{Transaction Fee Mechanism Design}.
\newblock \bibinfo{journal}{\emph{J. ACM}} \bibinfo{volume}{71},
  \bibinfo{number}{4} (\bibinfo{year}{2024}), \bibinfo{pages}{1--25}.
\newblock
\href{https://doi.org/10.1145/3674143}{doi:\nolinkurl{10.1145/3674143}}


\bibitem[Shi et~al\mbox{.}(2023)]%
        {ShiChungWu23}
\bibfield{author}{\bibinfo{person}{Elaine Shi}, \bibinfo{person}{Hao Chung},
  {and} \bibinfo{person}{Ke Wu}.} \bibinfo{year}{2023}\natexlab{}.
\newblock \showarticletitle{What Can Cryptography Do for Decentralized
  Mechanism Design?}. In \bibinfo{booktitle}{\emph{Proceedings of the 14th
  Innovations in Theoretical Computer Science Conference (ITCS)}}
  \emph{(\bibinfo{series}{Leibniz International Proceedings in Informatics
  (LIPIcs)}, Vol.~\bibinfo{volume}{251})}. \bibinfo{publisher}{Schloss Dagstuhl
  -- Leibniz-Zentrum f{\"u}r Informatik}, \bibinfo{address}{Dagstuhl, Germany},
  \bibinfo{pages}{97:1--97:22}.
\newblock
\href{https://doi.org/10.4230/LIPIcs.ITCS.2023.97}{doi:\nolinkurl{10.4230/LIPIcs.ITCS.2023.97}}


\bibitem[Yao(1982)]%
        {Yao82}
\bibfield{author}{\bibinfo{person}{Andrew~C Yao}.}
  \bibinfo{year}{1982}\natexlab{}.
\newblock \showarticletitle{Protocols for secure computations}. In
  \bibinfo{booktitle}{\emph{23rd Annual Symposium on Foundations of Computer
  Science (FOCS)}}. \bibinfo{publisher}{IEEE}, \bibinfo{address}{Chicago, IL},
  \bibinfo{pages}{160--164}.
\newblock
\href{https://doi.org/10.1109/SFCS.1982.38}{doi:\nolinkurl{10.1109/SFCS.1982.38}}


\end{thebibliography}

\end{document}